\newcommand\Autoref[1]{\@first@ref#1,@}
\def\@throw@dot#1.#2@{#1}
\def\@set@refname#1{%
    \edef\@tmp{\getrefbykeydefault{#1}{anchor}{}}%
    \def\@refname{\@nameuse{\expandafter\@throw@dot\@tmp.@autorefname}s}%
}
\def\@first@ref#1,#2{%
  \ifx#2@\autoref{#1}\let\@secondref\@gobble%
  \else%
    \@set@refname{#1}%
    \@refname~\ref{#1}%
    \let\@secondref\@second@ref%
  \fi%
  \@secondref#2%
}
\def\@second@ref#1,#2{%
  \ifx#2@ and~\ref{#1}\let\@nextref\@gobble%
  \else, \ref{#1}%
    \let\@nextref\@next@ref%
  \fi%
  \@nextref#2%
}
\def\@next@ref#1,#2{%
   \ifx#2@, and~\ref{#1}\let\@nextref\@gobble%
   \else, \ref{#1}%
   \fi%
   \@nextref#2%
}
\def\equationautorefname~#1\null{%
  Equation~(#1)\null
}
\numberwithin{equation}{section}
\theoremstyle{plain}
\newtheorem{theorem}{Theorem}[section]
\newaliascnt{lemma}{theorem}
\newtheorem{lemma}[lemma]{Lemma}
\newaliascnt{corollary}{theorem}
\theoremstyle{definition}
\newaliascnt{assumption}{theorem}
\newtheorem{assumption}[assumption]{Assumption}
\newaliascnt{definition}{theorem}
\newtheorem{definition}[definition]{Definition}
\newaliascnt{remark}{theorem}
\newtheorem*{remark*}{Remark}
\newaliascnt{example}{theorem}
\newaliascnt{algorithm}{theorem}
\newtheorem{algorithm}[algorithm]{Algorithm}
\numberwithin{figure}{section}
\numberwithin{table}{section}
\begin{document}

\title{Consistent recalibration of yield curve models}
\date{September 2016}

\author{Philipp Harms}
\address{Institute of Mathematics, Albert-Ludwigs University Freiburg}
\email{philipp.harms@stochastik.uni-freiburg.de}
\thanks{Supported in part by SNF grant 149879}
\thanks{We gratefully acknowledge support by ETH Foundation}
\author{David Stefanovits}
\address{Department of Mathematics\\ ETH Z\"urich}
\email{david.stefanovits@math.ethz.ch}
\author{Josef Teichmann}
\email{josef.teichmann@math.ethz.ch}
\author{Mario V.~W\"uthrich}
\email{mario.wuethrich@math.ethz.ch}
\subjclass[2010]{91G30, 
60J25, 
60J60 
}

\begin{abstract}
The analytical tractability of affine (short rate) models, such as the Vasi\v cek and the Cox-Ingersoll-Ross models, has made them a popular choice for modelling the dynamics of interest rates. However, in order to account properly for the dynamics of real data, these models need to exhibit time-dependent or even stochastic parameters. This in turn breaks their tractability, and modelling and simulating becomes an arduous task. We introduce a new class of Heath-Jarrow-Morton (HJM) models that both fit the dynamics of real market data and remain tractable. We call these models consistent recalibration (CRC) models. These CRC models appear as limits of concatenations of forward rate increments, each belonging to a Hull-White extended affine factor model with possibly different parameters. That is, we construct HJM models from ``tangent'' affine models. We develop a theory for a continuous path version of such models and discuss their numerical implementations within the Vasi\v cek and Cox-Ingersoll-Ross frameworks.
\end{abstract}

\maketitle

\section{Introduction}\label{sec:introduction}

\subsection{Principles of yield curve modelling}

Modelling the stochastic evolution of yield curves is an important task in risk management, forecasting, decision making, pricing and hedging. We emphasise here three principles of yield curve modelling (or any other traded instrument in finance): we certainly require that all models for traded assets' prices are free of arbitrage; therefore we do not state this as a principal requirement.

\begin{itemize}
\item {\bf Robust calibration:}  the model is selected simultaneously from time series and prevailing market prices, as explained in \cite{cuchiero2013fourier}. Model parameters which are invariant under equivalent measure changes should be \emph{estimated} by a statistical procedure from time series data. The remaining parameters are \emph{calibrated} by solving an inverse problem with respect to the prevailing market prices. All model parameters should be constant during the life time of the model; only state variables may change. 
\item {\bf Consistency:} an interest rate model is called \emph{consistent} if the stochastic process of yield curves does not leave a pre-specified set $\mathcal I$ of possible market observables (in \cite{filipovic2001consistency} the set $\mathcal I$ is assumed to be a finite dimensional sub-manifold of curves corresponding to a curve fitting method). Here, we add the following requirement: the yield curve process should be able to reach any neighbourhood of any  yield curve in $\mathcal I$ with positive probability because any newly arriving market configuration is a possible model state. Consequently, the model can be recalibrated to a new market configuration without losing consistency to the model choice with old parameters; we say that the model satisfies the \emph{consistent recalibration} property.
\item {\bf Analytic tractability:} relevant quantities of a model can be calculated quickly and accurately. In particular, one should be able to simulate state variable increments efficiently. This can be a delicate problem in the presence of boundary conditions.
\end{itemize}

We briefly comment on some of these principles.

\begin{itemize}
\item By a \emph{model} for the term structure of interest rates, we understand a fully specified stochastic process taking values in the pre-specified set of yield curves $\mathcal I$. We shall always consider a parametrised class of models consisting of one fully specified model for each initial state in $\mathcal I$ and each parameter value.

\item In practice, interest rate models are \emph{recalibrated} on a regular basis (e.g.\ daily) to market data. Suppose that the consistent recalibration property does not hold for today's model. Then tomorrow's market yield curve might lie outside of the set of possible realisations of today's model. If this happens, then tomorrow's recalibration necessarily implicates a rejection of today's model. On the other hand, no inconsistencies occur if recalibration is an update of the state variables of the model and does not involve a change of model parameters.

\item 
The \emph{robust calibration principle} separates the easier task of estimating volatilities from the more difficult task of estimating drifts. Moreover, it tells us exactly which parameters may be estimated from time series (namely those which are invariant under equivalent changes of measure). 
In \cite{harms2016consistent} we use our results to model and filter the market price of risk. 
\end{itemize}

\subsection{Consistent recalibration models} 

The goal of this work is to present a new model class for yield curve evolutions satisfying all three principal requirements with respect to sets $\mathcal I$ which are sufficiently large to be of practical use (think of open subsets of a Hilbert space of curves). Mathematically speaking, we look for yield curve models with full or large support, which are in addition analytically tractable. 

Often the full support property does not accord with analytic tractability beyond elliptic models, which are too restrictive in infinite dimension. We illustrate this with an example. Consider the Hull-White extended Cox-Ingersoll Ross (CIR) model. In this model the short rate is given by the SDE
\begin{equation*}
dr(t) = \big(\theta(t)+\beta r(t)\big) dt + \sqrt{\alpha r(t)} dW(t), 
\end{equation*}
where $\theta(t) \geq 0$ determines the time-dependent level of mean reversion, $\beta<0$ the speed of mean reversion, and $\alpha>0$ the level of volatility.

The model can be calibrated to any initial yield curve from a large subset $\mathcal I$ of curves by choosing an appropriate Hull-White extension $\theta$. For any fixed initial yield curve, the distribution of yield curves at some future $t>0$ is concentrated on a one-dimensional affine subspace of curves. Therefore, market observations are generally not in the support of the model, and the consistent recalibration property does not hold with respect to $\mathcal I$. Moreover, the low dimensionality of the model is apparent at the level of realised covariations of yields for different times to maturity: the matrix of covariations has rank one, which is in stark contrast to observations from the market (see \autoref{fig:num_market_covariation}). Finally, calibrated model parameters vary significantly over time as shown in \Autoref{fig:num_sigma_cir,fig:num_beta_cir}, which contradicts the requirement of robust calibration. 

As a remedy, one could make some model parameters stochastic and include them as state variables. For example, one could make $\alpha=\alpha_y$ and $\beta=\beta_y$ depend on a parameter $y$ and write dynamics of the form 
\begin{align*}
dr(t) &= \big(\theta(t)+\beta_{Y(t)} r(t)\big) dt + \sqrt{\alpha_{Y(t)} r(t)} dW(t), 
\\
dY(t) &= \mu\big(Y(t)\big) dt + \sigma\big(Y(t)\big) d\widetilde W(t).
\end{align*}
Unfortunately, this usually breaks the analytic tractability of the model in the sense that zero-coupon bond prices cannot be calculated anymore analytically.

The key idea of \emph{consistent recalibration (CRC) models} is to lift the short rate model to a HJM model and to introduce stochastic parameters on that level. Let $h(t)$ denote the forward rate curve at time $t$ in Musiela parametrisation (i.e., as a function of time to maturity). Then CRC models are defined by the joint dynamics 
\begin{align*}
dh(t)&=\Big(\mathcal Ah(t)+\mu^{\mathrm{HJM}}_{Y(t)}\big(r(t)\big)\Big)dt+\sigma^{\mathrm{HJM}}_{Y(t)}\big(r(t)\big)dW(t),
\\
dY(t) &= \mu\big(Y(t)\big) dt + \sigma\big(Y(t)\big) d\widetilde W(t),
\end{align*}
where $r(t)=h(t)(0)$, $\mathcal A$ is the generator of the shift semigroup, and $\mu^{\mathrm{HJM}}_y,\sigma^{\mathrm{HJM}}_y$ are the HJM drift and volatility of the short rate model with constant parameter $y$. This model, and more generally the class of CRC models, has the following properties:
\begin{itemize}
\item It is a full-fledged HJM model providing the benefits of \emph{robust calibration}. Indeed, all model parameters can be estimated from realised covariations of yields rather than calibrated by solving high-dimensional inverse problems. Thus, the parameters are estimated from yield curve dynamics instead of calibrated to static yield curves, while an exact match to the current yield curve is guaranteed by the Hull-White extension $\theta$.  

\item In the language of term structure equations, see \cite{richter2014,carmona2012,kallsen2013}, CRC models are \emph{tangent} to affine factor models. This means that they are (limits of) concatenations of affine factor models. The  construction is illustrated in \autoref{fig:foliations}. The concatenated models are allowed to have distinct static parameters. Making the parameters stochastic (in an independent or dependent way) will lead generically towards \emph{consistent recalibration}, since the conditions for finite-dimensional realisations are not fulfilled anymore. The consistent recalibration property is also reflected in the much higher ranks of the covariation matrices of yields with different times to maturity. Indeed, our empirical analysis shows that they are closer to those observed in the market than in the corresponding models without CRC extension (see \autoref{fig:num_market_covariation}). 

\item Despite all this flexibility, the model remains \emph{analytically tractable}: zero-coupon bond prices are state variables, and state variable increments can be simulated efficiently because they look infinitesimally like Hull-White extended affine processes. This means that the Fourier transform of the infinitesimal increments is known and that all sampling techniques of affine processes apply. In particular, efficient high-order positivity-preserving simulation schemes for the CIR process such as \cite{alfonsi2010high} can be used. No similar schemes are available for general HJM equations with non-Lipschitz vector fields. In our numerical implementation, we achieve first-order convergence of the splitting scheme. We also give a theoretical proof of first-order convergence in the Vasi\v cek case.
\end{itemize}

These properties are important in risk management and in the current regulatory framework \cite{BIS}, where one needs tractable and  realistic (non-Gaussian) models of long-term returns on bond portfolios. Moreover, the CRC approach allows one to easily implement stress tests for risk management purposes by selecting a suitable model for the parameter process. First evidence of improved fits is provided in \cite{harms2016consistent}.

The principle behind CRC models applies also to the modelling of more general term structure dynamics. For example, it could be applied to multi-curve interest rate models and to models of the term structure of option prices \cite{richter2014}.

\subsection{Organisation of the paper}

In \autoref{sec:lit} we discuss relations to other yield curve modelling approaches. In \autoref{sec:hw}, we introduce Hull-White extended affine short rate models, which are the building blocks of CRC models introduced in \autoref{sec:crc}. In \Autoref{sec:va,sec:cir}, the one-factor Vasi\v cek and CIR case is developed in full detail. In \autoref{sec:num}, our numerical implementation and some empirical results are presented.

\section{Relations to other yield curve models}\label{sec:lit}

Several in part overlapping approaches to yield curve modelling have been developed. The models can be roughly categorised as factor, HJM, principal component analysis (PCA), and filtered historical simulation models. We briefly analyse these models with respect to our requirements and compare them to the new class of CRC models. 

\subsection{Factor models} Factor models are based on a factor process, which usually describes certain market factors, from which -- by means of basic principles -- the entire yield curve can be derived (see \cite{filipovic2009term} for an overview). Let $ X=(X(t))_{t\geq 0} $ be a factor Markov process acting on a finite dimensional state space and depending on a parameter vector $y$, and let $ B := g(X) $ be the bank account process, for some positive functional $g$. Then one obtains -- with respect to the pricing measure $\mathbb P$ -- the relation
\[
P(t,T)=\mathbb E \left[ \frac{B(t)}{B(T)} \middle| \mathcal{F}(t) \right] = G(t,T,X(t)), 
\]
for some function $G$ also depending on the parameter vector $y$. 

Market data arrive in the form of daily yield curves. By means of calibration the initial state $x_0$ and a parameter vector $y_0 $ are chosen to explain today's market data. By choosing the parameter vector rich enough, one receives good fits to today's market data. Apparently a recalibration at time $t=1$ can (and will) lead to another state $x_1$ and another parameter vector $y_1$. As states may vary stochastically, the change of $x_0$ to $x_1$ is in principle not a problem, but the change of parameter vector is. This means that one has to decide at time $t=1$ whether to continue with the model specified by $y_0$ or whether to switch to the model specified by $y_1$. This problem can be alleviated to some extent by using a combination of filtering and calibration techniques to stabilise the choice of $y$, as described in, e.g., \cite{gombani2005filtered,barone1998market}. Nevertheless, robust calibration remains an unresolved issue.

The consistent recalibration property does not hold unless the set $\mathcal I$ is very small. (If $\mathcal I$ is a sub-manifold, its dimension cannot be larger than that of the factor process.) However, on the positive side, factor models are often analytically tractable, for instance, within the affine class (see e.g.\ \cite{filipovic2009term,duffie2003affine}).

\subsection{HJM-models} \label{sec:intro:hjm}
Markovian HJM models are an extreme version of factor models: the yield curve itself is taken as state variable (possibly together with some hidden state variables). Calibration to daily arriving yield curves is now a matter of statistical estimation from the time series of market data. An appropriate parametrisation of instantaneous co-variance, jump structure, and drifts will lead to a statistical inference problem, an infinite dimensional one though. Hence, the paradigm of robust calibration, including the requirement of calibration through estimation, is fulfilled in the optimal sense. If the process acting on yield curves is ``irreducible'', i.e., every neighbourhood of a state can be reached with positive probability, then even consistent recalibration is possible (see \cite{baudoin2005hypoellipticity}). However, one usually encounters a severe lack of analytic tractability within this model class. Euler and higher order schemes (often) require strong assumptions on the vector fields (c.f.~\cite{doersek2013efficient}). Usually no exact or high-order simulations of infinitesimal increments are at hand in contrast to CRC models, where this is often the case. 

\subsection{PCA- or local PCA models} Principal component analysis (PCA) or local PCA considers yield curves as outcomes of a statistical model, which is estimated by standard PCA techniques (see e.g. \cite{litterman1991common,bouchaud1999phenomenology,cont2005modeling}). When the statistical model is too simplistic, often arbitrage enters the field, which is an undesirable feature. A more refined version is actually equivalent to a HJM model with constant vector fields (as e.g. in \cite{henseler2013tractable}). Here preserving floored interest rates, which is desired in some situations, is not possible. PCA inspired models, correctly implemented, allow for robust calibration and consistent recalibration, but are usually not very tractable from an analytic point of view.

\subsection{Filtered (historical) simulation} Historical simulation is a standard industry technique to simulate distributions of yield curves by considering the relative returns as independent samples of an unknown distribution, see \cite{gombani2005filtered,barone1998market}. Certainly this assumption can lead to difficulties with the absence of arbitrage, but this can be solved as in \cite{ortega2009new,teichmann2012consistent}. The most important problem is the state-independence of the distribution. Again also (filtered) historical simulation can be embedded into the realm of HJM models. These models then allow for robust calibration and consistent recalibration, but are usually not very tractable from an analytic point of view.

\section{Hull-White extended affine short rate models}\label{sec:hw}

\subsection{Overview}

We set the stage for CRC models by describing Hull-White extended affine short rate models, focusing first on the correspondence between Hull-White extensions and initial forward rate curves. The one-dimensional short rate model of the introduction is replaced by more general multi-dimensional factor models for the short rate. The parameter $y$, which becomes stochastic in the CRC setting, is kept constant and fixed for the moment. 

\subsection{Setup and notation}\label{sec:hw:setup}

$\left(\Omega, \mathcal F ,(\mathcal F(t))_{t\geq0}, \mathbb P\right)$ is a filtered probability space. The filtration satisfies the usual conditions. The measure $\mathbb P$ plays the role of a risk-neutral measure.
All processes are defined on $\Omega$, adapted to $(\mathcal F(t))_{t\geq0}$, and c\`adl\`ag. 
$W=(W(t))_{t\geq 0}$ is a $d$-dimensional $(\mathcal F(t))_{t\geq 0}$-Brownian motion. 

The short rate process $r=(r(t))_{t\geq 0}$ is determined by a factor process $X=(X(t))_{t\geq0}$ with values in a state space $\mathbb X$. The evolution of the factor process depends on a parameter process $Y=(Y(t))_{t\geq 0}$ with values in a space $\mathbb Y$. In all of \autoref{sec:hw}, the parameter process $Y(t)\equiv y$ is assumed to be constant and fixed, whereas it is allowed to vary in \autoref{sec:crc} below. 

The spaces $\mathbb X$ and $\mathbb Y$ are both subsets of some finite dimensional vector space. $\mathbb X$ is, up to permutation of coordinates, of the canonical form $\mathbb R^{d_1}_+ \times \mathbb R^{d_2}$ with $d_1+d_2=d\geq 1$. The canonical basis vectors in $\mathbb R^d$ are denoted by $e_1,\ldots,e_d$, and $\langle\cdot,\cdot\rangle$ denotes the Euclidean scalar product. Of course we could consider more general affine processes here, which take values, e.g., in products of cones of positive-semidefinite matrices and real lines like Wishart-Heston models.

For each $(x,y) \in \mathbb X\times\mathbb Y$, there is a symmetric positive semidefinite matrix $A_y(x) \in \mathbb R^{d\times d}$ and a vector $B_y(x) \in \mathbb R^d$, determining the volatility and the drift of $X$. The expressions $A_y(x)$ and $B_y(x)$ are affine in $x$, i.e., 
\begin{equation*}
A_y(x)=a_y+\sum_{i=1}^d \alpha^i_y x^i, 
\qquad
B_y(x)=b_y+\sum_{i=1}^d \beta^i_y x^i,
\qquad
\text{for all } (x,y)\in\mathbb X\times\mathbb Y,
\end{equation*}
for symmetric positive semidefinite matrices $a_y,\alpha^1_y,\dots,\alpha^d_y \in \mathbb R^{d\times d}$ and $b_y,\beta^1_y,\dots,\beta^d_y \in \mathbb R^d$. We denote by $\sqrt{A_y(x)}$ the symmetric positive semidefinite square root of $A_y(x)$. Note: other choices of square roots are possible, and \autoref{ass:hw:x} below does not depend on the choice of square root by \cite[Chapter~V.19--20]{rogers2000diffusions}. Moreover, a function $\theta\in C(\mathbb R_+)$ is given, which is used to make the drift of $X$ time-inhomogeneous.

\subsection{Factor process and short rate}\label{sec:hw:shortrate}

The \emph{factor process} $X$ is a continuous, $\mathbb X$-valued solution of the SDE
\begin{align}\label{equ:hw:x}
dX(t)=\sqrt{A_y\big(X(t)\big)}dW(t)+\Big(\theta(t)e_1+B_y\big(X(t)\big)\Big)dt
\end{align} 
with initial condition $X(0)=x \in \mathbb X$. The \emph{short rate} is given by
\begin{equation}\label{equ:hw:r}
r(t)=\ell+\langle \lambda,X(t)\rangle, \qquad \text{for all } t\geq 0,
\end{equation}
for some fixed $\ell \in \mathbb R$ and $\lambda \in \mathbb R^d$ satisfying $\langle\lambda,e_1\rangle\neq 0$. 

\begin{assumption}\label{ass:hw:x}
It is assumed that SDE \eqref{equ:hw:x} has a unique continuous, $\mathbb X$-valued solution $X$, for each initial condition $X(s)=x$, where $(s,x)\in\mathbb R_+\times\mathbb X$. In this case, the parameters $(y,\theta)$ are called \emph{admissible}. Moreover, it is assumed that $X$ satisfies the \emph{moment condition} 
\begin{equation}\label{equ:hw:moment}
\mathbb E\left[e^{-\int_0^t\left(\ell+\langle \lambda,X(s)\rangle\right)ds}\right]<\infty, 
\qquad
\text{for all } t\geq 0.
\end{equation}
\end{assumption}

\subsection{Exponential moments and Riccati equations}\label{sec:hw:riccati}

The process $X$, or rather the family of processes obtained by varying the initial conditions in SDE \eqref{equ:hw:x}, is time-inhomogeneous affine. All coefficients in SDE \eqref{equ:hw:x} are independent of time, except for the drift $\theta$; we call $X$ \emph{Hull-White extended affine} and $\theta$ its Hull-White extension. This we are going to highlight in detail below. Our main reference for time-inhomogeneous affine processes is \cite{filipovic2005time}. 

Functions $(\Phi_y,\Psi_y) \in C^\infty(\mathbb R_+)\times C^\infty(\mathbb R_+;\mathbb R^d)$ are called solutions of the \emph{Riccati equations} (with parameter $y\in\mathbb Y$) if
\begin{subequations}\label{equ:hw:riccati}
\begin{align}
\Phi_y'&=F_y\circ \Psi_y, & \Phi_y(0)&=0,
\\\label{equ:hw:riccati_psi}
\Psi_y'&=R_y\circ \Psi_y-\lambda, & \Psi_y(0)&=0
\end{align}
\end{subequations}
holds, where $(F_y,R_y) \in C(\mathbb R^d)\times C(\mathbb R^d;\mathbb R^d)$ are given by
\begin{align*}
F_y(u)&=\frac12 \langle u,a_y u\rangle + \langle u,b_y\rangle, 
&
R_y^i(u)&= \frac12 \langle u,\alpha_y^i u\rangle+\langle u,\beta_y^i\rangle,
\end{align*}
for all $u\in\mathbb R^d$ and $i\in\{1,\dots,d\}$.

\begin{lemma}\label{lem:hw:moment}
Let $X$ be a solution of SDE \eqref{equ:hw:x} for some admissible parameters $(y,\theta)$. Then $X$ satisfies moment condition \eqref{equ:hw:moment} if and only if there exists a solution $(\Phi_y,\Psi_y)$ of the Riccati equations \eqref{equ:hw:riccati} for all times $t\geq 0$. Moreover, if there exists a solution, even a local one, of the Riccati equations, it is unique.
\end{lemma}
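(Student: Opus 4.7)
Uniqueness is immediate from the Picard--Lindel\"of theorem: the vector fields $F_y$ and $R_y$ are polynomials of degree at most two in $u\in\mathbb R^d$, hence locally Lipschitz, so there is a unique maximal solution $(\Phi_y,\Psi_y)$ of \eqref{equ:hw:riccati} on some interval $[0,T^*)$ with $T^*\in(0,\infty]$, and any local solution agrees with it on the common domain.

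The equivalence is obtained via an \emph{affine transform identity}: for every $T\in(0,T^*)$,
\begin{equation*}
\mathbb E\!\left[\exp\!\left(-\!\int_0^T\!\big(\ell+\langle\lambda,X(s)\rangle\big)ds\right)\right]
=\exp\!\left(\Phi_y(T)-\ell T+\!\int_0^T\!\theta(s)\Psi_y^1(T-s)\,ds+\langle\Psi_y(T),x\rangle\right),
\end{equation*}
where $\Psi_y^1=\langle e_1,\Psi_y\rangle$. To prove it I fix $T\in(0,T^*)$ and consider the candidate process
\begin{equation*}
M(t)=\exp\!\left(-\!\int_0^t\!\big(\ell+\langle\lambda,X(s)\rangle\big)ds+\Phi_y(T-t)-\ell(T-t)+\!\int_t^T\!\theta(s)\Psi_y^1(T-s)ds+\langle\Psi_y(T-t),X(t)\rangle\right)
\end{equation*}
on $[0,T]$. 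An It\^o computation using \eqref{equ:hw:x} and the Riccati equations \eqref{equ:hw:riccati} shows that all finite-variation terms cancel: the contribution of the $\theta(t)e_1$ part of the drift of $X$ is offset by the time-derivative of $\int_t^T\theta(s)\Psi_y^1(T-s)ds$, while the $x$-independent and $x$-linear contributions vanish via $\Phi_y'=F_y\circ\Psi_y$ and $\Psi_y'=R_y\circ\Psi_y-\lambda$. Hence $M$ is a continuous, nonnegative local martingale, so in particular a supermartingale, which already yields $\mathbb E[\exp(-\int_0^T(\ell+\langle\lambda,X(s)\rangle)ds)]\le M(0)<\infty$. Upgrading $M$ to a true martingale (and thus to equality) is achieved by the exponential-moment machinery for time-inhomogeneous affine processes of \cite{filipovic2005time}: since $\Psi_y$ is bounded on $[0,T]$ and $\theta$ is continuous, a localisation argument together with the canonical change-of-measure for affine processes renders $M$ uniformly integrable on $[0,T]$. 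Letting $T$ range over $\mathbb R_+$ proves the direction ``global Riccati $\Rightarrow$ moment condition''.

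For the converse I argue by contraposition. Assume $T^*<\infty$; maximality of $T^*$ together with the ODE structure forces $|\Psi_y(t)|\to\infty$ as $t\uparrow T^*$, since $\Phi_y$ is recovered from $\Psi_y$ by quadrature and cannot blow up first. Invoking the explosion dichotomy for time-inhomogeneous affine processes from \cite{filipovic2005time}, the divergence of $\Psi_y$ at $T^*$ forces the expectation in \eqref{equ:hw:moment} to be infinite for some $t\le T^*$, contradicting the assumed moment condition. Hence $T^*=\infty$. The principal obstacle is precisely this explosion step: transferring a blow-up of the ODE into a blow-up of the exponential moment requires the full strength of affine-process theory and careful tracking of the Hull-White inhomogeneity $\theta$, which, being continuous and hence bounded on compacts, contributes only a harmless finite correction to the affine transform.
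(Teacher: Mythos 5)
Your uniqueness argument and your proof of the implication ``global Riccati solution $\Rightarrow$ moment condition'' are correct, and the latter is in fact more elementary than the paper's route: the exponential ansatz $M$ is a nonnegative local martingale by exactly the It\^o computation you describe, and the supermartingale inequality $\mathbb E[M(T)]\le M(0)<\infty$ already yields \eqref{equ:hw:moment}, with no need to upgrade $M$ to a true martingale. (The paper instead lifts $X$ to the augmented process $Z=(X,\int_0^\cdot X(s)\,ds)$, checks that $Z$ is strongly regular affine in the sense of \cite{filipovic2005time}, and reads off both directions of the equivalence from the exponential-moment theorem of \cite{harms2015exponential} applied to $Z$.)

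The converse direction is where your argument has a genuine gap. The step ``divergence of $\Psi_y$ at $T^*$ forces the expectation in \eqref{equ:hw:moment} to be infinite'' is precisely the hard half of the equivalence, and it is not contained in \cite{filipovic2005time}: that reference constructs time-inhomogeneous affine processes and proves the affine transform formula \emph{on the set of arguments for which the Riccati solutions exist}, but it does not prove that exponential moments blow up when the Riccati solutions fail to exist. That converse is the content of the dedicated reference \cite{harms2015exponential} on exponential moments of affine processes, which is what the paper cites. Moreover, even granting such a theorem, it concerns moments of the form $\mathbb E\bigl[e^{\langle u,Z(T)\rangle}\bigr]$ of the process itself, whereas \eqref{equ:hw:moment} involves the time integral $\int_0^t\langle\lambda,X(s)\rangle\,ds$; to bring the theorem to bear one must first pass to the augmented affine process $Z=(X,\int_0^\cdot X\,ds)$, identify its functional characteristics, and verify that its Riccati system reduces to \eqref{equ:hw:riccati} --- exactly the reduction carried out in the paper and absent from your contraposition argument. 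Your preliminary observation that $\Psi_y$, rather than $\Phi_y$, is the component that must blow up at a finite $T^*$ is correct, but it does not by itself transfer the ODE blow-up to a blow-up of the moment; as you yourself note, this is the principal obstacle, and it is left unproved.
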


Note that the Riccati equations \eqref{equ:hw:riccati} only depend on $y$, and not on the choice of the Hull-White extension $\theta$.

\begin{proof}
Let $Z$ be the $\mathbb X^2$-valued process $Z(t)=(Z_1(t),Z_2(t))=(X(t),\int_0^t X(s)ds)$. Then $Z$ is an It\= o diffusion whose drift and volatility at time $t\geq 0$ are given by 
\begin{align*}
\big(\theta(t)e_1+B_y(Z_1(t)),Z_1(t)\big) \in \mathbb R^{2d}, 
&&
\begin{pmatrix}A_y(Z_1(t)) &0\\0&0\end{pmatrix}\in\mathbb R^{2d\times 2d},
\end{align*}
respectively. Clearly, these expressions are affine in $Z(t)$. Moreover, the time-inhomo\-geneity $\theta(t)e_1$ is (by definition) continuous in $t$. Therefore, the process $Z$, or rather the family of processes obtained by varying the initial condition of $Z$, is strongly regular affine, see \cite[Theorem 2.14]{filipovic2005time}. For each $(t,u_1,u_2)\in \mathbb R_+\times\mathbb R^d\times\mathbb R^d$, the functional characteristics of $Z$ are given by
\begin{align*}
F(t,u_1,u_2)&=\theta(t) \langle u_1,e_1 \rangle+ F_y(u_1) \in \mathbb R, 
\\
R(t,u_1,u_2)&=(R_y(u_1)+u_2,0)\in \mathbb R^d\times\mathbb R^d. 
\end{align*}
Moment condition \eqref{equ:hw:moment}, expressed in terms of $Z$, reads as follows: 
\begin{equation*}
\mathbb E\left[e^{-\langle\lambda,Z_2(T)\rangle}\right]<\infty,
\qquad
\text{for all } T\geq 0.
\end{equation*}
By \cite{harms2015exponential}, the moment condition is equivalent to the existence of a solution $(\phi,\psi_1,\psi_2)$ of the following Riccati system associated to $Z$:
\begin{align*}
-\partial_t \phi(t,T) &=\theta(t)\langle\psi_1(t,T),e_1 \rangle+ F_y(\psi_1(t,T)),
&
\phi(T,T)&=0, 
\\
-\partial_t \psi_1(t,T) &=R_y(\psi_1(t,T))+\psi_2(t,T),
&
\psi_1(T,T) &= 0,
\\
-\partial_t \psi_2(t,T) &= 0, 
&
\psi_2(T,T)&=-\lambda.
\end{align*}
Equivalently, the relations $\psi_2(t,T)=-\lambda$ and
\begin{align*}
\phi(t,T)&=\int_t^T \theta(s)\langle\Psi_y(T-s),e_1\rangle ds 
+ \Phi_y(T-t),
&
\psi_1(t,T)=\Psi_y(T-t)
\end{align*}
hold identically, where $(\Phi_y,\Psi_y)$ is a solution of the Riccati equations \eqref{equ:hw:riccati}. Uniqueness holds for these equations because the vector fields are locally Lipschitz.
\end{proof}

\subsection{Bond prices and forward rates}
By no-arbitrage arguments zero-coupon bond prices in the short rate model of \autoref{sec:hw:shortrate} are given by
\[
P(t,T)=\mathbb E\left[e^{-\int_{t}^{T}r(s)ds}\middle| \mathcal F(t)\right]=\mathbb E\left[e^{-\int_t^T\left(\ell+\langle \lambda,X(s)\rangle\right)ds}\middle| \mathcal F(t)\right],\qquad T\geq t\geq 0.
\]
For essentials on short rate models we refer to \cite[Chapter 5]{filipovic2009term}. We define the (instantaneous) forward rates by
\[
h(t,\tau)=h(t)(\tau)=-\partial_\tau\log \big(P(t,t+\tau)\big),\qquad t,\tau\geq0.
\]
The parametrisation of the forward rate as a function of $t$ and $\tau$ is called Musiela parametrisation. It is particularly useful in this paper since $\left(h(t)\right)_{t\geq0}$ will be interpreted as a stochastic process taking values in a suitable space of functions on $\mathbb R_{+}$.

By the affine nature of the factor process $X$, zero-coupon bond prices and forward rates can be obtained by solving the Riccati system of ODE's \eqref{equ:hw:riccati}.
\begin{theorem}[Zero-coupon bond price and forward rate]\label{thm:hw:price}
Let $X$ satisfy moment condition \eqref{equ:hw:moment} and let $(\Phi_y,\Psi_y)$ be the unique solution of the Riccati equations with parameter $y$, given by \autoref{lem:hw:moment}. Then the \emph{bond prices} in the short rate model \eqref{equ:hw:x}--\eqref{equ:hw:r} satisfy
\begin{equation}\label{equ:hw:p}\begin{aligned}
\log(P(t,T))=-\ell(T-t)
+\int_t^T \theta(s)\langle\Psi_y(T-s),e_1\rangle ds
+ \Phi_y(T-t)
+\langle\Psi_y(T-t), X(t)\rangle,
\end{aligned}\end{equation}
for all $T\geq t\geq 0$, and the \emph{forward rates} are given by
\begin{equation}\label{equ:hw:h}\begin{aligned}
h(t,\tau)=\ell-\int_0^\tau \theta(t+s)\langle \Psi_y'(\tau-s),e_1\rangle ds
-\Phi_y'(\tau)-\langle \Psi_y'(\tau), X(t)\rangle,
\end{aligned}\end{equation}
for all $t,\tau \geq 0$.
\end{theorem}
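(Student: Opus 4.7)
The plan is to derive \eqref{equ:hw:p} directly from the affine structure of the augmented process $Z(t)=(X(t),\int_0^t X(s)\,ds)$ that was already set up in the proof of \autoref{lem:hw:moment}, and then to obtain \eqref{equ:hw:h} by differentiation. The forward rate formula is a mechanical consequence of the bond price formula, so essentially all of the work is in pinning down $\log P(t,T)$.

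The key observation is that $\int_t^T X(s)\,ds = Z_2(T)-Z_2(t)$, so the bond price can be written as
\[
P(t,T)=e^{-\ell(T-t)}e^{\langle\lambda,Z_2(t)\rangle}\,\mathbb E\!\left[e^{\langle -\lambda,\,Z_2(T)\rangle}\,\middle|\,\mathcal F(t)\right].
\]
Since $Z$ is a (strongly regular) time-inhomogeneous affine process in the sense of \cite{filipovic2005time}, this conditional exponential moment equals $\exp(\phi(t,T)+\langle\psi_1(t,T),Z_1(t)\rangle+\langle\psi_2(t,T),Z_2(t)\rangle)$, where $(\phi,\psi_1,\psi_2)$ solves the Riccati system written out in the proof of \autoref{lem:hw:moment} with terminal data $\phi(T,T)=0$, $\psi_1(T,T)=0$, $\psi_2(T,T)=-\lambda$. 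This requires us to know that the conditional moment generating function admits such a representation and not merely that the unconditional moment is finite; this is granted under the admissibility assumption together with the moment condition, via the affine transform formula of \cite{filipovic2005time}.

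Next I solve the Riccati system explicitly. The $\psi_2$-equation is trivial and gives $\psi_2(t,T)\equiv-\lambda$; substituting this into the $\psi_1$-equation reduces it exactly to \eqref{equ:hw:riccati_psi} in the time-to-maturity variable, yielding $\psi_1(t,T)=\Psi_y(T-t)$; finally integrating the $\phi$-equation against this solution gives
\[
\phi(t,T)=\int_t^T\theta(s)\langle\Psi_y(T-s),e_1\rangle\,ds+\Phi_y(T-t).
\]
Taking logarithms, the $\langle\psi_2(t,T),Z_2(t)\rangle$ term cancels the explicit $\langle\lambda,Z_2(t)\rangle$ prefactor, and \eqref{equ:hw:p} follows immediately.

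For \eqref{equ:hw:h}, I substitute $T=t+\tau$ into \eqref{equ:hw:p}, change variables in the integral to $u=s-t$, and differentiate in $\tau$. The boundary contribution from the upper limit of the integral vanishes because $\Psi_y(0)=0$, so only the integrand's $\tau$-derivative and the derivatives of $\Phi_y(\tau)$ and $\langle\Psi_y(\tau),X(t)\rangle$ remain; negating yields exactly \eqref{equ:hw:h}. The only subtlety worth checking is the smoothness required to differentiate under the integral and at the boundary, but this is provided by $\Psi_y\in C^\infty(\mathbb R_+;\mathbb R^d)$ from the Riccati setup. The main potential obstacle is the justification of the affine transform formula at the level of conditional (not just unconditional) expectations; once that is invoked, the rest of the argument is algebraic.
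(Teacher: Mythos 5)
Your proposal is correct and follows essentially the same route as the paper: both embed the problem in the augmented affine process $Z=(X,\int X)$ from \autoref{lem:hw:moment}, invoke the conditional affine transform formula, identify $(\phi,\psi_1,\psi_2)$ with $(\Phi_y,\Psi_y)$, and obtain the forward rates by differentiating in $\tau$. The only difference is that you spell out the ``direct calculation'' the paper leaves implicit and cite \cite{filipovic2005time} where the paper invokes the conditional version of \cite{harms2015exponential}, which is immaterial.
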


\begin{proof}
We borrow from the proof of \autoref{lem:hw:moment}, where moment condition \eqref{equ:hw:moment} was shown to be equivalent to the existence of solutions $(\phi,\psi_1,\psi_2)$ of the Riccati system associated to the process $Z=(X,\int X)$. Moreover, $(\phi,\psi_1,\psi_2)$ are closely related to the solutions $(\Phi_y,\Psi_y)$ of Riccati system \eqref{equ:hw:riccati}. By the main theorem in \cite{harms2015exponential} and its conditional version, the affine transform formula
\begin{equation*}
\mathbb E\left[e^{-\langle\lambda,Z_2(T)\rangle}\middle|\mathcal F(t)\right]=e^{\phi(t,T)+\langle \psi_1(t,T),Z_1(t)\rangle+\langle \psi_2(t,T),Z_2(t)\rangle},
\qquad
\text{for all } T\geq t\geq 0,
\end{equation*}
holds. A direct calculation shows this formula to be equivalent to formula \eqref{equ:hw:p} for bond prices. Formula \eqref{equ:hw:h} for forward rates is obtained by taking the logarithm and differentiating with respect to $\tau$.
\end{proof}

\subsection{Heath-Jarrow-Morton equation}\label{sec:hw:hjm}

The evolution of forward rate curves is described by the \emph{HJM equation}. For each $(x,y)\in\mathbb X\times\mathbb Y$, let $\mu^{\mathrm{HJM}}_y(x) $ and $\sigma^{\mathrm{HJM}}_y(x) $ be given by
\begin{align*}
\mu^{\mathrm{HJM}}_y(x)&=\langle\Psi_y,A_y(x)\Psi_y'\rangle\in C^\infty(\mathbb R_+),
&
\sigma^{\mathrm{HJM}}_y(x)&=-\sqrt{A_y(x)} \Psi'_y\in C^\infty(\mathbb R_+;\mathbb R^d).
\end{align*}
Note that the familiar HJM drift condition holds:
\begin{equation}\label{equ:hw:drift}
\mu^{\mathrm{HJM}}_y(x)(\tau)=
\left\langle\sigma^{\mathrm{HJM}}_y(x)(\tau),
\int_0^\tau \sigma^{\mathrm{HJM}}_y(x)(s)ds\right\rangle,
\qquad
\text{for all } \tau\geq 0.
\end{equation}
Let $\mathbb H$ be a Hilbert space, destined to contain the forward rate curves of the model. By abuse of notation the symbol $h$ is used interchangeably to denote an element of $\mathbb H$ and the forward rate process.

\begin{assumption}\label{ass:hw:h}
 $\mathbb H$ is a Hilbert space with the following properties:
\begin{enumerate}[(i)]
\item \label{ass:hw:h:item1} $\mathbb H  \subset C(\mathbb R_+) $ and the evaluation map $\mathrm{eval}_{\tau}\colon h\mapsto h(\tau)$ is continuous on $\mathbb H$, for each $\tau\in\mathbb R_+$;
\item \label{ass:hw:h:item2} for each $(x,y,z)\in\mathbb X\times\mathbb Y\times\mathbb R^d$, $\mu^{\mathrm{HJM}}_y(x)$ and $\langle\sigma^{\mathrm{HJM}}_y(x),z\rangle$ are elements of $\mathbb H$;
\item \label{ass:hw:h:item3} the right shifts $(\mathcal S(t))_{t\geq 0}$ mapping $h$ to $h(t+\cdot)$ define a strongly continuous semigroup on $\mathbb H$ with infinitesimal generator $\mathcal A$.
\end{enumerate}
\end{assumption}

Hilbert spaces of forward rate curves which comply with the requirements of \autoref{ass:hw:h} are constructed in \cite[Sections 5, 7.4.1 and 7.4.2]{filipovic2001consistency} for the Vasi\v cek and CIR models. In the domain $\mathcal D(\mathcal A)\subset\mathbb H\cap C^1(\mathbb R_+)$ (c.f.\ \cite[Lemma 4.2.2]{filipovic2001consistency}) of the infinitesimal generator $\mathcal A$ we can characterise the process $(h,X)$ as follows.
\begin{theorem}[HJM equation]\label{thm:hw:hjm}
Let $(h,X)$ be given by \autoref{thm:hw:price} and assume that $h(t)\in\mathcal D(\mathcal A)$, for each $t\geq 0$. Then the process $(h,X)$ is a strong solution of the following SPDE on $\mathbb H\times \mathbb X$:
\begin{equation}
\label{equ:hw:hjm}
\begin{aligned}
dh(t)&=\Big(\mathcal Ah(t)+\mu^{\mathrm{HJM}}_y\big(X(t)\big)\Big)dt+\sigma^{\mathrm{HJM}}_y\big(X(t)\big)dW(t),
\\
dX(t)&=\sqrt{A_y\big(X(t)\big)}dW(t)+\Big(\theta(t)e_1+B_y\big(X(t)\big)\Big)dt.
\end{aligned}
\end{equation}
\end{theorem}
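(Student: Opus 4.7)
The plan is to start from the explicit forward rate formula \eqref{equ:hw:h} of \autoref{thm:hw:price} and derive the dynamics \eqref{equ:hw:hjm} by a pointwise-in-$\tau$ It\^o calculation, then lift the identity to an equation in $\mathbb H$ using \autoref{ass:hw:h}. Abbreviating $I(t,\tau):=\int_0^\tau\theta(t+s)\langle\Psi_y'(\tau-s),e_1\rangle\,ds$, so that $h(t,\tau)=\ell-I(t,\tau)-\Phi_y'(\tau)-\langle\Psi_y'(\tau),X(t)\rangle$, for fixed $\tau\ge 0$ I would compute $d_th(t,\tau)$ from two contributions: the deterministic $t$-derivative of $I$, and the It\^o expansion of the last term driven by \eqref{equ:hw:x}. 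To differentiate $I$ without assuming smoothness of $\theta$, I would use the change of variable $u=t+s$, rewrite $I(t,\tau)=\int_t^{t+\tau}\theta(u)\langle\Psi_y'(t+\tau-u),e_1\rangle\,du$, and apply Leibniz. Using $\Psi_y'(0)=-\lambda$ (which follows from \eqref{equ:hw:riccati_psi} because $R_y(0)=0$), this yields a boundary term $\theta(t+\tau)\langle\lambda,e_1\rangle$, a term $\theta(t)\langle\Psi_y'(\tau),e_1\rangle$, and an integral of $\Psi_y''$ against $\theta$.

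The It\^o part $-\langle\Psi_y'(\tau),dX(t)\rangle$ produces the martingale contribution $\sigma^{\mathrm{HJM}}_y(X(t))(\tau)\,dW(t)=-\langle\sqrt{A_y(X(t))}\Psi_y'(\tau),dW(t)\rangle$ and the drift $-\theta(t)\langle\Psi_y'(\tau),e_1\rangle-\langle\Psi_y'(\tau),B_y(X(t))\rangle$; the two terms in $\theta(t)\langle\Psi_y'(\tau),e_1\rangle$ cancel against each other. Differentiating the formula for $h$ in $\tau$ gives
\begin{equation*}
\partial_\tau h(t,\tau)=\theta(t+\tau)\langle\lambda,e_1\rangle-\int_0^\tau\theta(t+s)\langle\Psi_y''(\tau-s),e_1\rangle\,ds-\Phi_y''(\tau)-\langle\Psi_y''(\tau),X(t)\rangle,
\end{equation*}
so subtracting $\partial_\tau h(t,\tau)$ from the drift of $dh(t,\tau)$ leaves the ``correction'' $\Phi_y''(\tau)+\langle\Psi_y''(\tau),X(t)\rangle-\langle\Psi_y'(\tau),B_y(X(t))\rangle$. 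The key algebraic step is to identify this with $\mu^{\mathrm{HJM}}_y(X(t))(\tau)$: differentiating the Riccati equations \eqref{equ:hw:riccati} once more in $\tau$ yields $\Phi_y''(\tau)=\langle\Psi_y'(\tau),a_y\Psi_y(\tau)+b_y\rangle$ and $\Psi_y^{i\prime\prime}(\tau)=\langle\Psi_y'(\tau),\alpha_y^i\Psi_y(\tau)+\beta_y^i\rangle$. Summing the latter against $X^i(t)$ and invoking the affine form of $A_y$ and $B_y$ makes the $B_y$ terms cancel, leaving exactly $\langle\Psi_y'(\tau),A_y(X(t))\Psi_y(\tau)\rangle=\mu^{\mathrm{HJM}}_y(X(t))(\tau)$.

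This establishes the scalar equation $dh(t,\tau)=\bigl(\partial_\tau h(t,\tau)+\mu^{\mathrm{HJM}}_y(X(t))(\tau)\bigr)dt+\sigma^{\mathrm{HJM}}_y(X(t))(\tau)\,dW(t)$ for every fixed $\tau$. To promote it to the $\mathbb H$-valued SPDE \eqref{equ:hw:hjm}, I would use \autoref{ass:hw:h}: part~\eqref{ass:hw:h:item3} together with the standing hypothesis $h(t)\in\mathcal D(\mathcal A)$ identifies $\partial_\tau h(t,\cdot)$ with $\mathcal Ah(t)$ in $\mathbb H$; part~\eqref{ass:hw:h:item2} guarantees that $\mu^{\mathrm{HJM}}_y(X(t))$ and $\langle\sigma^{\mathrm{HJM}}_y(X(t)),z\rangle$ belong to $\mathbb H$; and part~\eqref{ass:hw:h:item1} lets one commute the continuous evaluation functionals with Bochner and stochastic integrals, so that an equality between $\mathbb H$-valued semimartingales that holds under each $\mathrm{eval}_\tau$ also holds in $\mathbb H$. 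The second line of \eqref{equ:hw:hjm} is just \eqref{equ:hw:x} itself. I expect the main obstacle to be the careful bookkeeping of the mixed $t,\tau$ differentiation of $I(t,\tau)$ and the sign cancellations; the Hilbert space upgrade is then routine given \autoref{ass:hw:h}.
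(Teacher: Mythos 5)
Your proposal is correct and follows essentially the same route as the paper's own proof: differentiate the explicit forward rate formula \eqref{equ:hw:h} in $t$ and $\tau$, use $\Psi_y'(0)=-\lambda$ and the SDE \eqref{equ:hw:x} to cancel the Volterra integral and the $\theta$-terms, and identify the residual drift $\Phi_y''+\langle\Psi_y'',X\rangle-\langle\Psi_y',B_y(X)\rangle$ with $\mu_y^{\mathrm{HJM}}(X)$ by differentiating the Riccati equations. Your additional remarks on the Leibniz bookkeeping for $I(t,\tau)$ and on upgrading the pointwise identity to an $\mathbb H$-valued equation via \autoref{ass:hw:h} are consistent with, and slightly more explicit than, what the paper records.
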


Outside of the domain of $\mathcal A$ the forward rate process can be characterised as a mild solution of \autoref{equ:hw:hjm}. For the concepts of mild, weak, and strong solutions of SPDEs we refer to \cite[Section 6.1]{daPrato2014se}. 

In the one-factor case, the factor process $(X(t))_{t\geq 0}$ is a simple functional of the forward rate. Then \autoref{equ:hw:hjm} can be rewritten as an evolution equation for the forward rate process alone (c.f. \autoref{equ:va:hjm}). This is also possible in the multi-factor case, but the corresponding functional is more complicated, which is why we choose to present the HJM equation in the form \eqref{equ:hw:hjm}. 

\begin{proof}
Differentiating formula \eqref{equ:hw:h} for forward rates with respect to $t$ and $\tau$ and using $\Psi_y'(0)=-\lambda$, one obtains for each $\tau\geq0$
\begin{align*}
dh(t,\tau)
&=\bigg(-\int_t^{t+\tau} \theta(s)\langle \Psi_y''(t+\tau-s),e_1\rangle ds
+\theta(t+\tau)\langle \lambda,e_1\rangle 
\\&\qquad
+\theta(t)\langle \Psi_y'(\tau),e_1\rangle \bigg)dt
-\langle \Psi_y'(\tau), dX(t)\rangle,
\\
\mathcal Ah(t,\tau)&=
-\int_t^{t+\tau} \theta(s)\langle \Psi_y''(t+\tau-s),e_1\rangle ds
+\theta(t+\tau)\langle \lambda,e_1\rangle 
\\&\qquad
-\Phi_y''(\tau)-\langle \Psi_y''(\tau), X(t)\rangle.
\end{align*}
Subtracting the equations and cancelling out the integral as well as the term next to it yields
\begin{align*}
dh(t,\tau)
&=\left(\mathcal Ah(t,\tau)+\Phi_y''(\tau)+\langle \Psi_y''(\tau), X(t)\rangle
+\theta(t)\langle \Psi_y'(\tau),e_1\rangle \right)dt
-\langle \Psi_y'(\tau), dX(t)\rangle.
\end{align*}
When $dX(t)$ is replaced by the right-hand side of SDE \eqref{equ:hw:x}, the $\theta(t)$-term cancels out and one obtains for each $\tau\geq0$
\begin{align*}
dh(t,\tau)
&=\left(\mathcal Ah(t,\tau)+\Phi_y''(\tau)+\langle \Psi_y''(\tau), X(t)\rangle 
-\langle \Psi_y'(\tau), B_y(X(t))\rangle\right)dt
\\&\qquad
-\left\langle \Psi_y'(\tau), \sqrt{A_y(X(t))}dW(t)\right\rangle.
\end{align*}
The symmetric matrix $\sqrt{A_y(X(t))}$ can be moved to the other side of the scalar product, and one immediately recognises the volatility $\sigma_y^{\mathrm{HJM}}(X(t))$. A direct calculation shows that the drift is equal to $\mu_y^{\mathrm{HJM}}(X(t))$. Indeed, \begin{align*}
\Phi''_y+\left\langle \Psi''_y,x\right\rangle-\left\langle\Psi'_y,B_y(x)\right\rangle
&=F_y'\circ\Psi_y \cdot \Psi_y'
+\langle R_y'\circ\Psi_y \cdot\Psi_y',x\rangle
-\langle \Psi_y',B_y(x)\rangle
\\
&=\langle \Psi_y,A_y(x) \Psi_y'\rangle =\mu_y^{\mathrm{HJM}}(x),
\end{align*}
which follows from the relations
\begin{align*}
    F_y'(u)\cdot v&=\langle u, a_y v\rangle +\langle v,b_y\rangle, 
    &
    (R_y^i)'(u)\cdot v &=\langle u, \alpha^i_y v\rangle +\langle v,\beta^i_y\rangle.
    \qedhere
\end{align*}
\end{proof}

\subsection{Forward rates and Hull-White extensions}\label{sec:hw:calibration}

Relation \eqref{equ:hw:h} between the forward rate $h(t)$ and the Hull-White extension $\theta$ plays a key role in \emph{calibration} (and recalibration) of the model. It can be expressed concisely as 
\begin{equation*}
h(t)=\mathcal H_y\big(\mathcal S(t)\theta,X(t)\big), 
\qquad
\mathcal S(t)\theta = \mathcal C_y\big(h(t),X(t)\big),
\qquad
\text{for all } t\geq 0,
\end{equation*}
where $\mathcal S(t)$ is the right shift operator, see \autoref{ass:hw:h}(\ref{ass:hw:h:item3}), $\mathcal H_y$ calculates the initial forward rate curve from the Hull-White extension given parameter $y$, and $\mathcal C_y$ performs the inverse operation of calibrating a Hull-White extension to an initial forward rate curve. Formally, for each $(t,x,\theta)\in\mathbb R_+\times\mathbb X\times C(\mathbb R_+)$, these operators are given by
\begin{align*}
\mathcal S(t)\theta&=\theta(t+\cdot) \in C(\mathbb R_+),
\vphantom{\int}
\\
\mathcal H_y(\theta,x)&=\ell-\mathcal I_y(\theta)-\Phi_y'-\left\langle \Psi_y',x\right\rangle \in C^1(\mathbb R_+),
\\
\mathcal I_y(\theta) &= \int_0^\cdot \theta(s)\langle \Psi_y'(\cdot-s),e_1\rangle ds \in C^1(\mathbb R_+).
\end{align*}
Note that $\mathcal H_y$ involves the Volterra integral operator $\mathcal I_y$. 
The operator $\mathcal C_y$ (the letter $\mathcal C$ standing for calibration) is defined as the partial inverse of $\mathcal H_y$ given by the following theorem.

\begin{theorem}[Calibration to initial forward rate curves] \label{thm:hw:volterra}
Let $(h,x) \in C^1(\mathbb R_+)\times \mathbb X$ satisfy $h(0)=\ell+\langle\lambda,x\rangle$. Then the Volterra integral equation $h=\mathcal H_y(\theta,x)$ has a unique solution $\theta\in C(\mathbb{R}_+)$, which we denote by $\mathcal C_y(h,x)$.
\end{theorem}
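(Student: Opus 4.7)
The plan is to rewrite $h=\mathcal H_y(\theta,x)$ as a Volterra integral equation of the first kind, and then differentiate once to reduce to a standard Volterra equation of the second kind whose kernel is continuous and whose leading coefficient is nonzero. Concretely, set
\begin{equation*}
g(\tau)=\ell-\Phi_y'(\tau)-\langle\Psi_y'(\tau),x\rangle-h(\tau),
\qquad
K(\tau)=\langle\Psi_y'(\tau),e_1\rangle,
\end{equation*}
so that $h=\mathcal H_y(\theta,x)$ is equivalent to $\int_0^\tau \theta(s)\,K(\tau-s)\,ds=g(\tau)$ for all $\tau\ge 0$. Since $\Phi_y,\Psi_y\in C^\infty(\mathbb R_+)$ and $h\in C^1(\mathbb R_+)$, the function $g$ is $C^1$ and $K$ is $C^\infty$.

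The first key observation is the compatibility condition $g(0)=0$, which must hold for any Volterra equation of the first kind to be solvable. This follows from the initial conditions for the Riccati equations: $\Psi_y(0)=0$ gives $\Phi_y'(0)=F_y(0)=0$ and $\Psi_y'(0)=R_y(0)-\lambda=-\lambda$, so $g(0)=\ell+\langle\lambda,x\rangle-h(0)=0$ by the assumed compatibility of the data. The second key observation is that $K(0)=\langle\Psi_y'(0),e_1\rangle=-\langle\lambda,e_1\rangle\neq 0$ by the standing assumption on $\lambda$ in \autoref{sec:hw:shortrate}; this is exactly what prevents the Volterra kernel from being singular at the origin.

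Now differentiate $\int_0^\tau \theta(s)K(\tau-s)\,ds=g(\tau)$ with respect to $\tau$, which is legitimate provided $\theta\in C(\mathbb R_+)$ and both sides are $C^1$. Leibniz' rule (together with $g(0)=0$) yields the Volterra equation of the second kind
\begin{equation*}
\theta(\tau)+\frac{1}{K(0)}\int_0^\tau \theta(s)\,K'(\tau-s)\,ds=\frac{g'(\tau)}{K(0)}.
\end{equation*}
Conversely, any continuous $\theta$ satisfying this equation also satisfies the original first-kind equation: integrating both sides from $0$ to $\tau$ and using $g(0)=0$ recovers $\int_0^\tau \theta(s)K(\tau-s)\,ds=g(\tau)$. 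Hence the two formulations are equivalent in $C(\mathbb R_+)$.

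The second-kind equation has a continuous kernel $K'(\tau-s)/K(0)$ and continuous forcing $g'/K(0)$. Existence and uniqueness of a continuous solution on each bounded interval $[0,T]$ follow from the classical contraction/Picard-iteration argument for Volterra operators: on $C([0,T])$, iterating the integral operator contracts in a suitable weighted sup-norm $\|\theta\|_\alpha=\sup_{\tau\in[0,T]}e^{-\alpha\tau}|\theta(\tau)|$ for $\alpha$ large enough, so Banach's fixed point theorem applies. Since $T>0$ is arbitrary and the solutions on nested intervals are unique, they glue to a unique $\theta\in C(\mathbb R_+)$, which we denote by $\mathcal C_y(h,x)$. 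There is no serious obstacle here; the only points requiring care are the compatibility check $g(0)=0$ and verifying that $K(0)\neq 0$, both of which fall out directly from the Riccati initial conditions and the assumption $\langle\lambda,e_1\rangle\neq 0$.
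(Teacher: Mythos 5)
Your proof is correct and follows essentially the same route as the paper: the paper reduces the claim to the bijectivity of the first-kind Volterra operator $\mathcal I_y$ and invokes Brunner's Theorem~2.1.8, whose content is precisely your differentiation-to-second-kind reduction under the hypotheses that $K$ and $K'$ are continuous and $K(0)=-\langle\lambda,e_1\rangle\neq 0$. The only difference is that you carry out the classical argument by hand (compatibility condition $g(0)=0$, Leibniz differentiation, Picard/Bielecki contraction) rather than by citation, and in doing so you make explicit the step the paper leaves implicit, namely that $g(0)=0$ follows from the Riccati initial conditions together with the hypothesis $h(0)=\ell+\langle\lambda,x\rangle$.
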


The theorem is a direct consequence of the following lemma.

\begin{lemma}\label{lem:hw:volterra}
For each $y\in\mathbb Y$, the Volterra integral operator
\begin{equation*}
\mathcal I_y\colon C(\mathbb R_+)\to \left\{h\in C^1(\mathbb R_+) \colon h(0)=0\right\}
\end{equation*}
is bijective.
\end{lemma}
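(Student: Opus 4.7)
The plan is to recognize $\mathcal{I}_y$ as convolution with a smooth kernel that is nonzero at the origin, and then to reduce the equation $\mathcal{I}_y(\theta)=h$ to a classical Volterra equation of the second kind.

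First I would introduce the kernel $k_y(\tau) := \langle \Psi_y'(\tau), e_1\rangle$, so that $\mathcal{I}_y(\theta) = \theta * k_y$. The crucial point is to evaluate $k_y(0)$. From the Riccati equation \eqref{equ:hw:riccati_psi} with $\Psi_y(0)=0$, together with the fact that $R_y$ has no constant term (as $R_y^i(u) = \tfrac12\langle u,\alpha_y^i u\rangle + \langle u,\beta_y^i\rangle$), one obtains $\Psi_y'(0) = -\lambda$. Hence $k_y(0) = -\langle \lambda, e_1\rangle$, which is nonzero by the standing assumption in \autoref{sec:hw:shortrate}. Moreover $k_y \in C^\infty(\mathbb{R}_+)$ since $\Psi_y \in C^\infty(\mathbb{R}_+;\mathbb{R}^d)$.

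Next I would convert the equation $\mathcal{I}_y(\theta) = h$ (with $h \in C^1(\mathbb{R}_+)$, $h(0)=0$) into an equation of the second kind. Differentiating
\[
h(\tau) = \int_0^\tau \theta(s) k_y(\tau-s)\,ds
\]
under the integral, which is legal since $k_y \in C^1$, yields
\[
h'(\tau) = k_y(0)\,\theta(\tau) + \int_0^\tau k_y'(\tau-s)\,\theta(s)\,ds.
\]
Conversely, if $\theta \in C(\mathbb{R}_+)$ satisfies this identity, one recovers $\mathcal{I}_y(\theta)=h$ by integrating from $0$ to $\tau$ and applying Fubini to the double integral, using $h(0)=0$. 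Thus the two equations are equivalent, and bijectivity of $\mathcal{I}_y$ reduces to unique solvability of
\[
\theta(\tau) + \int_0^\tau K_y(\tau,s)\,\theta(s)\,ds = \frac{h'(\tau)}{k_y(0)}, \qquad K_y(\tau,s) := \frac{k_y'(\tau-s)}{k_y(0)},
\]
for every continuous right-hand side.

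Finally I would invoke the classical theory of Volterra integral equations of the second kind with continuous kernel: on any compact interval $[0,T]$, the operator $\theta \mapsto \int_0^\cdot K_y(\cdot,s)\theta(s)\,ds$ is a bounded Volterra operator on $C([0,T])$, its $n$-th iterate has norm bounded by $(M_T T)^n/n!$ where $M_T = \sup_{0\le s\le \tau \le T}|K_y(\tau,s)|$, so $\mathrm{Id} + \text{(the operator)}$ is invertible on $C([0,T])$ with a Neumann series (or equivalently, one applies the Banach fixed point theorem with a weighted sup norm $\|\theta\|_\omega = \sup_\tau e^{-\omega \tau}|\theta(\tau)|$). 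The unique solutions on the increasing intervals $[0,T]$ are consistent by uniqueness and therefore paste together into a unique $\theta \in C(\mathbb{R}_+)$, yielding both injectivity and surjectivity of $\mathcal{I}_y$. The only substantive step is the verification $k_y(0)\neq 0$; everything after that is standard Volterra theory.
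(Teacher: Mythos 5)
Your proof is correct and follows essentially the same route as the paper: the paper verifies exactly the two hypotheses you isolate --- the diagonal value of the kernel equals $-\langle\lambda,e_1\rangle\neq 0$ (via $\Psi_y'(0)=-\lambda$) and the kernel together with its derivative is continuous --- and then cites a standard result on first-kind Volterra equations (Brunner, Theorem~2.1.8), whose proof is precisely your differentiation-to-second-kind reduction followed by the Neumann series. Your version is simply self-contained where the paper's is a citation.
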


\begin{proof}
This follows from \cite[Theorem 2.1.8]{brunner2004collocation}, noting that the integral kernel  
\begin{equation}\label{equ:hw:kernel}
K_y(s,t)=\left\langle \Psi_y'(t-s),e_1\right\rangle, \qquad\text{for all } t\geq s\geq 0,
\end{equation}
satisfies $|K_y(t,t)|=|\langle\lambda,e_1\rangle|>0$ and both $K_y$ and $\partial_t K_y$ are continuous.
\end{proof}

Note that calibration of a Hull-White extension $\theta$ requires the inversion of the \emph{Volterra integral operator} $\mathcal I_y$. Here the assumption $\langle\lambda,e_1\rangle\neq 0$ is needed. 

\subsection{Numerical solution of the Volterra equation}\label{sec:hw:numerical_volterra}

In the absence of analytical formulas, the Volterra equation has to be solved numerically. We are aiming at a second order approximation to keep the global error of the simulation scheme of order one.
Thus, we approximate the Volterra integral operator $\mathcal I_y$ by the trapezoid rule, which yields an operator $\widehat{\mathcal I}_y$ given by
\begin{equation*}
\widehat{\mathcal I}_y(\theta)(\tau_n)
=
\delta\left(\frac{1}{2}\langle\Psi_y'(\tau_n),e_1\rangle\theta(0)
\vphantom{\sum_{i=1}^{n-1}}
+\sum_{i=1}^{n-1}\langle\Psi_y'(\tau_n-\tau_i),e_1\rangle\theta(\tau_i)
+\frac{1}{2}\langle\Psi_y'(0),e_1\rangle\theta(\tau_n)\right),
\end{equation*}
for each $n\in\mathbb N^+$, where $\tau_n=n\delta$ constitutes a uniform grid of step size $\delta>0$.
Approximate solutions $\widehat \theta$ can be constructed by solving for continuous piecewise linear (i.e.\ linear on each interval $[\tau_n,\tau_{n+1}]$) functions $\widehat\theta$ satisfying
\begin{equation}\label{equ:hw:numerical_volterra}
\widehat \theta(0)=\frac{g'(0)}{\langle\Psi_y'(0),e_1\rangle},
\qquad
\widehat{\mathcal I}_y(\widehat \theta)(\tau_n)=g(\tau_n),
\qquad
\text{for all } n\in \mathbb N^+.
\end{equation}
As $\widehat{\mathcal I}_y$ is a second order approximation of $\mathcal I_y$, it is not surprising that $\widehat \theta$ is a second order approximation of $\theta$.

\begin{lemma}\label{lem:hw:numerical_volterra}
Let $(x,y)\in\mathbb X\times\mathbb Y$ and $g\colon \mathbb R_+\to\mathbb R$ piecewise $C^4$ with continuous second derivatives. If $g(0)=0$, then there is a unique piecewise linear function $\widehat \theta \in C(\mathbb R_+)$ satisfying \eqref{equ:hw:numerical_volterra}.
Moreover, $\widehat \theta$ is a second order approximation of the exact solution $\theta$ of the Volterra equation $\mathcal I_y(\theta)=g$ in the sense that for each $T\in\mathbb R_+$,
\begin{align*}
\sup_{t\in[0,T]} \lvert \widehat\theta(t)-\theta(t)\rvert \leq C \delta^2,
\end{align*}
where $C$ is a constant depending only on $T$ and $g$. 
\end{lemma}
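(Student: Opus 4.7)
I would first unfold \eqref{equ:hw:numerical_volterra} as a lower triangular recursion in the grid values $\widehat\theta(\tau_0),\widehat\theta(\tau_1),\ldots$. The first equation determines $\widehat\theta(0)$ directly, and at each subsequent step $n\geq 1$ the coefficient of $\widehat\theta(\tau_n)$ in the scheme is $\tfrac\delta 2\langle\Psi_y'(0),e_1\rangle=-\tfrac\delta 2\langle\lambda,e_1\rangle$, which is nonzero by the standing hypothesis $\langle\lambda,e_1\rangle\neq 0$. Thus forward substitution gives the grid values uniquely; piecewise linear interpolation yields the claimed $\widehat\theta\in C(\mathbb R_+)$.

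\textbf{Regularity of the exact $\theta$ and setup of the error equation.} Differentiating $\mathcal I_y(\theta)=g$ and using that $K_y(\tau,\tau)\equiv -\langle\lambda,e_1\rangle$ is a nonzero constant, one rewrites the equation in the equivalent second-kind form
\begin{equation*}
-\langle\lambda,e_1\rangle\,\theta(\tau)+\int_0^\tau\theta(s)\,\partial_\tau K_y(s,\tau)\,ds=g'(\tau).
\end{equation*}
Since $K_y$ is smooth and $g$ is piecewise $C^4$, a standard bootstrap shows $\theta$ is piecewise $C^2$ on the same breakpoints as $g$, with norms controlled by $g$ and $T$; moreover $\theta(0)=g'(0)/\langle\Psi_y'(0),e_1\rangle=\widehat\theta(0)$, so the nodal error $v_n:=\widehat\theta(\tau_n)-\theta(\tau_n)$ satisfies $v_0=0$. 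Subtracting $\widehat{\mathcal I}_y(\theta)(\tau_n)$ from $\widehat{\mathcal I}_y(\widehat\theta)(\tau_n)=g(\tau_n)=\mathcal I_y(\theta)(\tau_n)$ yields
\begin{equation*}
\widehat{\mathcal I}_y(v)(\tau_n)=\rho_n,\qquad \rho_n:=\mathcal I_y(\theta)(\tau_n)-\widehat{\mathcal I}_y(\theta)(\tau_n),
\end{equation*}
and the composite trapezoidal error bound for the $C^2$ integrand $s\mapsto\theta(s)K_y(s,\tau_n)$ (with piecewise $C^4$ pieces delivering the sharp Euler--Maclaurin leading term) gives $|\rho_n|\leq C(T,g)\delta^2$ and, more importantly, $|\rho_n-\rho_{n-1}|\leq C(T,g)\delta^3$ for the one-step quadrature increment.

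\textbf{From the error equation to a second-kind Gr\"onwall.} The main technical step is to invert the discrete first-kind operator $\widehat{\mathcal I}_y$. A direct Gr\"onwall on $\widehat{\mathcal I}_y(v)(\tau_n)=\rho_n$ fails because the diagonal carries a factor $\delta$ while the history sum does not. The remedy is to mimic on the discrete level the differentiation that produced the second-kind form for $\theta$: subtracting the identity at $\tau_{n-1}$ from the one at $\tau_n$, rearranging the telescoping weights, and using $K_y(\tau_i,\tau_n)-K_y(\tau_i,\tau_{n-1})=O(\delta)$ uniformly, one obtains a recursion of the shape
\begin{equation*}
-\langle\lambda,e_1\rangle\,\frac{v_n+v_{n-1}}{2}+\delta\sum_{i=0}^{n-1}c_i^{(n)}v_i=\delta^{-1}(\rho_n-\rho_{n-1}),
\end{equation*}
with coefficients $c_i^{(n)}$ bounded uniformly in $n$ and a right-hand side of size $O(\delta^2)$. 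Since $\langle\lambda,e_1\rangle\neq 0$, solving for $v_n$ produces a genuine discrete second-kind Volterra inequality to which the standard discrete Gr\"onwall lemma applies, yielding $\max_{\tau_n\leq T}|v_n|\leq C(T,g)\delta^2$. The argument is the product-trapezoidal convergence theory for first-kind Volterra equations with smooth kernel and $|K_y(\tau,\tau)|>0$, analogous to the setting of Brunner's Theorem~2.1.8 already invoked for \autoref{lem:hw:volterra}.

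\textbf{Main obstacle and conclusion.} The delicate point is exactly this passage from a first-kind discrete equation to a Gr\"onwall-amenable second-kind one; the weight-telescoping needs the cancellation $K_y(\tau_n,\tau_n)=K_y(\tau_{n-1},\tau_{n-1})$ (both equal $-\langle\lambda,e_1\rangle$) to produce the symmetric $(v_n+v_{n-1})/2$ on the diagonal. Once the nodal bound is in hand, interpolating linearly between nodes adds an $O(\delta^2)$ error since $\theta$ is piecewise $C^2$, giving $\sup_{t\in[0,T]}|\widehat\theta(t)-\theta(t)|\leq C(T,g)\delta^2$, as claimed.
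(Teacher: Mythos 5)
Your proposal is a from-scratch reconstruction of the convergence theory that the paper simply cites (Linz's Theorem~3 and Brunner's Theorems~2.4.5/2.4.8), and most of the skeleton is right: the forward-substitution argument for existence and uniqueness, the reduction of $\mathcal I_y(\theta)=g$ to a second-kind equation using $K_y(\tau,\tau)=\langle\Psi_y'(0),e_1\rangle=-\langle\lambda,e_1\rangle\neq 0$, the consistency bounds $|\rho_n|\leq C\delta^2$ and $|\rho_n-\rho_{n-1}|\leq C\delta^3$, and the differencing of consecutive discrete equations that produces the diagonal $-\langle\lambda,e_1\rangle(v_n+v_{n-1})/2$.

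The gap is in the very step you flag as the main obstacle. After solving for $v_n$ the recursion reads $v_n=-v_{n-1}+O(\delta^2)+O(\delta)\sum_{i<n}|v_i|$, and this is \emph{not} a Gr\"onwall-amenable second-kind inequality: the coefficient of $v_{n-1}$ has modulus one, so the homogeneous recursion has the undamped oscillating solution $(-1)^n$. Taking absolute values and applying the standard discrete Gr\"onwall lemma gives $|v_n|\leq|v_{n-1}|+C\delta^2+C\delta\sum_{i<n}|v_i|$ and hence only $\max_n|v_n|\leq C(T)\,n\delta^2= O(\delta)$ --- you lose exactly one order at this point. This is precisely why the trapezoidal method for first-kind Volterra equations is harder to analyse than the midpoint method and why the cited theorems are nontrivial. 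To recover $O(\delta^2)$ one must exploit the alternation: writing $v_n=\sum_{k\leq n}(-1)^{n-k}e_k$ (with $v_0=0$) and Abel-summing, the oscillating component is controlled only if the \emph{second} differences of the consistency errors are $O(\delta^4)$, which requires the Euler--Maclaurin expansion $\rho_n=-\tfrac{\delta^2}{12}\bigl(\partial_s f_n(\tau_n)-\partial_s f_n(0)\bigr)+O(\delta^4)$ of the trapezoidal error --- this is where the piecewise $C^4$ hypothesis on $g$ actually enters. (Equivalently, Linz decomposes the error into a smooth part and an oscillating part, each of amplitude $O(\delta^2)$.) You mention the Euler--Maclaurin leading term in passing but then use only the first-difference bound $|\rho_n-\rho_{n-1}|\leq C\delta^3$, which is insufficient; as written, your final step delivers only first-order convergence. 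The remainder of the argument (nodal-to-uniform error via piecewise linear interpolation of a piecewise $C^2$ solution) is fine.
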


The lemma will be used to show the numerical invertibility of the calibration operator $\mathcal H_y(\cdot,x)$. In this context, the smoothness assumption on the right-hand side $g$ of the Volterra equation is satisfied if the yields are interpolated sufficiently smoothly. 

\begin{proof}
This follows from \cite[Theorem 3]{linz1969numerical} and \cite[Example 2.4.5 and Theorems 2.4.5, 2.4.8]{brunner2004collocation}, noting that the integral kernel \eqref{equ:hw:kernel} is $C^4$ and strictly bounded away from zero along the diagonal by our assumption $\langle\Psi_y'(0),e_1\rangle=\langle\lambda,e_1\rangle\neq0$.
\end{proof}

Note that solving for $\widehat \theta$ can be performed efficiently because \eqref{equ:hw:numerical_volterra} is a linear system for $(\widehat \theta(\tau_0),\dots,\widehat \theta(\tau_n))$ of lower triangular form. 

\subsection{Estimation of the affine coefficients}\label{sec:hw:estimation}

We first discuss how, in principle, the affine coefficients can be identified from covariations of yields and then present some practical considerations on the construction of estimators. 

Let $r(t,\tau)$ denote the yield of a zero-coupon bond held from $t$ to $t+\tau$, i.e.\
\begin{equation}\label{equ:hw:yields}
r(t,\tau) = -\frac{1}{\tau}\log P(t,t+\tau),
\qquad
\text{for all } t,\tau\in\mathbb R_+.
\end{equation}
Then by \autoref{equ:hw:p} the quadratic covariation of yields with maturity $\tau_i$ and $\tau_j$ satisfies
\begin{equation}\label{equ:hw:inf_covariation}
\frac{d}{dt}\left[r(\cdot,\tau_i),r(\cdot,\tau_j)\right](t)
=
\frac{1}{\tau_i\tau_j}\Psi_y(\tau_i)^\top \left(a_y+\sum_{k=1}^d\alpha_y^k X^k(t)\right)\Psi_y(\tau_j).
\end{equation}
Assume that the left-hand side of \eqref{equ:hw:inf_covariation} is known as a function of $\tau_i$ and $\tau_j$, and that the components of $\Psi_y$ are functionally independent. Then the  components of $\Psi_y$ and the matrix $A_y(X(t))$ can be identified. The function $\Psi_y$ usually determines the coefficients $\alpha_y$ and $\beta_y$ uniquely (see \autoref{equ:hw:riccati_psi}). Furthermore, taking account of the admissibility conditions on the matrices $a_y$ and $\alpha^k_y$ (see \cite[Definition~2.6]{duffie2003affine}) one can identify the $\mathbb R_+$-valued components of $X(t)$ and the matrix $a_y$. 

Note that \eqref{equ:hw:inf_covariation} is derived solely from the diffusion coefficient of the yield dynamics and therefore is invariant under Girsanov's change of measure. Thus, the coefficients $a_y,\alpha_y,\beta_y$ can be estimated from real world observations without specifying the market price of risk. Of course the market price of risk enters as a bias in the estimation, but the estimators do not depend on it. Moreover, under the model hypothesis the estimates do not depend on the choice of $\tau_i,\tau_j$, which provides a means to reject ill-suited models. 

The remaining $\mathbb R$-valued components of $X(t)$ and the coefficient $b_y$ do not appear in the quadratic covariations~\eqref{equ:hw:inf_covariation}. We now discuss how they can be estimated. First, note that for one-factor models $b_y$ is redundant and can be normalised to zero thanks to the Hull-White extension. In the multi-factor case only the first component of the vector $b_y$ is redundant. Second, note that the short end of the forward rate curve gives a scalar condition on $X(t)$, which allows one to fully identify $X(t)$ if $X(t)$ has only a single $\mathbb R$-valued component. In the general multi-factor case, however, some components of $b_y$ and $X(t)$ remain undetermined. They may be calibrated to the prevailing market yield curve by regression methods. Alternatively, they may be estimated by econometric methods. However, these require a market price of risk specification. We do not discuss this topic here and refer to \cite{harms2016consistent} for further details.

In practise, the quadratic covariation \eqref{equ:hw:inf_covariation} must be estimated from yields $\widehat r(t_n,\tau_i)$ given by the market for times $t_n=n\delta$ and times to maturity $\tau_i$. A naive estimator is the realised covariation, which is defined as
\[
[\widehat r(\cdot,\tau_i),\widehat r(\cdot,\tau_j)](t_n) 
=\sum_{k=1}^n 
\big(\widehat r(t_k,\tau_i)-\widehat r(t_{k-1},\tau_i)\big)
\big(\widehat r(t_k,\tau_j)-\widehat r(t_{k-1},\tau_j)\big).
\]
Here also other estimators such as, e.g., Fourier estimators, as introduced by Paul Malliavin and Maria-Elvira Mancino, could be used, see, e.g., \cite{cuchiero2013fourier} for some recent developments. 
Fixing a time window of length $M$ and a time $t_n$, one has
\begin{equation}\label{equ:hw:covariation}
\begin{aligned}
\frac{[\widehat r(\cdot,\tau_i),\widehat r(\cdot,\tau_j)](t_n)
-[\widehat r(\cdot,\tau_i),\widehat r(\cdot,\tau_j)](t_{n-M})}
{t_n-t_{n-M}}
\approx
\frac{1}{\tau_i\tau_j}\Psi_y(\tau_i)^\top a_y\Psi_y(\tau_j)
\\
+\frac{\delta}{\tau_i\tau_j(t_n-t_{n-M})}\sum_{k=1}^d\Psi_y(\tau_i)^\top \alpha^k_y\Psi_y(\tau_j)\sum_{m=n-M+1}^{n} X^{k}(t_m).
\end{aligned}
\end{equation}
Therefore, for any time $t_n$ and any selection of times to maturity $\tau_i,\tau_j$, estimators $\widehat a_y$, $\widehat\alpha_y$, $\widehat \beta_y$, $\widehat X^1(t_n)$, \dots, $\widehat X^{d_1}(t_n)$ can be constructed by solving for the best fit in \autoref{equ:hw:covariation}.  

\section{Consistent recalibration of affine short rate models}\label{sec:crc}

\subsection{Overview}

The constant parameter process $y$ of the previous section is now replaced by a stochastic process $Y=(Y(t))_{t\geq 0}$. The situation is particularly simple when $Y$ is piecewise constant. In this case, the Hull-White extension is recalibrated to the prevailing yield curve (i.e., the yield curve given by the model with old parameters) each time the parameter process changes. Later on, the concepts are generalised to arbitrary parameter processes $Y$, resulting in our definition of general CRC models. Geometrically, these models ``locally look like'' Hull-White extended affine short rate models with fixed parameter $y$. This is made precise in \autoref{sec:crc:geometry}. A semigroup point of view is taken in \autoref{sec:crc:semigroup}, leading to an interpretation of CRC models with piecewise constant $Y$ as splitting schemes with respect to a time grid for more general CRC models. 
 
\subsection{Setup and notation}\label{sec:crc:setup}

We recall the notion of admissible parameters from \autoref{ass:hw:x}. 
For all $(s,x)\in \mathbb R_+\times\mathbb X$ and all admissible parameters $(y,\theta)\in\mathbb Y\times C(\mathbb R_+)$, we let $X=X^{s,x}_{y,\theta}$ denote the unique solution on $[s,\infty)$ of the SDE \eqref{equ:hw:x} with $\theta(t)$ replaced by $\theta(t-s)$ and initial condition $X(s)=x$. We assume that $\mathbb H$ is a Hilbert space of forward rate curves satisfying \autoref{ass:hw:h} simultaneously for all $y\in\mathbb Y$. We fix a strictly increasing sequence of non-negative deterministic times $(t_n)_{n \in \mathbb N_0}$.

\subsection{CRC models with piecewise constant parameter process}\label{sec:crc:piecewise}

We assume that the parameter process $Y$ is piecewise constant, i.e.\ for each $t\in[t_n,t_{n+1})$ we have $Y(t)=Y(t_n)$.
\begin{definition}[CRC models with piecewise constant $Y$]\label{def:crc}
A stochastic process $(h,X,Y)$ with values in $\mathbb H\times \mathbb X\times \mathbb Y$ is called a \emph{CRC model} if there exists a stochastic process $\theta$ with values in $C(\mathbb R_+)$ such that the following conditions are satisfied, for each $n\in\mathbb N_0$:
\begin{enumerate}[(i)]
\item \label{def:crc:item1} The Hull-White extension $\theta$ on $[t_n,t_{n+1}]$ is determined by calibration to $h(t_n)$: 
\begin{equation*}
h(t_n)(0)=\ell+\langle\lambda,X(t_n)\rangle,\qquad\theta(t_n)=\mathcal C_{Y(t_n)}\left(h(t_n),X(t_n)\right),
\end{equation*}
and for $t\in[t_n,t_{n+1}]$
\begin{equation*}
\theta(t)=\mathcal S(t-t_n)\theta(t_n). 
\end{equation*}
\item \label{def:crc:item2} The evolution of $X$ on $[t_n,t_{n+1}]$ corresponds to the Hull-White extended affine model determined by the parameters $(Y(t_n), \theta(t_n))$:
\begin{equation*}
X(t)=X^{t_n,X(t_n)}_{Y(t_n),\theta(t_n)}(t),\qquad t\in[t_n,t_{n+1}],
\end{equation*}
where $X^{s,x}_{y,\theta}$ is the solution operator of SDE \eqref{equ:hw:x} defined in \autoref{sec:crc:setup}. Here, \autoref{ass:hw:x} is assumed to hold for the parameters $(Y(t_n),\theta(t_n)) \in \mathbb Y\times C(\mathbb R_+)$. 
\item \label{def:crc:item3} The evolution of $h$ on $[t_n,t_{n+1}]$ is determined by $X$ according to the prevailing Hull-White extended affine model:
\begin{equation*}
h(t)=\mathcal H_{Y(t_n)}\big(\theta(t),X(t)\big),\qquad t\in[t_n,t_{n+1}].
\end{equation*}
\end{enumerate}
\end{definition} 
We use the same symbols $h$ and $X$ as in \autoref{equ:hw:hjm} to denote CRC models. The abuse of notation is motivated by the fact that $(h,X)$ evolves on $[t_n,t_{n+1}]$ according to \eqref{equ:hw:hjm} with parameters $(Y(t_n),\theta(t_n))$. Note that the process $X$ in \autoref{def:crc} is continuous because closed intervals $[t_n,t_{n+1}]$ are used in point (ii). We emphasise that the recalibration step \eqref{def:crc:item1}  happens on a discrete time scale because the parameter process $Y$ is constant on each $[t_n,t_{n+1}]$. By construction the process $(h,X)$ is continuous at each time $t_n$.

\subsection{Simulation}\label{sec:crc:sim}

If we assume that a stochastic model for the evolution of the parameter process $Y$ is specified, one can sample $Y$ on the time grid $(t_n)_{n\in\mathbb N_0}$. Then CRC models as in \autoref{def:crc} can be simulated by applying iteratively steps \eqref{def:crc:item1}--\eqref{def:crc:item3}.

\begin{algorithm}[Simulation]\label{alg:crc}
Given $h(t_0)$ and the process $Y$, calculate $(h,X,Y,\theta)$ on the time grid $(t_n)_{n\in\mathbb N_0}$ by iteratively executing steps \eqref{def:crc:item1}--\eqref{def:crc:item3} of \autoref{def:crc}. Abort with an error if the assumption in step \eqref{def:crc:item2} is not satisfied, for any $n\in\mathbb N_0$.
\end{algorithm}

The algorithm is illustrated in \autoref{fig:crc_affine_main}. Note that the forward rate increments are calculated from increments of the affine factor process $X$, which can typically be simulated with high orders of accuracy and proper treatment of boundary conditions. These advantages are thanks to the affine structure of the CRC increments and are not available for general HJM models.

\begin{figure}[h]
\centerline{\xymatrix@=37pt{ 
&
{\circ}
\ar`r/32pt[rd] [rd]^(-1){\text{update }h}
\save 
+<0pt,\baselineskip>*{\big(h(t_n),X(t_{n+1}),Y(t_n),\theta(t_n)\big)}
\restore
&
\\
{\circ}
\ar`u[ru]^(2){\text{update } X} [ru]
\save 
+<-5pt,0pt> *!R{\big(h(t_{n}),X(t_{n}),Y(t_n),\theta(t_n)\big)} 
\restore
&
&
{\circ}
\ar`d[ld]^(2){\text{update } Y} [ld]
\save
+<5pt,0pt> *!L{\big(h(t_{n+1}),X(t_{n+1}),Y(t_n),\theta(t_n)\big)}
\restore
\\
&
{\circ}
\ar`l[lu] [lu]^(-1){\substack{\text{update } \theta \\ \text{increase } n}}
\save
+<0pt,-\baselineskip>*{\big(h(t_{n+1}),X(t_{n+1}),Y(t_{n+1}),\theta(t_n)\big)}
\restore
&
}}
\caption{Simulation of CRC models. Updating $\theta$, $X$, $h$ is done using \eqref{def:crc:item1}, \eqref{def:crc:item2}, \eqref{def:crc:item3} of \autoref{def:crc}, respectively. Updating $Y$ is done using the exogenously given model for $Y$. }
\label{fig:crc_affine_main}
\end{figure}
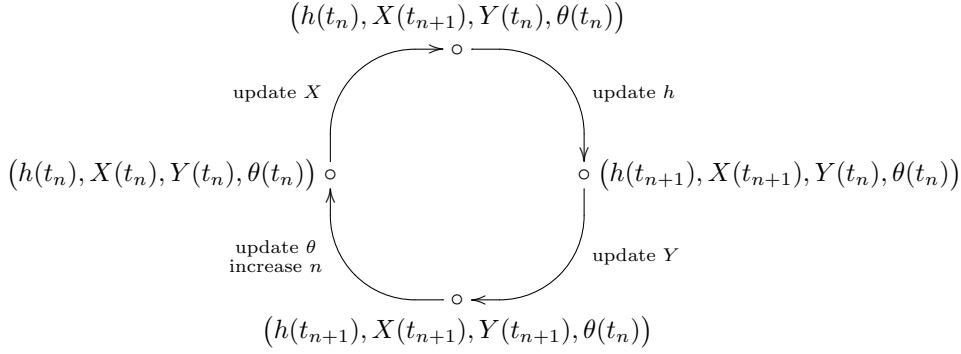

\subsection{Efficient updating of forward rate curves}\label{sec:crc:eff}

Updating the curve of forward rates as prescribed by \autoref{def:crc}\eqref{def:crc:item3} involves calculating integrals on time intervals $[0,\tau]$, for large values of $\tau$ (see \autoref{sec:hw:calibration} for the formulas). A significant speed-up can be obtained when this update is done using the alternative formula provided by the following lemma, which involves only integrals over time intervals of length $\delta=t_{n+1}-t_n$. 

\begin{lemma}[Efficient updating of forward rate curves] \label{lem:crc:eff} 
\autoref{def:crc}\eqref{def:crc:item3} can be rewritten as
\begin{multline}\label{equ:im:eff}
h(t_{n+1})=\mathcal S(\delta)h(t_n)+\mathcal S(\delta)\Phi_{Y(t_n)}'-\Phi_{Y(t_n)}'
+\left\langle \mathcal S(\delta)\Psi_{Y(t_n)}',X(t_n)\right\rangle
\\
-\left\langle \Psi_{Y(t_n)}',X(t_{n+1})\right\rangle
+\int_0^{\delta} \theta(t_n)(s)\left\langle \mathcal S(\delta-s)\Psi_{Y(t_n)}',e_1\right\rangle ds,
\end{multline}
where $\delta=t_{n+1}-t_n$.
\end{lemma}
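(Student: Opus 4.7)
The plan is a direct calculation: expand both sides of \eqref{equ:im:eff} using the explicit formula for $\mathcal{H}_y$ from \autoref{sec:hw:calibration}, use the shift property of $\theta$ on $[t_n,t_{n+1}]$ from \autoref{def:crc}\eqref{def:crc:item1}, and reduce everything to the Volterra integral of $\theta(t_n)$ against the kernel built from $\Psi_{Y(t_n)}'$.

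First I would write out, for fixed $\tau \geq 0$, the two forward rate values
\begin{align*}
h(t_{n+1})(\tau) &= \ell - \int_0^\tau \theta(t_{n+1})(s)\langle \Psi_{Y(t_n)}'(\tau-s),e_1\rangle\,ds - \Phi_{Y(t_n)}'(\tau) - \langle \Psi_{Y(t_n)}'(\tau),X(t_{n+1})\rangle,\\
h(t_n)(\tau+\delta) &= \ell - \int_0^{\tau+\delta}\theta(t_n)(s)\langle \Psi_{Y(t_n)}'(\tau+\delta-s),e_1\rangle\,ds - \Phi_{Y(t_n)}'(\tau+\delta) - \langle \Psi_{Y(t_n)}'(\tau+\delta),X(t_n)\rangle,
\end{align*}
using \autoref{def:crc}\eqref{def:crc:item3} and the fact that $(\mathcal{S}(\delta)h(t_n))(\tau)=h(t_n)(\tau+\delta)$. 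Note that along $[t_n,t_{n+1}]$ the parameter $y=Y(t_n)$ is held fixed, so the same $\Phi_y,\Psi_y$ appear in both formulas.

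Next, by \autoref{def:crc}\eqref{def:crc:item1}, $\theta(t_{n+1})=\mathcal{S}(\delta)\theta(t_n)$, i.e.\ $\theta(t_{n+1})(s)=\theta(t_n)(s+\delta)$. Substituting this into the integral in the expression for $h(t_{n+1})(\tau)$ and changing variables $u=s+\delta$ transforms $\int_0^\tau$ into $\int_\delta^{\tau+\delta}\theta(t_n)(u)\langle\Psi_{Y(t_n)}'(\tau+\delta-u),e_1\rangle\,du$. Splitting the integral in the second display at $\delta$ lets me solve for this same tail integral in terms of $h(t_n)(\tau+\delta)$, the short integral $\int_0^\delta\theta(t_n)(s)\langle\Psi_{Y(t_n)}'(\tau+\delta-s),e_1\rangle\,ds$, and the remaining $\Phi_y',\Psi_y'$ terms.

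Plugging this substitution into the expression for $h(t_{n+1})(\tau)$, the $\ell$ terms cancel, and rewriting $\Phi_y'(\tau+\delta)=(\mathcal{S}(\delta)\Phi_y')(\tau)$, $\Psi_y'(\tau+\delta)=(\mathcal{S}(\delta)\Psi_y')(\tau)$, and $\Psi_y'(\tau+\delta-s)=(\mathcal{S}(\delta-s)\Psi_y')(\tau)$ yields \eqref{equ:im:eff} pointwise in $\tau$. Since all occurring functions lie in $\mathbb{H}$ by \autoref{ass:hw:h}\eqref{ass:hw:h:item2}--\eqref{ass:hw:h:item3}, the identity holds as an equality in $\mathbb{H}$.

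There is no real obstacle here; the only thing to be careful about is the bookkeeping of the two integration ranges and the verification that the argument shifts combine correctly (in particular that $\mathcal{S}(\delta-s)$, not $\mathcal{S}(\delta+s)$, appears under the integral), which is a direct consequence of the change of variable $u=s+\delta$ combined with the identity $\Psi_y'(\tau+\delta-s)=\Psi_y'((\delta-s)+\tau)=(\mathcal{S}(\delta-s)\Psi_y')(\tau)$.
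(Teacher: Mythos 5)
Your proposal is correct and follows essentially the same route as the paper: both reduce the claim to the identity $h(t_{n+1})-\mathcal S(\delta)h(t_n)=\mathcal H_{Y(t_n)}\big(\mathcal S(\delta)\theta(t_n),X(t_{n+1})\big)-\mathcal S(\delta)\mathcal H_{Y(t_n)}\big(\theta(t_n),X(t_n)\big)$ and then to the commutation relation $\mathcal S(\delta)\mathcal I_y(\theta)-\mathcal I_y(\mathcal S(\delta)\theta)=\int_0^\delta\theta(s)\langle\mathcal S(\delta-s)\Psi_y',e_1\rangle\,ds$, which the paper states as an operator identity and you verify explicitly by the change of variables $u=s+\delta$ and splitting the integral at $\delta$. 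Your bookkeeping (including the appearance of $\mathcal S(\delta-s)$ rather than $\mathcal S(\delta+s)$, and the use of the fixed parameter $Y(t_n)$ on the closed interval $[t_n,t_{n+1}]$) is accurate.
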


\begin{proof}
By conditions \eqref{def:crc:item1} and \eqref{def:crc:item3} of \autoref{def:crc}, 
\begin{align*}
h(t_{n+1})-\mathcal S(\delta)h(t_n) 
&= 
\;\mathcal H_{Y(t_n)}\big(\mathcal S(\delta)\theta(t_n),X(t_{n+1})\big)
-\mathcal S(\delta)\mathcal H_{Y(t_n)}\big(\theta(t_n),X(t_n)\big)
\\
&= 
\ell-\mathcal I_{Y(t_n)}\big(\mathcal S(\delta)\theta(t_n)\big)
-\Phi_{Y(t_n)}'-\langle\Psi_{Y(t_n)}',X(t_{n+1})\rangle
\\&\quad
-\ell+\mathcal S(\delta)\mathcal I_{Y(t_n)}\big(\theta(t_n)\big)
+\mathcal S(\delta)\Phi_{Y(t_n)}'\langle\mathcal S(\delta)\Psi_{Y(t_n)}',X(t_n)\rangle.
\end{align*}
Now the assertion of the lemma follows from the relationships
\begin{equation*}
\mathcal S(\delta) \mathcal I_y(\theta)-\mathcal I_y(\mathcal S(\delta) \theta)
=\int_0^\delta \theta(s)\langle \mathcal S(\delta-s)\Psi_y',e_1\rangle ds,
\qquad\text{for all } (\delta,\theta) \in \mathbb R_+\times C(\mathbb R_+),
\end{equation*}
which can easily be verified from the definition. 
\end{proof}

\subsection{Bond prices and forward rates}

\begin{theorem}[Zero-coupon bond price and forward rate]\label{thm:crc}
Let $(h,X,Y)$ be a CRC model as in \autoref{def:crc} with corresponding process $\theta$. Define 
\begin{align*}
P(t,T)&=e^{-\int_t^T h(t,s-t)ds}, 
&& 
r(t)=h(t,0)=\ell+\langle\lambda,X(t)\rangle, 
&& 
B(t)=e^{\int_0^t r(s)ds}.
\end{align*}
Then the discounted price process $t\mapsto P(t,T)/B(t)$ is a $\mathbb P$-local martingale, for each $T\geq 0$. In this sense, the bond market is \emph{free of arbitrage}. Moreover, the following \emph{affine bond pricing formulas} hold:
\begin{align*}
\log(P(t,T))&=
-\ell(T-t)
+\int_0^{T-t} \theta(t)(s)\langle\Psi_{Y(t)}(T-t-s),e_1\rangle ds
\\&\quad
+ \Phi_{Y(t)}(T-t)+\langle\Psi_{Y(t)}(T-t), X(t)\rangle,
\\
h(t,\tau)&=\ell-\int_0^\tau \theta(t)(s)\langle \Psi_{Y(t)}'(\tau-s),e_1\rangle ds
-\Phi_{Y(t)}'(\tau)-\langle \Psi_{Y(t)}'(\tau), X(t)\rangle.
\end{align*}
\end{theorem}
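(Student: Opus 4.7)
The claim has two parts — the affine formulas for $h$ and $P$, and the local martingale property — and I would address them in that order.

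For the affine formula for $h$, I would apply condition (iii) of \autoref{def:crc} directly: for $t\in[t_n,t_{n+1})$ the parameter is frozen at $Y(t)=Y(t_n)$, and $h(t)=\mathcal H_{Y(t_n)}(\theta(t),X(t))$ with $\theta(t)=\mathcal S(t-t_n)\theta(t_n)$ by (i). Expanding the definition of $\mathcal H_y$ given in \autoref{sec:hw:calibration} yields exactly the claimed expression for $h(t,\tau)$. At the right endpoint $t=t_{n+1}$, continuity is automatic because the recalibration in (i) for the next interval is defined as the inverse of $\mathcal H_{Y(t_{n+1})}(\cdot,X(t_{n+1}))$, so the old and new representations of $h(t_{n+1})$ coincide. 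The affine formula for $\log P(t,T)$ then follows by integrating $h(t,\tau)$ over $\tau\in[0,T-t]$, using $\Phi_y(0)=\Psi_y(0)=0$ together with Fubini on the double integral
\begin{equation*}
\int_0^{T-t}\!\!\int_0^\tau \theta(t)(s)\langle\Psi_{Y(t)}'(\tau-s),e_1\rangle\,ds\,d\tau
=\int_0^{T-t}\!\theta(t)(s)\langle\Psi_{Y(t)}(T-t-s),e_1\rangle\,ds.
\end{equation*}

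For the local martingale property of $M(t):=P(t,T)/B(t)$, the key observation is that each interval $[t_n,t_{n+1}]$, conditional on $\mathcal F(t_n)$, is an interval of a classical Hull-White extended affine model with frozen parameters $(Y(t_n),\theta(t_n))$. Concretely, on $[t_n,\infty)$ let $\tilde X$ solve SDE \eqref{equ:hw:x} with Hull-White extension $u\mapsto \theta(t_n)(u-t_n)$, and let $\tilde P(t,T)$ be its bond price. Condition (ii) gives $\tilde X=X$ on $[t_n,t_{n+1}]$, and a change of variables $u=s+t$ together with the shift identity $\theta(t)=\mathcal S(t-t_n)\theta(t_n)$ matches the classical integral $\int_t^T \tilde\theta(u)\langle\Psi_{Y(t_n)}(T-u),e_1\rangle\,du$ with the CRC integral in the theorem's formula. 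Thus $\tilde P(t,T)=P(t,T)$ on $[t_n,t_{n+1}]$. By \autoref{thm:hw:price} and \autoref{ass:hw:x}, $\tilde P(\cdot,T)/B(\cdot)$ is a true $\mathbb P$-martingale conditional on $\mathcal F(t_n)$, so the restriction of $M$ to $[t_n,t_{n+1}]$ inherits the conditional martingale property. Since $M$ is continuous at each grid point $t_n$ (continuity of $h$ is built into the construction), the tower property glues the interval-wise pieces into a global $\mathbb P$-martingale, in particular a local martingale.

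I expect the main obstacle to be the bookkeeping in the interval-wise comparison with the classical model: $\theta$ carries two different conventions — an element of $C(\mathbb R_+)$ representing the current Hull-White extension curve as a function of time-to-maturity in the CRC setting, versus a function of absolute calendar time in \autoref{thm:hw:price}. The shift relation $\theta(t)=\mathcal S(t-t_n)\theta(t_n)$ is what translates between these two viewpoints, and every integral identity needed in both the affine formula and the martingale comparison hinges on substituting it cleanly. The remaining steps are routine manipulations of the formulas from \autoref{sec:hw}.
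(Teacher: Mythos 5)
Your proposal is correct and follows essentially the same route as the paper: the affine formulas are read off from \autoref{def:crc}\eqref{def:crc:item3} together with the definition of $\mathcal H_y$ (with the bond price formula obtained by integrating in $\tau$), and the martingale property is obtained interval-by-interval from the underlying Hull-White extended affine model and then glued at the grid points using continuity of the concatenation. Your added bookkeeping on the calendar-time versus time-to-maturity conventions for $\theta$ is a useful elaboration of what the paper leaves implicit, but it is not a different argument.
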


Note: the following proof shows the stronger statement that discounted bond prices are true martingales. 

\begin{proof}
On each interval $[t_n,t_{n+1}]$, the evolution of forward rate curves $h(t)$ stems from a Hull-White extended affine short rate model. Therefore, for each $T\geq 0$, the discounted price process $t\mapsto P(t,T)/B(t)$ is a martingale on each interval $[t_n,t_{n+1}]$. Moreover, the process is continuously concatenated at the boundaries $t_n$ of the intervals. It follows that the process is a martingale on $[0,\infty)$. The affine bond pricing formulas are equivalent to $h(t)=\mathcal H_{Y(t)}(\theta(t),X(t))$, which holds by \autoref{def:crc}\eqref{def:crc:item3}.
\end{proof}

\subsection{Heath-Jarrow-Morton equation}

\begin{theorem}[HJM equation]\label{thm:crc:hjm}
Let $(h,X,Y)$ be a CRC model as in \autoref{def:crc} with corresponding process $\theta$ and assume that $h(t)\in\mathcal D(\mathcal A)$, for each $t\geq 0$. Then the following properties hold:
\begin{enumerate}[(i)]
\item \label{thm:crc:hjm:item1} the expression $\mathcal C_{Y(t)}(h(t),X(t))$ is well-defined and equals $\theta(t)$, for all $t\geq 0$;
\item \label{thm:crc:hjm:item2} the parameters $(Y(t),\theta(t))$ are admissible, for all $t\geq 0$; and
\item \label{thm:crc:hjm:item3} the process $(h,X)$ is a strong solution of the following SPDE on $\mathbb H\times \mathbb X$:
\begin{equation}\label{equ:crc_hjm}\begin{aligned}
dh(t)&=\Big(\mathcal Ah(t)+\mu^{\mathrm{HJM}}_{Y(t)}\big(X(t)\big)\Big)dt+\sigma^{\mathrm{HJM}}_{Y(t)}\big(X(t)\big)dW(t),
\\
dX(t)&=\sqrt{A_{Y(t)}\big(X(t)\big)}dW(t)+\Big(\mathcal C_{Y(t)}\big(h(t),X(t)\big)(0)e_1+B_{Y(t)}\big(X(t)\big)\Big)dt.
\end{aligned}\end{equation}
\end{enumerate}
\end{theorem}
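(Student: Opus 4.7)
The strategy is to exploit the piecewise-constant structure of $Y$ to reduce everything to \autoref{thm:hw:hjm} applied on each interval $[t_n,t_{n+1}]$, then glue the local statements together using the continuity of $h$ and $X$ at the grid points. I will treat the three claims (i)--(iii) in order.

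For (i), fix $n$ and $t \in [t_n, t_{n+1}]$. By \autoref{def:crc}\eqref{def:crc:item3}, $h(t) = \mathcal H_{Y(t_n)}(\theta(t), X(t))$. Evaluating at $\tau = 0$ and using $\Phi_y(0) = 0$, $\Phi_y'(0) = F_y(0) = 0$, and $\Psi_y'(0) = -\lambda$ (the latter from \autoref{equ:hw:riccati_psi}), one reads off $h(t)(0) = \ell + \langle\lambda, X(t)\rangle$. Since $h(t) \in \mathcal D(\mathcal A) \subset C^1(\mathbb R_+)$ by hypothesis, the assumptions of \autoref{thm:hw:volterra} are met, and its uniqueness clause forces $\mathcal C_{Y(t_n)}(h(t), X(t)) = \theta(t)$. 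On $[t_n, t_{n+1})$ this is the claim because $Y(t) = Y(t_n)$; at $t = t_{n+1}$ it coincides with the recalibration prescription of \autoref{def:crc}\eqref{def:crc:item1}.

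For (ii), admissibility at the grid points is part of \autoref{def:crc}\eqref{def:crc:item2}. For $t$ strictly between grid points, $(Y(t), \theta(t)) = (Y(t_n), \mathcal S(t-t_n)\theta(t_n))$ is a time-shifted Hull-White extension of an already admissible pair, and inherits existence, uniqueness, and the moment condition by a simple reparametrisation of time in SDE~\eqref{equ:hw:x}.

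For (iii), apply \autoref{thm:hw:hjm} on each $[t_n, t_{n+1}]$ with fixed parameter $y = Y(t_n)$ and scalar Hull-White drift $s \mapsto \theta(t_n)(s - t_n) = \theta(s)(0)$. This yields
\begin{align*}
dh(t) &= \Big(\mathcal A h(t) + \mu^{\mathrm{HJM}}_{Y(t_n)}(X(t))\Big)\, dt + \sigma^{\mathrm{HJM}}_{Y(t_n)}(X(t))\, dW(t),
\\
dX(t) &= \sqrt{A_{Y(t_n)}(X(t))}\, dW(t) + \Big(\theta(t)(0)\, e_1 + B_{Y(t_n)}(X(t))\Big)\, dt
\end{align*}
on each interval. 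Substituting $Y(t_n) = Y(t)$ (valid on $[t_n, t_{n+1})$) and, by (i), $\theta(t)(0) = \mathcal C_{Y(t)}(h(t), X(t))(0)$, turns this into exactly \eqref{equ:crc_hjm} on the interior of every interval; the countable set of grid times carries zero Lebesgue measure and is therefore invisible to the drift integrals. Summing the local integrated equations across consecutive intervals and invoking the continuity of $h$ and $X$ at each $t_n$ (noted in the text following \autoref{def:crc}) gives a global strong solution on $[0, \infty)$.

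The main obstacle is purely the bookkeeping around the recalibration times: one must check that the potential jumps of $Y$ and $\theta$ at the grid points do not spoil the global SPDE. Continuity of $h$ at each $t_n$ holds because $\theta(t_n)$ is defined precisely as the inverse of $h(t_n)$ under $\mathcal H_{Y(t_n)}(\cdot, X(t_n))$, so both sides of the interval agree. The rewriting of the $X$-drift in terms of $\mathcal C_{Y(t)}(h(t), X(t))(0)$ is valid outside the Lebesgue-null grid by (i). These two observations are exactly what legitimises concatenating the local solutions into a single global one.
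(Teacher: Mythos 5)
Your proof is correct and follows exactly the route the paper intends: the paper's own proof is the single line ``This follows from \autoref{def:crc} and \autoref{thm:hw:hjm}'', and your argument is precisely the expansion of that citation --- localising to each interval $[t_n,t_{n+1}]$, invoking \autoref{thm:hw:volterra} for the well-definedness and identification of $\mathcal C_{Y(t)}(h(t),X(t))$, and concatenating the local solutions of \autoref{thm:hw:hjm} via continuity at the grid points. The details you supply (the evaluation $h(t)(0)=\ell+\langle\lambda,X(t)\rangle$ from $\Psi_y'(0)=-\lambda$ and $\Phi_y'(0)=0$, the rewriting $\theta(t_n)(t-t_n)=\theta(t)(0)$, and the irrelevance of the null set of recalibration times for the integrated equations) are all consistent with the paper's definitions.
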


\begin{proof}
This follows from \autoref{def:crc} and \autoref{thm:hw:hjm}.
\end{proof}

\subsection{Geometric interpretation}\label{sec:crc:geometry}

The consistent recalibration scheme has a nice geometric interpretation. 
Forward rate curves of a Hull-White extended affine short rate model remain within the finite dimensional manifold with boundary given by
\[
\bigg\{-\int_0^\tau \theta(t+s)\langle \Psi_y'(\tau-s),e_1\rangle ds+\ell-\Phi_y'(\tau)-\langle \Psi_y'(\tau),x\rangle\bigg|(t,x)\in\mathbb R_+\times\mathbb R^d \bigg\},
\]
as can be seen from \autoref{thm:hw:price}.
These submanifolds foliate the space of forward rate curves or large portions thereof. Let $h$ be a forward rate curve. Then, for every choice of functional characteristics $(F_y,R_y)$, there is at most one leaf through $h$. However, if $(F_y,R_y)$ is allowed to vary, there are in general many leaves through $h$. A choice of leaf corresponds to a choice of foliation and thus to a choice functional characteristics $(F_y,R_y)$. 

A CRC model is constructed by concatenating forward rate evolutions on leaves belonging to different foliations. This allows the otherwise constant coefficients $(F_y,R_y)$ to change over time. The result is an HJM model which is \emph{``tangent''} to Hull-White extended affine short rate models. This is illustrated in \autoref{fig:foliations}.

\begin{figure}
\centering 
\adjustbox{trim={.2\width} {.28\height} {.2\width} {.24\height},clip}%
  {\includegraphics[width=.7\textwidth]{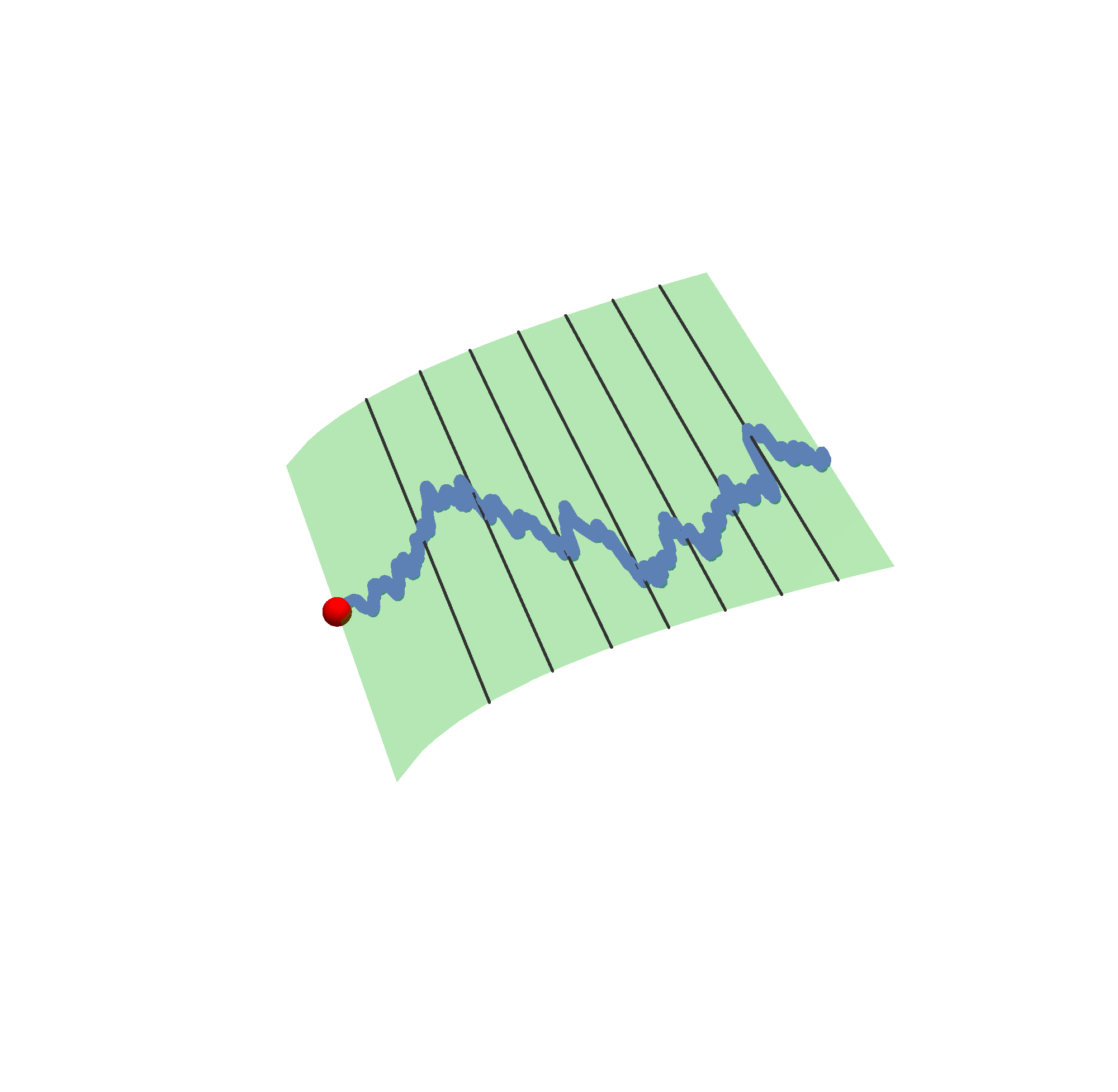}}
\adjustbox{trim={.2\width} {.28\height} {.2\width} {.24\height},clip}%
  {\includegraphics[width=.7\textwidth]{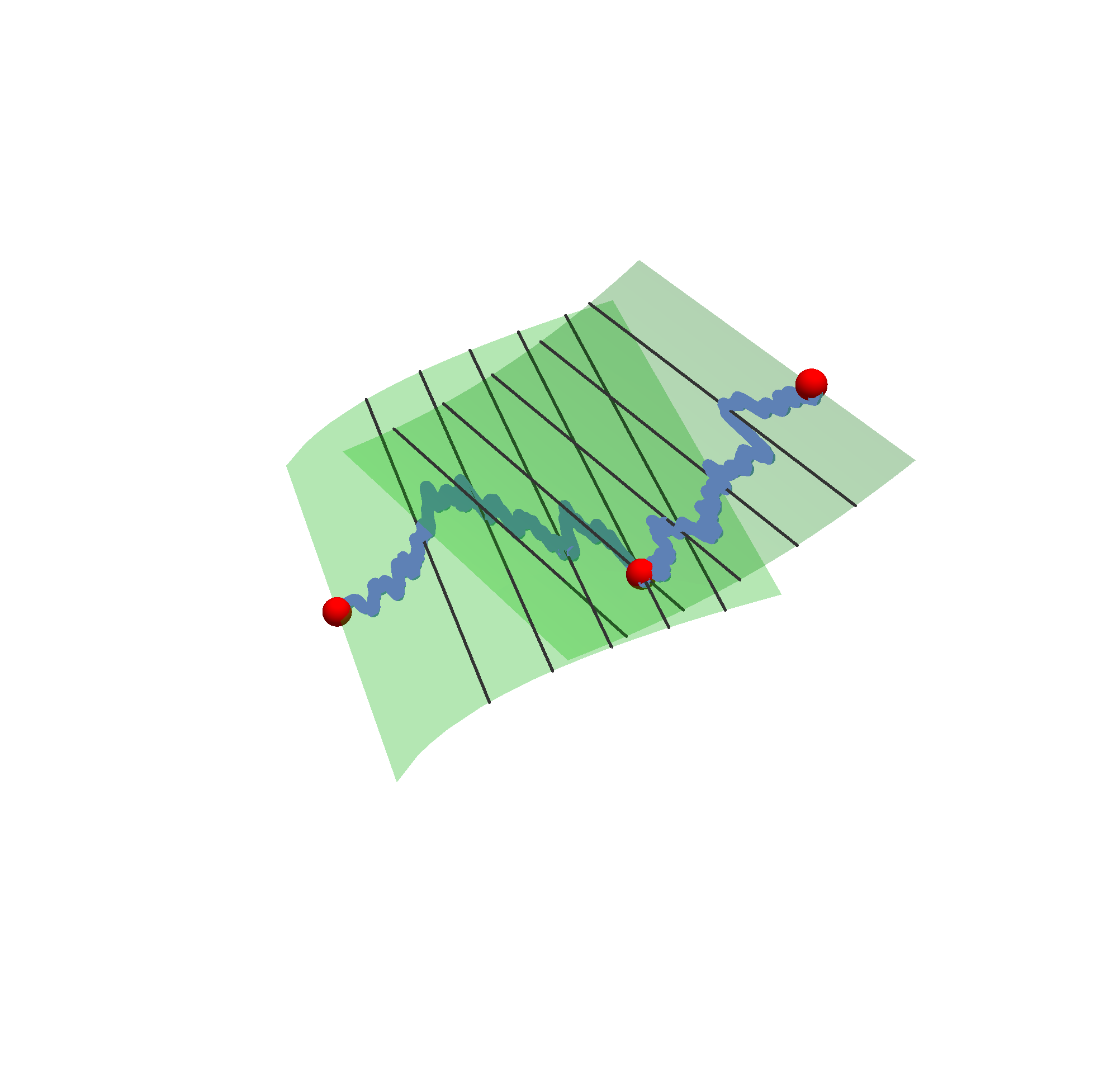}}
\caption{Each affine short rate model foliates the space of forward rate curves into invariant leaves. CRC models are concatenations of forward rate evolutions belonging to different foliations or limits of such concatenations.}
\label{fig:foliations}
\end{figure}

\subsection{CRC models}\label{sec:crc:continuous}

We generalise CRC models of \autoref{sec:crc:piecewise} to arbitrary parameter processes $Y$. In this section $Y$ is not required to be piecewise constant as in the last sections. To characterise such models, we use the SPDE derived in \autoref{thm:crc:hjm}. 
\begin{definition}[CRC models]\label{def:crc:continuous}
A \emph{CRC model} is a process $(h,X,Y)$ with values in $\mathbb H\times \mathbb X\times\mathbb Y$ satisfying conditions \eqref{thm:crc:hjm:item1}--\eqref{thm:crc:hjm:item3} of \autoref{thm:crc:hjm} with $\theta(t)=\mathcal C_{Y(t)}\big(h(t),X(t)\big)$ for each $t\geq0$. 
\end{definition}
We may think of CRC models as continuous-time limits of the concatenations described in \autoref{sec:crc:geometry}. Note that these models satisfy all conclusions of \autoref{thm:crc}: they are free of arbitrage because discounted bond prices $P(t,T)/B(t)$ are local martingales thanks to the HJM drift condition \cite[Theorem~6.1]{filipovic2009term}, and the short rate can be written as $r(t)=h(t)(0)=\ell+\langle \lambda,X(t)\rangle$. 

\subsection{Semigroup interpretation}\label{sec:crc:semigroup}

Assume that the parameter process $Y$ is Markov on $\mathbb Y$, and that the SPDE \eqref{equ:crc_hjm} has a unique mild solution which depends continuously on the initial condition. Then the CRC model $(h,X,Y)$ in \autoref{def:crc:continuous} is Markov on $\mathbb H\times\mathbb X\times\mathbb Y$ \cite[c.f.~Theorem~9.14]{daPrato2014se}. Let $\mathcal P=(\mathcal P(t))_{t\geq0}$ denote the corresponding semigroup on the Banach space $C_b(\mathbb H\times\mathbb X\times\mathbb Y)$ of bounded continuous functions, i.e., 
\begin{align*}
\mathcal P(t)f(h_0,x_0,y_0)
&=
\mathbb E\left[f\big(h(t),X(t),Y(t)\big)
\middle|\big(h(0),X(0),Y(0)\big)=\big(h_0,x_0,y_0\big)\right].
\end{align*}
Moreover, let $\mathcal Q=(\mathcal Q(t))_{t\geq0}$ denote the semigroup on $C_b(\mathbb H\times\mathbb X\times\mathbb Y)$ obtained by holding the parameter process $Y(t)\equiv y$ fixed, i.e., 
\begin{align*}
\mathcal Q(t)f(h_0,x_0,y_0)
&=
\mathbb E\left[f\big(h(t),X(t),y_0\big)
\middle|\big(h(0),X(0)\big)=\big(h_0,x_0\big)\right],
\end{align*}
where $(h,X)$ are as in \autoref{thm:hw:hjm} with $y=y_0$. Finally, let $\mathcal R=(\mathcal R(t))_{t\geq0}$ denote the semigroup on $C_b(\mathbb H\times\mathbb X\times\mathbb Y)$ describing the evolution of $Y$, i.e., 
\begin{align*}
\mathcal R(t)f(h_0,x_0,y_0)
&=
\mathbb E\left[f\big(h_0,x_0,Y(t)\big)
\middle|Y(0)=y_0\right].
\end{align*}
Then, the concatenation $(\mathcal R(\delta)\mathcal Q(\delta))^n f$ of semigroups describes CRC models with a piecewise constant parameter process as in \autoref{def:crc}. Indeed, 
\begin{equation*}
(\mathcal R(\delta)\mathcal Q(\delta))^n f (h_0,x_0,y_0)
=
\mathbb E\left[h(t_n),X(t_n),Y(t_n)\right],
\qquad\text{for all } n\in\mathbb N_0,
\end{equation*}
where $\delta=t_{n+1}-t_n$ is the step size, $(h(t_n),X(t_n))$ is obtained by executing the simulation scheme of \autoref{sec:crc:sim} and $Y(t_n)$ by sampling the Markov process $Y$ on that time grid.

\subsection{Simulation of CRC models by splitting schemes}

The semigroup interpretation of \autoref{sec:crc:semigroup} allows one to view  \autoref{alg:crc} as an exponential Euler \emph{splitting scheme} for general CRC models as in \autoref{def:crc:continuous}. To see this, let $f:\mathbb H\times\mathbb X\times\mathbb Y\to \mathbb R$ be twice differentiable with derivatives being uniformly continuous on bounded sets and assume that $\mathbb H \subseteq \operatorname{dom}(\mathcal A)$. Then, under appropriate assumptions on $Y$, It\=o's formula holds for $f(h(t),X(t),Y(t))$ by \cite[Theorem 4.17]{daPrato2014se}. It follows that $f$ lies in the common domain of the generators $\mathcal G^{\mathcal P},\mathcal G^{\mathcal Q},\mathcal G^{\mathcal R}$ of the semigroups $\mathcal P,\mathcal Q,\mathcal R$, and, if $Y$ is independent of $W$,
\begin{align*}
\mathcal G^{\mathcal P} f = \mathcal G^{\mathcal Q} f+\mathcal G^{\mathcal R} f.
\end{align*}
The exponential Euler splitting scheme with respect to this splitting is defined as
\begin{equation*}
\mathcal P(n\delta)f\approx
\big(\exp(\delta \mathcal G^{\mathcal R})\exp(\delta\mathcal G^{\mathcal Q})\big)^n f=\big(\mathcal R(\delta)\mathcal Q(\delta)\big)^n f,
\qquad\text{for all } n\in\mathbb N_0.
\end{equation*}
By the considerations in \autoref{sec:crc:semigroup}, it coincides with the simulation scheme of \autoref{sec:crc:sim}. The advantages of this simulation scheme in comparison to other methods are discussed in Sections~\ref{sec:intro:hjm} and \ref{sec:crc:properties}.

\subsection{Calibration of CRC models}\label{sec:crc:calibration}

In order to calibrate CRC models we need to estimate a time series of the parameter process $Y$ from market data, and fit a model for this time series. Estimating a time series of the parameter process $Y$ can be done as explained in  \autoref{sec:hw:estimation} for Hull-White extended affine models. The resulting time series $Y$ consists of model parameters under a risk neutral measure, but they can be estimated from real world observations since the estimators are obtained solely from the volatility of the forward rate process. 

Calibrating CRC models requires the additional task of selecting and fitting a model for the estimated time series of $Y$. This completes the model specification under a risk neutral probability measure. We do not discuss the market price of risk specification in this paper but refer to \cite{harms2016consistent} for more details. 

\subsection{Robust calibration, consistency, and analytic tractability}\label{sec:crc:properties}

It is time to address the question to what extent CRC models satisfy the interest rate modelling principles set forth in the introduction. First, we formalise the notion of consistency and the consistent recalibration property. We do this for CRC models, but it is unproblematic to generalise the definition to other forward rate models. Let $\operatorname{pr}_{\mathbb H}$ denote the projection of $\mathbb H\times\mathbb X\times \mathbb Y$ onto $\mathbb H$.

\begin{definition}\label{def:consistency}
Let $(h(t),X(t),Y(t))_{t\geq 0}$ be a CRC model and $\mathcal I\subseteq \mathbb H\times \mathbb X\times \mathbb Y$. Then the model is called \emph{consistent} with $\mathcal I$ if $(h(t),X(t),Y(t))\in\mathcal I$ holds with probability one for any $t>0$ and  initial condition $(h_0,x_0,y_0)\in\mathcal I$. Moreover, the model satisfies the \emph{consistent recalibration property} with respect to $\mathcal I$ if the support of the law of $h(t)$ contains $\operatorname{pr}_{\mathbb H}(\mathcal I)$ for any $t>0$ and initial condition $(h_0,x_0,y_0)\in\mathcal I$.
\end{definition}

The consistent recalibration property is equivalent to any open subset of $\operatorname{pr}_{\mathbb H}(\mathcal I)$ being reached at any time $t>0$ with positive probability. Observe that the consistent recalibration property does not hold on any reasonably large set $\mathcal I$ for Hull-White extended affine factor models as in \autoref{sec:hw}. Indeed, as explained in \autoref{sec:crc:geometry}, for any given initial value the process $h$ remains within a finite dimensional submanifold of $\mathbb H$. CRC models, on the other hand, enjoy the following properties:

\begin{itemize}
\item {\bf Robust calibration:} the robust calibration principle is satisfied perfectly. Indeed, the method described in \autoref{sec:crc:calibration} allows us to use the entire present and past market data of yields to select a model. Whenever possible, the parameters are estimated from realised covariations of yields, which allows one to bypass the usual inverse problems in calibration.

Moreover, requiring parameters to remain constant throughout the life time of the model is less restrictive in CRC models than in the underlying affine factor models. The reason is that the parameters of the underlying affine model are turned into state variables of the CRC model. 

\item {\bf Consistency:} the canonical state space $\mathcal I$ of CRC models is the subset of $\mathbb H\times\mathbb X\times\mathbb Y$ determined by the admissibility condition on the underlying Hull-White extended affine factor model (see \autoref{ass:hw:x}). Under sensible specifications of the affine factor model, $\mathcal I \cap (\mathbb H\times \{x\}\times\{y\})$ is large enough to contain all realistic market curves, for each fixed $(x,y) \in \mathbb X\times \mathbb Y$. If the Hilbert space $\mathbb H$ is continuously embedded in $C^1(\mathbb R_+)$, then $\mathcal I$ is also large in the topological sense of having non-empty interior. 

In this setup consistency holds by construction because the state process $(h,X,Y)$ of CRC models does not leave the set $\mathcal I$. The consistent recalibration property can be verified using standard arguments: the support of $(h(t),X(t),Y(t))$ is the closure of the reachable set of an associated control problem \cite{nakayama2004support}, the reachable set is stable under the flows of the driving vector fields and their Lie brackets, and generically speaking, as soon as there is noise in the parameter process $Y$, these vector fields together with their brackets span dense subspaces of $\mathbb H$. The exact conditions are worked out for the Vasi\v cek case in \autoref{sec:va:con}. 

\item {\bf Analytic tractability:} the simulation scheme for CRC models (\autoref{alg:crc}) transfers the task of sampling state variable increments to a finite-dimensional setting. Namely, instead of simulating forward rate increments from an infinite-dimensional space, it is sufficient to simulate increments of the finite-dimensional processes $X$ and $Y$. This allows one to take advantage of the existing high-order schemes for the simulation of affine processes. Note that all operations in \autoref{alg:crc} which involve infinite-dimensional objects are deterministic. The complexity for simulation reduces dramatically when high order methods for finite dimensional equations are applied which are often not at hand for infinite dimensional equations. Additionally often exact schemes are available in the affine finite dimensional setting, e.g., for CIR or Wishart type processes besides of course Gaussian processes.
\end{itemize}

\section{Consistent recalibration of \texorpdfstring{Vasi\v cek}{Vasicek} models} \label{sec:va}

\subsection{Overview}

We describe CRC models based on the Hull-White extended Vasi\v cek model in full detail. Moreover, we show using semigroup theory that the simulation scheme of \autoref{sec:crc:sim} converges to the CRC model of \autoref{def:crc:continuous} in the continuous-time limit.

\subsection{Setup and notation}

We use the setup of \Autoref{sec:hw:setup,sec:crc:setup}, setting $\mathbb X=\mathbb R$, $\ell=0$, $\lambda=1$. We do not specify the parameter space $\mathbb Y$, yet, but we assume that for each $(x,y)\in\mathbb X\times\mathbb Y$, the volatility and drift coefficients are given by $A_y(x)=a_y \in [0,\infty)$ and $B_y(x)=\beta_y x$ with $\beta_y \in (-\infty,0)$. For simplicity, we choose equidistant grids of times $t_n=n\delta$ and times to maturity $\tau_n=n\delta$, for all $n\in\mathbb N_0$, where $\delta$ is a positive constant. 

\subsection{Hull-White extended \texorpdfstring{Vasi\v cek}{Vasicek} models}\label{sec:va:hw}

For each parameter $(y,\theta)\in\mathbb Y\times C(\mathbb R_+)$, the SDE for the short rate process is
\begin{equation}\label{equ:va:sde}
dr(t)=(\theta(t)+\beta_y r(t))dt+\sqrt{a_y}dW(t),
\end{equation}
where $W$ is one-dimensional $(\mathcal F(t))_{t\geq0}$-Brownian motion. 
\autoref{ass:hw:x} is satisfied for each parameter $(y,\theta)$. The functional characteristics $(F,R)$ from \autoref{sec:hw:riccati} are 
\begin{equation*}
F_y(u)=\frac{a_y}{2}  u^2, 
\qquad
R_y(u)=\beta_y u, 
\qquad
\text{for all } u \in \mathbb R,
\end{equation*}
and the solutions $(\Phi_y,\Psi_y)$ of the corresponding Riccati equations are
\begin{equation*}
\Phi_y(t)=\frac{a_y}{4\beta_y^3}\left(2\beta_y t-4e^{\beta_y t}+3+e^{2\beta_y t}\right),
\qquad
\Psi_y(t)=\frac{1}{\beta_y}\left(1-e^{\beta_y t}\right),
\qquad\text{for all } t\geq 0.
\end{equation*}
By \autoref{thm:hw:price}, the forward rates in the Hull-White extended Vasi\v cek model \eqref{equ:va:sde} with fixed parameters $(y,\theta)$ are given by $h(t)=\mathcal H_y(\mathcal S(t)\theta,r(t))\in C^1(\mathbb R_+)$, where 
\begin{align*}
\mathcal H_y(\theta,x)(\tau)
&=
\int_0^\tau \mathcal \theta(s)e^{\beta_y (\tau-s)} ds 
-\frac{a_y}{2\beta_y^2}\left(1-e^{\beta_y \tau}\right)^2
+e^{\beta_y \tau} x,
\end{align*}
for all $(x,\theta,\tau)\in\mathbb R\times C(\mathbb R_+)\times\mathbb R_+$. Due to the simple structure of the integral kernel $e^{\beta_y (\tau-s)}$, there is a closed-form expression for the calibration operator, 
\begin{equation}\label{equ:va:c}
\mathcal C_y(h)(\tau)=h'(\tau)-\beta_y h(\tau)-\frac{a_y}{2\beta_y}\left(1-e^{2\beta_y\tau}\right),
\qquad
\text{for all } (h,\tau)\in C^1(\mathbb R_+)\times\mathbb R_+.
\end{equation}
This can be verified using the definitions. 
Note that the calibration operator does not depend on $x$. Therefore, we dropped $x$ from the notation $\mathcal C_y(h,x)$. 

The HJM drift and volatility from \autoref{sec:hw:hjm} are 
\begin{equation}\label{equ:va:hjm_drift_vola}
\mu^{\mathrm{HJM}}_y(\tau)=-\frac{a_y}{\beta_y}e^{\beta_y\tau}\left(1-e^{\beta_y\tau}\right),
\qquad
\sigma^{\mathrm{HJM}}_{y}(\tau)=\sqrt{a_y}e^{\beta_y\tau}, 
\qquad\text{for all } \tau \in \mathbb R_+.
\end{equation}
Note that these expressions do not depend on $x$, which is why we again dropped $x$ from the previous notation $\mu^{\mathrm{HJM}}_y(x)(\tau), \sigma^{\mathrm{HJM}}_y(x)(\tau)$. The HJM equation for forward rates then reads as
\begin{equation}\label{equ:va:hjm}
dh(t)=\Big(\mathcal Ah(t)+\mu_y^{\mathrm{HJM}}\Big)dt+\sigma_y^{\mathrm{HJM}}dW(t).
\end{equation}

\subsection{\texorpdfstring{Vasi\v cek}{Vasicek} CRC models}\label{sec:va:crc}

Since the factor process is a function of the forward rate process, i.e., $X(t)=r(t)=h(t,0)$, the corresponding CRC models can be characterised by the process $(h,Y)$ instead of $(h,X,Y)$. Thus, in accordance with \autoref{thm:crc:hjm} and \autoref{def:crc:continuous}, a process $(h,Y)$ with values in $\mathbb H\times \mathbb Y$ may be called a CRC model if $h$ satisfies the SPDE 
\begin{align}\label{equ:va:crc}
dh(t)&=\Big(\mathcal Ah(t)+\mu^{\mathrm{HJM}}_{Y(t)}\Big)dt
+\sigma^{\mathrm{HJM}}_{Y(t)}dW(t),
\end{align}
with drift $\mu^{\mathrm{HJM}}_{Y(t)}$ and volatility $\sigma^{\mathrm{HJM}}_{Y(t)}$ defined in \eqref{equ:va:hjm_drift_vola}. Beyond the obvious requirement that these quantities are well-defined, for all $t \in \mathbb R_+$, no further conditions are needed. In other words, the maximally admissible set $\mathcal I$ in the Vasi\v cek case is the entire Hilbert space $\mathbb H$. 

\subsection{Simulation of \texorpdfstring{Vasi\v cek}{Vasicek} CRC models} \label{sec:va:sim}

Given the parameter process $Y$ with values in $\mathbb Y$, the CRC model is simulated as described in \autoref{alg:crc}. The following observations make the algorithm particularly effective. First, the state process $X$ is a function of the forward rate and can be eliminated as a state variable. Second, the short rate process can be simulated exactly. Indeed, in the model with constant parameter $y$, $r(t)$ is normally distributed,
\begin{equation*}
r(t) \sim 
\mathcal{N}\bigg(e^{\beta_y t}r_0+\int_{0}^t e^{\beta_y\left(t-s\right)}\theta(s)ds,
\frac{a_y}{2\beta_y}\left(e^{2\beta_y t}-1\right)\bigg).
\end{equation*}
Third, inverting the Volterra integral operator can be avoided by using closed-form expression \eqref{equ:va:c} of the calibration operator.

Discretisation is done on the uniform grid $t_n=\tau_n=\delta n$ for a choice of finitely many times to maturity $\tau_n$. Integrals are approximated to second order by the trapezoid rule, which leads to a global error of order one (see \autoref{sec:va:convergence} and \autoref{sec:num:con}). The resulting scheme works as follows. 

\begin{algorithm}[Simulation]\label{alg:va:crc}
Given $(h(0),\mathcal Ah(0))$ and the parameter process $Y$, execute iteratively the following steps, for each $n\in\mathbb N_0$:

\begin{enumerate}[(i)]
\item The values of $\theta(t_n)=\mathcal C_{Y(t_n)}(h(t_n))$ at times to maturity $0$ and $\delta$ are calculated using \eqref{equ:va:c},
\begin{align*}
\theta(t_n)(0)&=\mathcal Ah(t_n)(0)-\beta_{Y(t_n)} h(t_n)(0),
\\
\theta(t_n)(\delta)&=\mathcal Ah(t_n)(\delta)-\beta_{Y(t_n)} h(t_n)(\delta)-\frac{a_{Y(t_n)}}{2\beta_{Y(t_n)}}\left(1-e^{2\beta_{Y(t_n)}\delta}\right),
\end{align*}
and $\mathcal I_{Y(t_n)}(\theta(t_n))(\delta)$ is approximated by the trapezoid rule as follows:
\begin{align*}
\widehat{\mathcal I}_{Y(t_n)}\big(\theta(t_n)\big)(\delta)
=-\frac{\delta}{2}\left(e^{\beta_{Y(t_n)}\delta}\theta(t_n)(0)
+\theta(t_n)(\delta)\right).
\end{align*}

\item A sample $r(t_{n+1})$ is drawn such that conditionally on $\mathcal F(t_n)$, $r(t_{n+1})$ has normal distribution
\begin{equation*}
r(t_{n+1})\sim \mathcal N \left(e^{\beta_{Y(t_n)} \delta}h(t_n)(0)
-\widehat{\mathcal I}_{Y(t_n)}\big(\theta(t_n)\big)(\delta),
\frac{a_{Y(t_n))}}{2\beta_{Y(t_n)}}\left(e^{2\beta_{Y(t_n)} \delta}-1\right)\right).
\end{equation*}

\item $\big(h(t_{n+1}),\mathcal Ah(t_{n+1})\big)$ is calculated from $\big(h(t_n), \mathcal Ah(t_n),r(t_{n+1})\big)$ using \autoref{lem:crc:eff}:

\begin{align*}
h(t_{n+1})(\tau)&=h(t_n)(\delta+\tau)
+\frac{a_{Y(t_n)}}{2\beta^2_{Y(t_n)}}
\left(\left(1-e^{\beta_{Y(t_n)}(\delta+\tau)}\right)^2-\left(1-e^{\beta_{Y(t_n)}\vphantom{(}\tau}\right)^2\right)
\\&\qquad
+e^{\beta_{Y(t_n)}\tau} \left(
-e^{\beta_{Y(t_n)}\delta}r(t_n)+r(t_{n+1}) 
+\widehat{\mathcal I}_{Y(t_n)}\big(\theta(t_n)\big)(\delta)
\right),
\\
\mathcal Ah(t_{n+1})(\tau)&=\mathcal Ah(t_{i})(\delta+\tau)
\\&\qquad
+\frac{a_{Y(t_n)}}{\beta_{Y(t_n)}}\left(e^{\beta_{Y(t_n)}\tau}+e^{2\beta_{Y(t_n)}(\tau+\delta)}-e^{2\beta_{Y(t_n)}\tau}-e^{\beta_{Y(t_n)}(\delta+\tau)}\right)
\\&\qquad
+\beta_{Y(t_n)}e^{\beta_{Y(t_n)}\tau} \left(
-e^{\beta_{Y(t_n)}\delta}r(t_n)+r(t_{n+1}) 
+\widehat{\mathcal I}_{Y(t_n)}\big(\theta(t_n)\big)(\delta)\right).
\end{align*}
Here, $h(t_{n+1})$ must be calculated at all times to maturity $\tau_i$, whereas $\mathcal Ah(t_{n+1})$ is needed only at $\tau_0=0$ and $\tau_1=\delta$.
\end{enumerate}
\end{algorithm}

\subsection{Convergence of the simulation scheme}\label{sec:va:convergence}

In this section, we show that scheme of \autoref{sec:va:sim} converges to the CRC model \eqref{equ:va:crc} as the size $\delta$ of the time grid tends to zero. We are not aiming for the highest generality. Instead, we show how the results follow from standard semigroup theory.

\begin{assumption}\label{ass:va:Y}
The parameter process $Y$ takes values in $\mathbb Y= \mathbb R^p$ and satisfies
\begin{align}\label{equ:va:y}
dY(t)&=\big(A Y(t)+\mu(Y(t))\big)dt + \sigma(Y(t))d\widetilde W(t),
\end{align}
where $A:\mathbb R^p\to\mathbb R^p$ is a linear mapping generating a semigroup of contractions on $\mathbb R^p$, $\mu \in C^\infty_b(\mathbb R^p;\mathbb R^p)$, $\sigma \in C^\infty_b(\mathbb R^p,\mathbb R^{p\times q})$, and $\widetilde W$ is $q$-dimensional $\mathcal F(t)$-Brownian motion, independent of $W$. We write $C^\infty_b$ for bounded functions with bounded derivatives of all orders. The above SDE has a unique solution for any initial condition $Y(s)=y$, where $(s,y)\in\mathbb R_+\times \mathbb Y$.
\end{assumption}

\begin{assumption}\label{ass:va:coeff}
The mappings $y\mapsto \sqrt{a_y}$ and $y\mapsto\beta_y$ are of class $C^\infty_b(\mathbb R^p)$ and $\sup_{y\in\mathbb Y}\beta_y<0$ holds.
\end{assumption}

As Vasi\v cek CRC models can be characterised in terms of $(h,Y)$ instead of $(h,X,Y)$, the semigroups $\mathcal P$, $\mathcal Q$, and $\mathcal R$ from \autoref{sec:crc:semigroup} are now assumed to be defined on $C_b(\mathbb H\times \mathbb Y)$ instead of $C_b(\mathbb H\times \mathbb X \times \mathbb Y)$. Recall that $\mathcal P$ describes the joint evolution of the process $(h,Y)$, $\mathcal Q$ the evolution of $h$ with $Y$ fixed, and $\mathcal R$ the evolution of $Y$ with $h$ fixed.

\begin{theorem}\label{thm:va:convergence}
There exists a separable Hilbert space $\mathbb H$ of continuous functions on $\mathbb R_+$ and a Banach space $\mathbb B$ of continuous functions on $\mathbb H\times \mathbb Y$ such that $\mathcal P$, $\mathcal Q$, and $\mathcal R$ are strongly continuous semigroups on $\mathbb B$. Moreover, for each $T \in \mathbb R_+$ there exists a constant $C$ such that
\begin{equation*}
\left\| \mathcal P(t) f-\left(\mathcal R(t/n)\mathcal Q(t/n)\right)^n f\right\|_{\mathbb B} 
\leq 
C n^{-1}\left\| f \right\|_{\mathbb B'}, 
\qquad
\text{for all } f\in\mathbb B',t\in[0,T],n\in\mathbb N^+,
\end{equation*}
where $\mathbb B'$ is a Banach space which is densely and continuously embedded in $\mathbb B$.
\end{theorem}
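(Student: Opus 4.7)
The plan is to realize this theorem as a quantitative Lie--Trotter splitting estimate for the decomposition $\mathcal G^{\mathcal P} = \mathcal G^{\mathcal Q} + \mathcal G^{\mathcal R}$, where the additivity of generators comes from the independence of $W$ and $\widetilde W$ (\autoref{ass:va:Y}). First I would fix the Hilbert space $\mathbb H$ to be a Filipovi\'c-type weighted Sobolev space on $\mathbb R_+$ (as constructed in \cite[Sec.~5]{filipovic2001consistency}) so that the right shifts $(\mathcal S(t))_{t\geq 0}$ form a strongly continuous semigroup with generator $\mathcal A$, and so that the drift and volatility fields $y\mapsto \mu_y^{\mathrm{HJM}}$, $y\mapsto\sigma_y^{\mathrm{HJM}}$ from \eqref{equ:va:hjm_drift_vola} are smooth and bounded into $\mathbb H$ by \autoref{ass:va:coeff}. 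The SPDE \eqref{equ:va:crc} then becomes an affine-linear Gaussian evolution on $\mathbb H$ driven by the independent factor $Y$, so existence, uniqueness, and the Markov property of $(h,Y)$ follow from \cite{daPrato2014se}, and the three semigroups $\mathcal P,\mathcal Q,\mathcal R$ are well defined.

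For the Banach space $\mathbb B$, I would take a space of bounded uniformly continuous functions on $\mathbb H\times\mathbb Y$ (e.g.\ with respect to a norm rendering the resolvent of $\mathcal A$ compact, or a $K$-weighted BUC space in the sense used for infinite-dimensional Ornstein--Uhlenbeck semigroups). On such a space, the Mehler-type semigroup $\mathcal Q$ associated with \eqref{equ:va:hjm} is strongly continuous because its transition kernel is a Gaussian measure with covariance depending continuously on time, and $\mathcal R$ is strongly continuous because $Y$ is a finite-dimensional SDE with smooth bounded coefficients by \autoref{ass:va:Y}. Strong continuity of $\mathcal P$ then follows by combining the two, using independence of the driving noises.

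For the error bound I would introduce a smaller Banach space $\mathbb B'\subset\mathbb B$ consisting of sufficiently smooth (e.g.\ twice Fr\'echet-differentiable cylindrical) functions, in such a way that $\mathcal G^{\mathcal Q}$, $\mathcal G^{\mathcal R}$, and all their products up to second order are bounded operators from $\mathbb B'$ to $\mathbb B$. Expanding both $\mathcal P(\delta)f$ and $\mathcal R(\delta)\mathcal Q(\delta)f$ to second order in $\delta$ via It\=o's formula (\autoref{ass:va:Y} gives the required regularity so that \cite[Thm.~4.17]{daPrato2014se} applies), the first-order terms match by $\mathcal G^{\mathcal P}=\mathcal G^{\mathcal Q}+\mathcal G^{\mathcal R}$, and the difference of the second-order terms is the commutator $[\mathcal G^{\mathcal Q},\mathcal G^{\mathcal R}]$ acting on $f$, yielding the local error
\begin{equation*}
\bigl\|\mathcal P(\delta)f-\mathcal R(\delta)\mathcal Q(\delta)f\bigr\|_{\mathbb B}\leq C\delta^{2}\|f\|_{\mathbb B'}.
\end{equation*}
A standard telescoping identity together with uniform boundedness of $\mathcal P(s):\mathbb B'\to\mathbb B'$ for $s\in[0,T]$ then promotes this to the global bound $n\cdot C\delta^{2}\|f\|_{\mathbb B'}=Ct\,n^{-1}\|f\|_{\mathbb B'}$.

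The main obstacle is the simultaneous choice of $(\mathbb B,\mathbb B')$: the sup-norm closure of $C_b(\mathbb H\times\mathbb Y)$ is too large for strong continuity of the infinite-dimensional Gaussian semigroup $\mathcal Q$, while $\mathbb B'$ must be dense, invariant under $\mathcal P(t)$ for $t\in[0,T]$, and rich enough to control the commutator $[\mathcal G^{\mathcal Q},\mathcal G^{\mathcal R}]$ and the second iterates of each generator. Once this functional-analytic setup is fixed, the argument reduces to standard Trotter--Kato bookkeeping; I expect the authors to carry out the construction by borrowing the weighted BUC framework of \cite{daPrato2014se}, which is tailored precisely for linear Gaussian SPDEs of the form \eqref{equ:va:crc}.
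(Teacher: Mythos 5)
Your proposal follows essentially the same route as the paper's proof: the authors also choose a Filipovi\'c-type weighted Sobolev space for $\mathbb H$, take $\mathbb B$ and $\mathbb B'$ to be weighted spaces $\mathbb B^{\zeta}_k(\mathbb H_i\times\mathbb Y)$ of $C^k_b$-functions (with weight $\cosh(\zeta\lVert h\rVert)+\lVert y\rVert^2$, in the style of \cite{doersek2013efficient}) precisely so that $\mathcal P,\mathcal Q,\mathcal R$ are strongly continuous and quasicontractive, verify that $\mathcal G^{\mathcal P}=\mathcal G^{\mathcal Q}+\mathcal G^{\mathcal R}$ with the second-order compositions of generators bounded on the $\mathcal P(t)$-invariant space $\mathbb B'$, and then invoke the abstract first-order splitting estimate of \cite{hansen2009exponential}, which is exactly the local-commutator-plus-telescoping argument you sketch. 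The only cosmetic difference is that the paper takes $\mathbb B'$ of class $C^4$ (rather than your suggested $C^2$), since the generators are second-order operators and controlling their pairwise products, as you correctly require, forces four derivatives of $f$.
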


The space $\mathbb B'$ is large enough to be relevant in applications: any $C^4$ function on $\mathbb H_0\times \mathbb Y$ belongs to $\mathbb B'$, where $\mathbb H_0 \supset \mathbb H$ is defined in the proof below.

\begin{proof}
We proceed as in \cite{doersek2013efficient} and \cite{filipovic2001consistency}. Let $(\gamma_i)_{i\in\mathbb N_0}$ be a strictly increasing sequence of real numbers strictly greater than 3. For each $i \in \mathbb N_0$, define a separable Hilbert space $\mathbb H_i$ by
\begin{align*}
\mathbb H_i &= \left\{h \in L^1_{\mathrm{loc}}\colon h^{(j)} \in L^1_{\mathrm{loc}}\text{ and } \int_{(0,\infty)}h^{(j)}(\tau)^2 (1+\tau)^{\gamma_i}d\tau<\infty, \forall j =1,\dots,i+1\right\},
\end{align*}
where $L^1_{\mathrm{loc}}$ denotes the space of locally integrable functions on $(0,\infty)$.
Every function in $\mathbb H_0$ is continuous, bounded and has a well-defined limit 
$h(\infty)=\lim_{\tau\to\infty} h(\tau)$.
The scalar product on $\mathbb H_i$ is defined by
\begin{align*}
\langle h_1,h_2\rangle_{\mathbb H_i} &= h_1(\infty)h_2(\infty)
+\sum_{j=1}^i \int_{(0,\infty)} h_1^{(j)}(\tau)h_2^{(j)}(\tau) (1+\tau)^{\gamma_i}d\tau.
\end{align*}
For each $\zeta>0$ and $k,i \in\mathbb N_0$, we define the space $\mathbb B^\zeta_k(\mathbb H_i\times\mathbb Y)$ as the closure of $C^k_b(\mathbb H_i\times \mathbb Y)$ under the norm
\[
\lVert f \rVert_{\mathbb B^\zeta_k(\mathbb H_i\times\mathbb Y)} 
= \sum_{j=0}^k \sup_{(h,y)\in \mathbb H_i\times\mathbb Y} 
\left(\cosh(\zeta \lVert h\rVert_{\mathbb H_i})
+\lVert y\rVert_{\mathbb Y}^2\right)^{-1}
\lVert D^j f(h,y) \rVert_{L((\mathbb H_i\times \mathbb Y)^j)}.
\]
Together with \autoref{ass:va:Y} and \ref{ass:va:coeff}, this implies that the conditions of \cite[Sections 3.1.1 and 3.1.2]{doersek2013efficient} are satisfied for SPDE \eqref{equ:va:crc}, \eqref{equ:va:y} characterising the evolution of $(h,Y)$. (Note that $\beta_y$ needs to be bounded away from zero for $\mu_y^{\mathrm{HJM}}$ and $\sigma_y^{\mathrm{HJM}}$ to be bounded with bounded derivatives.)
Thus, this SPDE admits unique solutions on each space $\mathbb H_i\times\mathbb Y$, given that the initial condition is smooth enough. The same applies to the SPDE for $h$ with fixed $y$ and the SDE for $Y$ with fixed $h$.

Fix $\zeta_0>\zeta>0$ and define $\mathbb H=\mathbb H_2$, $\mathbb B=\mathbb B^{\zeta_0}_0(\mathbb H_2\times\mathbb Y)$, and $\mathbb B'=\mathbb B^\zeta_4(\mathbb H_0\times\mathbb Y)$. Then $(\mathcal P(t))_{t\geq 0}$, $(\mathcal Q(t))_{t\geq 0}$, and $(\mathcal R(t))_{t\geq 0}$ are strongly continuous semigroups on $\mathbb B$ by \cite[Lemma 13]{doersek2013efficient} and quasicontractive by \cite[Lemma 7]{doersek2013efficient}. Their generators are denoted by $\mathcal G^{\mathcal P}$, $\mathcal G^{\mathcal Q}$, and $\mathcal G^{\mathcal R}$. By the same lemma, $\mathbb B'$ is stable under $(\mathcal P(t))_{t\geq 0}$. 
Together with \cite[Theorem 11]{doersek2013efficient} this implies that for each $f\in\mathbb B'$, the expressions 
\begin{align*}
\mathcal G^{\mathcal P} \mathcal P(t)f, && 
\mathcal G^{\mathcal Q} \mathcal P(t)f, &&
\mathcal G^{\mathcal R} \mathcal P(t)f, &&
\mathcal G^{\mathcal Q} \mathcal G^{\mathcal Q} \mathcal P(t)f, &&
\mathcal G^{\mathcal Q} \mathcal G^{\mathcal R} \mathcal P(t)f, &&
\mathcal G^{\mathcal R} \mathcal G^{\mathcal Q} \mathcal P(t)f
\end{align*}
are well-defined with $\mathbb B$-norm bounded uniformly in $t\in[0,T]$ and $\mathcal G^{\mathcal P} f=\mathcal G^{\mathcal Q} f+\mathcal G^{\mathcal R} f$. Thus, the splitting is of formal order one and the result follows from \cite[Theorem 2.3 and Section 4.4]{hansen2009exponential}.
\end{proof}

\subsection{Consistent recalibration property}\label{sec:va:con}

If the coefficient $\beta$ in the HJM volatility $\sqrt{a} e^{\beta \tau}$ is stochastic, one would expect the forward rate process to reach every point in the Hilbert space with positive probability, i.e., the consistent recalibration property holds. This is made precise here. 

\begin{assumption}\label{ass:support}
For each initial condition $Y(0)=y_0 \in \mathbb Y$ and each $T>0$, the support of $Y_T$ is all of $\mathbb Y$. Moreover, $\{\beta_y:y\in\mathbb Y\}$ contains an interval $[\underline\beta,\infty)$ for some $\underline\beta$. 
\end{assumption}

\begin{assumption}\label{ass:abscissa}
The Hilbert space $\mathbb H$ is contained in $L^1_{\mathrm{loc}}(\mathbb R_+)$, and each $h \in \mathbb H$ has a finite abscissa 
\begin{equation*}
\operatorname{abs}(h)=\inf\left\{\beta \in \mathbb R: \int_0^\infty h(\tau)e^{\beta\tau}d\tau<\infty\right\}<\infty.
\end{equation*}
\end{assumption}

The condition in \autoref{ass:abscissa} is mild; it is satisfied by the weighted Sobolev spaces in \cite{filipovic2001consistency} and in particular by the space $\mathbb H$ of \autoref{thm:va:convergence}. The above assumptions imply the consistent recalibration property, as the following theorem shows.

\begin{theorem}\label{thm:va:con}
The consistent recalibration property is satisfied for the Vasi\v cek CRC model \eqref{equ:va:crc}, \eqref{equ:va:y} with respect to the state space $\mathcal I=\mathbb H\times\mathbb Y$. 
\end{theorem}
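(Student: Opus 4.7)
The plan is to invoke a Stroock–Varadhan type support theorem for SPDEs (as in the cited \cite{nakayama2004support}) in order to identify the support of $(h(T),Y(T))$ with the closure of the reachable set of the associated deterministic control problem, and then to show by a Lie-bracket / density argument that this reachable set projects densely onto $\mathbb{H}$. This is exactly the strategy outlined informally in \autoref{sec:crc:properties}, which I will now make precise in the Vasi\v{c}ek setting where the HJM volatility and drift are the explicit exponentials of \autoref{equ:va:hjm_drift_vola}.

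First I would write down the controlled system. With $W$ and $\widetilde W$ replaced by smooth controls $u,v\in L^2([0,T])$, the controlled equations read
\begin{align*}
\dot h(t)&=\mathcal A h(t)+\mu^{\mathrm{HJM}}_{y(t)}+\sigma^{\mathrm{HJM}}_{y(t)}u(t),\\
\dot y(t)&=A y(t)+\mu(y(t))+\sigma(y(t))v(t),
\end{align*}
and the mild solution for the first line is $h(T)=\mathcal S(T)h_0+\int_0^T\mathcal S(T-s)\mu^{\mathrm{HJM}}_{y(s)}\,ds+\int_0^T\mathcal S(T-s)\sigma^{\mathrm{HJM}}_{y(s)}u(s)\,ds$. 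By \autoref{ass:support} the support of $Y$ is all of $\mathbb Y$, so by the support theorem applied to $Y$ alone, the controls $v$ can be chosen to drive $y(\cdot)$ through any prescribed piecewise-constant trajectory in $\mathbb Y$. Exploiting $(\mathcal S(s)\sigma^{\mathrm{HJM}}_{y})(\tau)=\sqrt{a_y}\,e^{\beta_y s}\,e^{\beta_y\tau}$, a standard impulse-approximation argument (piecewise constant $y$ taking values $y_1,\dots,y_N$, each $\beta_{y_i}\in[\underline\beta,0)$, combined with controls $u$ concentrated on small intervals) shows that the reachable set at time $T$ contains, modulo $\mathcal S(T)h_0$ and an explicit deterministic drift contribution, arbitrary finite linear combinations
\begin{equation*}
\sum_{i=1}^N c_i\,e^{\beta_i\tau},\qquad c_i\in\mathbb R,\;\beta_i\in[\underline\beta,0).
\end{equation*}
Since $\mu^{\mathrm{HJM}}_y$ from \autoref{equ:va:hjm_drift_vola} is itself a linear combination of two exponentials, the drift contribution can be absorbed into the same family, so reaching an arbitrary target $h^\ast\in\mathbb H$ is reduced to approximating $h^\ast-\mathcal S(T)h_0$ by such an exponential sum.

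The final, and main, step is to show that $\mathrm{span}\{e^{\beta\cdot}:\beta\in[\underline\beta,0)\}$ is dense in $\mathbb H$. By the Hahn–Banach / Riesz theorem it suffices to prove that any $g\in\mathbb H$ with $\langle g,e^{\beta\cdot}\rangle_{\mathbb H}=0$ for all $\beta$ in an interval must vanish. Because of \autoref{ass:abscissa}, the pairing $\beta\mapsto\langle g,e^{\beta\cdot}\rangle_{\mathbb H}$ extends to a holomorphic function of $\beta$ on a complex half-plane (its coefficient involves integrals of $g$ and its derivatives against $e^{\beta\tau}$ times the weight, all of which converge once $\mathrm{Re}\,\beta$ is past the abscissa). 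Vanishing on a real interval therefore forces it to vanish identically, which by uniqueness of the Laplace transform on $L^1_{\mathrm{loc}}$ (again using \autoref{ass:abscissa}) forces $g=0$. Combining this density with the support-theorem reduction yields $\mathrm{supp}\,h(T)=\mathbb H=\operatorname{pr}_{\mathbb H}(\mathcal I)$, which is precisely the consistent recalibration property.

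The main obstacle is verifying the hypotheses of the support theorem in the infinite-dimensional SPDE setting: one must check that the vector fields $h\mapsto\mu^{\mathrm{HJM}}_y$ and $h\mapsto\sigma^{\mathrm{HJM}}_y$ are smooth enough (in $y$, with $h$-derivatives trivially vanishing) so that the result of \cite{nakayama2004support} applies in the Hilbert space $\mathbb H$ constructed in \autoref{thm:va:convergence}, and that the impulse/piecewise-constant controls described above can actually be realised within the admissible control class of that theorem. Secondary care is needed in stating the analyticity and Laplace-uniqueness arguments in a form that holds for the specific weighted-Sobolev norm of $\mathbb H$ rather than for plain $L^2$.
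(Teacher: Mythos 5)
Your proposal is correct and follows essentially the same route as the paper's proof: apply the support theorem of \cite{nakayama2004support} to identify the support with the closure of the reachable set, use \autoref{ass:support} to steer the parameter trajectory freely, observe that variations of the $h$-control add $\operatorname{span}\{\sigma_y^{\mathrm{HJM}}:y\in\mathbb Y\}=\operatorname{span}\{e^{\beta\cdot}:\beta\in[\underline\beta,0)\}$ to the reachable set, and conclude by density of this span via analyticity in $\beta$ and uniqueness of the Laplace transform (the paper cites \cite[Proposition~1.7.2]{arendt2011vector} together with \autoref{ass:abscissa} for exactly this step). Your explicit impulse-approximation of the controls and your caveat about verifying the Laplace argument for the weighted-Sobolev inner product are just more detailed renderings of steps the paper treats tersely.
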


\begin{proof}
Let $(h,Y)$ be the solution of \eqref{equ:va:crc} and \eqref{equ:va:y} with initial value $(h_0,Y_0)$ and let $T\geq 0$. By \cite[Theorem~1.1]{nakayama2004support} the support of $(h_T,Y_T)$ is equal to the closure $\overline{\mathcal L_T}$ of $\mathcal L_T$, where $\mathcal L_T$ is the reachable set at time $T$ of the following control problem: in \eqref{equ:va:crc} and \eqref{equ:va:y} the Brownian motions are replaced by piecewise continuously differentiable control functions. Let $(\hat h,\hat Y)$ be the solution of \eqref{equ:va:crc} and \eqref{equ:va:y} for the zero control. Taking variations in the control for $\hat Y$ implies that $\{\hat h_T\}\times\mathbb Y \subseteq \overline{\mathcal L_T}$ thanks to \autoref{ass:support}. Adding variations in the control of $h$ improves this to $(\{\hat h_T\}+\operatorname{span}\{\sigma_y^{\mathrm{HJM}}: y \in \mathbb Y\}) \times \mathbb Y \subseteq \overline{\mathcal L_T}$. The set $\operatorname{span}\{\sigma_y^{\mathrm{HJM}}: y \in \mathbb Y\}$ is dense in $\mathbb H$ because its orthogonal complement vanishes by \cite[Proposition~1.7.2]{arendt2011vector} and \autoref{ass:abscissa}. Therefore, $\mathbb H\times \mathbb Y \subseteq\overline{\mathcal L_T}$, and the consistent recalibration property holds.
\end{proof}

\subsection{Example}\label{sec:va:example}

We present an example of a Vasi\v cek CRC model based on \cite{chiarella2004class}. In this model, the volatility is stochastic, but the speed of mean reversion is not. Therefore, the conditions of \autoref{thm:va:con} are not satisfied, and it turns out that the model admits a finite-dimensional realisation. The explicit formula for bond prices in this model will serve as a reference for showing convergence of the numerical simulation scheme for CRC models in the continuous-time limit.

The parameter process $Y$ is a CIR process with values in $\mathbb Y=\mathbb R_+$ given by SDE \eqref{equ:va:y} with a possibly correlated Brownian motion $\widetilde W$ and coefficients $\mu(y)=m+\mu y$ and $\sigma(y)=\sigma y$ for some $m\geq0,\mu\leq 0$, and $\sigma\geq 0$. The Vasi\v cek drift and volatility in the HJM equation are given by \autoref{equ:va:hjm_drift_vola} with $\beta_y= \beta$ for some constant $\beta<0$ and $a_y=y$, i.e.,
\begin{equation*}
\mu^{\mathrm{HJM}}_y(\tau)
=-\frac{y}{\beta}e^{\beta\tau}\left(1-e^{\beta\tau}\right),
\qquad
\sigma^{\mathrm{HJM}}_y(\tau)
=\sqrt{y}e^{\beta\tau}, 
\qquad\text{for all } \tau\in\mathbb R_+.
\end{equation*}
If $h(0)\in C^1(\mathbb R_+)$, there is a closed-form solution of CRC equation \eqref{equ:va:crc},
\begin{equation*}
h(t,\tau)=h(0,t+\tau)-\int_{0}^{t}\frac{Y(s)}{\beta}e^{\beta(\tau+t-s)}\left(1-e^{\beta(\tau+t-s)}\right)ds
+\int_{0}^{t}\sqrt{Y(s)}e^{\beta(\tau+t-s)}dW(s).
\end{equation*}
Setting $\xi(t)=\int_0^tY(s)e^{2\beta(t-s)}ds$ and $r(t)=h(t,0)$, this can be rewritten as
\begin{align}\label{equ:va:example_h}
h(t,\tau)
&=h(0,t+\tau)+e^{\beta\tau}\left(r(t)-h(0,t)\right)+\frac{1}{\beta}\left(e^{2\beta\tau}-e^{\beta\tau}\right)\xi(t).
\end{align}
Setting $\tau=0$ in \autoref{equ:va:crc} and plugging in \autoref{equ:va:example_h} yields
\begin{align*}
dr(t)&=\left(\mathcal Ah(0,t)-\beta h(0,t)+\beta r(t)+\xi(t)\right)dt+\sqrt{Y(t)}dW(t).
\end{align*}
Summing up, the process $X=(r,\xi,Y)$ is given by the SDE
\begin{equation*}
\left\{
\begin{aligned}
dr(t)&=\left(\frac{\partial h}{\partial\tau}(0,t)-\beta h(0,t)+\beta r(t)+\xi(t)\right)dt+\sqrt{Y(t)}dW(t),
\\
d\xi(t)&=\big(2\beta \xi(t)+Y(t)\big) dt,
\\
dY(t)&=\big(m+\mu Y(t))dt+\sigma \sqrt{Y(t)} d\widetilde W(t),
\end{aligned}
\right.
\end{equation*}
where $h(0)\in C^1(\mathbb R_+)$ is a given initial forward rate curve. It follows that $X$ is an affine factor process for the short rate as described in \autoref{sec:hw} with $d=3$, $\ell=0$, $\lambda=(1,0,0)^\top$. Thus, the CRC model has a finite-dimensional realisation. If $\sigma=0$, the affine bond pricing formula is particularly simple: bond prices are given by
\begin{equation*}
P(t,T)=e^{\int_t^T \left(e^{\beta(s-t)}h(0,t)-h(0,s)\right)ds+\beta^{-1}\left(1-e^{\beta t}\right)r(t)-\frac{1}{2}\beta^{-2}\left(1-e^{\beta  t}\right)^2\xi(t)},
\end{equation*}
where $\xi(t)$ is deterministic and satisfies
\begin{equation}\label{equ:va:example:xi}
\xi(t)=
\left\{\begin{aligned}
&Y(0)\frac{e^{2 \beta  t}-e^{\mu  t}}{2 \beta -\mu}+\frac{m(2 \beta
-\mu -2 \beta e^{\mu  t}+\mu e^{2 \beta  t})}{2 \beta  \mu  (2 \beta
-\mu )},
& \text{if }\mu&< 0,
\\
&Y(0)\frac{e^{2 \beta  t}-1}{2 \beta }+\frac{m \left(e^{2 \beta  t}-2
\beta  t-1\right)}{4 \beta ^2},
&\text{if }\mu&=0,
\end{aligned}\right.
\end{equation}
and $r(t)$ is normally distributed with mean
\begin{equation}\label{equ:va:ex_mean}
e^{\beta t} r(0)+\int_0^t e^{\beta(t-s)} \big(\mathcal Ah(0,s)-\beta
h(0,s)+\xi(s)\big)ds,
\end{equation}
and variance
\begin{equation}\label{equ:va:ex_var}
\frac{Y(0)}{2\beta}\left(e^{2\beta t}-1\right)+\frac{m}{4\beta^2}\left(-2\beta t+e^{2\beta t}-1\right).
\end{equation}

\subsection{Calibration of \texorpdfstring{Vasi\v cek}{Vasicek} CRC models} \label{sec:va:ca}

As outlined in \autoref{sec:crc:calibration}, we first consider $y$ as fixed and suppress the dependence on $y$ in the notation. For any selection of times to maturity $\tau_i,\tau_j$, estimators for $a,\beta$ can be obtained as described in \autoref{sec:hw:estimation} by solving for those $\widehat a,\widehat \beta$ which achieve the best fit in \eqref{equ:hw:covariation}, i.e., 
\begin{equation}\label{equ:va:covariation}\begin{aligned}
\frac{[\widehat r(\cdot,\tau_i),\widehat r(\cdot,\tau_j)](t_n)
-[\widehat r(\cdot,\tau_i),\widehat r(\cdot,\tau_j)](t_{n-M})}
{t_n-t_{n-M}}
&\approx
a\, \frac{e^{\beta\tau_i}-1}{\beta\tau_i} \frac{e^{\beta\tau_j}-1}{\beta\tau_j}
\\&\approx
a\, \frac{\beta\tau_i+\beta^2\tau_i^2/2}{\beta\tau_i}\; \frac{\beta\tau_j+\beta^2\tau_j^2/2}{\beta\tau_j}.
\end{aligned}\end{equation}
Varying the calibration time $t_n$ creates a time series of coefficients $\widehat a(t_n),\widehat \beta(t_n)$ for which we need to specify and calibrate a model. Some models are described in \autoref{sec:num:params}, below.

\section{Consistent recalibration of Cox-Ingersoll-Ross models}\label{sec:cir}

\subsection{Overview}

We give a brief overview of CRC models based on CIR short rates. The overview is sufficient to set the notation for the empirics in \autoref{sec:num}. A detailed description is provided in the online appendix to this paper. For comparison, we briefly digress to the CIR++ model and its CRC version. 

\subsection{Hull-White extended Cox-Ingersoll-Ross models}\label{sec:cir:hw}

We use the setup of \Autoref{sec:hw:setup,sec:crc:setup}, setting $\mathbb X=\mathbb R_+$, $\ell=0$, $\lambda=1$.  We do not specify the parameter space $\mathbb Y$, yet, but we assume that for each $(x,y)\in\mathbb X\times\mathbb Y$, the volatility and drift coefficients are given by $A_y(x)=\alpha_y x$ and $B_y(x)=\beta_y x$ for some $\alpha_y \in (0,\infty)$ and $\beta_y \in (-\infty,0)$. For simplicity, we again choose equidistant grids of times $t_n=n\delta$ and times to maturity $\tau_n=n\delta$, for all $n\in\mathbb N_0$, where $\delta$ is a positive constant. 

The CRC algorithm is similar to the Vasi\v cek model, with the following important differences:

\begin{itemize}
\item The admissibility condition in \autoref{ass:hw:x} is satisfied if and only if $\theta(t)\geq 0$, for all $t\in\mathbb R_+$. This condition can be problematic in practise, as discussed in \autoref{sec:num:negtheta}. Moreover, if this condition is expressed in terms of forward rates $h(t)=\mathcal H_y(\theta(t),X(t))$ instead of Hull-White extensions $\theta(t)$, it becomes apparent that the set of admissible forward rate curves depends on the parameter $y$. This makes it difficult to formulate convergence results similar to those in the Vasi\v cek case.

\item In contrast to the Vasi\v cek model, there does not seem to be a closed-form expression for $\theta=\mathcal C_y(h,x)$ because the Volterra kernel in the integral operator $\mathcal H_y$ is more complicated. Therefore, the Volterra equation is solved by numerical approximation of order two using a discretisation in the time to maturity. 
\end{itemize}

\subsection{CIR++ models in the CRC framework}\label{sec:cir:cirpp}

In the CIR++ model \cite[Section 3.9]{brigo2007}, also known as deterministic shift-extended CIR model, the short rate process is defined by $r(t)=X(t)+\theta(t)$, where $X$ is a CIR process and $\theta$ is a deterministic function of time. Note that this is a different time-inhomogeneity than the one described in \autoref{sec:cir:hw}. In particular, the factor process $X$ is time-homogeneous and does not coincide with the short rate. The HJM equation of the CIR++ model is
\begin{equation}\label{equ:cir:hjmpp}\begin{aligned}
dh(t)&=\left(\mathcal Ah(t)+\mu_{y}^{\mathrm{HJM}}\big(X(t)\big)\right)dt+\sigma_{y}^{\mathrm{HJM}}\big(X(t)\big)dW(t), 
\\
dX(t)&=\left(b_y+\beta_y X(t)\right)dt + \sqrt{\alpha_y X(t)}dW(t),
\end{aligned}\end{equation}
where $\mu_{y}^{\mathrm{HJM}}$ and $\sigma_{y}^{\mathrm{HJM}}$ are the same as in the CIR case. In the CRC extension of this model, $y$ is replaced by a stochastic process $Y$. This model has both advantages and disadvantages over the consistently recalibrated CIR model:
\begin{itemize}
\item The SDE for $X$ does not depend on $h$. Therefore, existence and uniqueness of $X$ can be shown by standard methods. Then a mild solution $h$ can be constructed by stochastic convolution \cite[Section 6.1]{daPrato2014se}:
\begin{align*}
h(t) = \mathcal S(t)h(0) + \int_0^t \mathcal S_{t-s} \mu_{Y(s)}^{\mathrm{HJM}}\big(X(s)\big)ds
+\int_0^t \mathcal S_{t-s} \sigma_{Y(s)}^{\mathrm{HJM}}\big(X(s)\big) dW(s).
\end{align*}

\item The function $\theta$ is allowed to assume negative values and can be calibrated to a given yield curve without having to invert a Volterra integral operator. However, this calibration requires one to know the current value of $X$. This can be seen from the equation for forward rate curves
\[
h(t)=\mathcal S(t)\theta-b_y\Psi_y-\Psi'_yX(t),
\] 
where $\Psi_y$ is the same as in the CIR case (c.f. \autoref{sec:hw:riccati}). In contrast to the CIR model, the process $X$ is not directly observable from the short end of the term structure. Therefore, $X(t)$ and the parameters $\alpha_y$ and $\beta_y$ have to be estimated jointly from realised covariations of yields as described in \autoref{sec:hw:calibration}. Moreover, in contrast to the CIR model the parameter $b_y$ is not redundant and has to be estimated using the same methodology as for general multi-factor models (see \autoref{sec:hw:calibration}).
\end{itemize}

\section{Empirical results}\label{sec:num}

\subsection{Overview}

CRC models based on Vasi\v cek and CIR short rates are calibrated to Euro area yield curves. Properties of the calibrated models are studied in comparison to market data and models without consistent recalibration. 
Our empirical findings show that the assumption of constant parameters in the Vasi\v cek and CIR models is too restrictive. Therefore, the additional flexibility provided by CRC models is a useful tool to better capture the market dynamics. This is also reflected in better fits of the covariance matrix of yields. 


\subsection{Description of the data}

We consider the zero-coupon yield curves released by the European Central Bank (ECB) on a daily basis. The yields are estimated from AAA-rated (Fitch Ratings) Euro area central government bonds being actively traded on the market. Estimation is done by the ECB using the Svensson family of curves, see \cite{svensson1994estimating,ecbNotes}. Data is available from September 6, 2004, and we set April 1, 2014 to be the last observation date. In total, this results in 2454 observed yield curves with times to maturity ranging from 3 months up to 30 years. Yields are continuously compounded (c.f.~\autoref{equ:hw:yields}) and denoted by $\widehat r(t,\tau)$, with $\tau$ being the time to maturity. A selection of yields is shown in \Autoref{fig:num_yields,fig:num_yield_curves}. The short rate is approximated by the yield with the lowest time to maturity (3 months) and is depicted in \autoref{fig:num_yields}.

\begin{figure}
\centering 
\includegraphics[scale=0.4]{./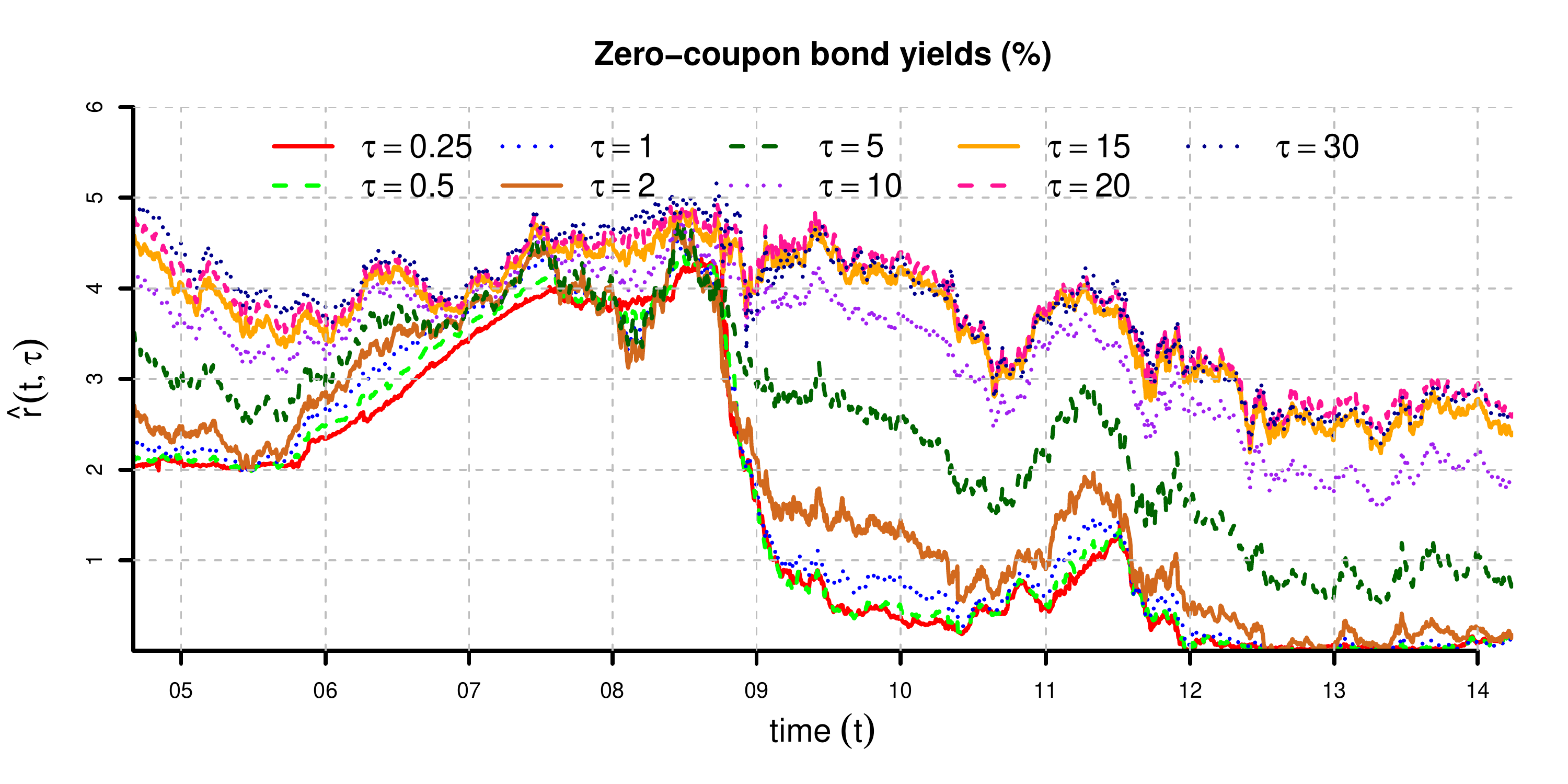} 
\caption{Historical zero-coupon yields estimated by the ECB for various times to maturity from 06/09/2004 to 01/04/2014. We use the 3-month yields ($\tau=0.25$) as a proxy for the short rate.}
\label{fig:num_yields}
\end{figure}

\begin{figure}
\centering 
\includegraphics[scale=0.4]{./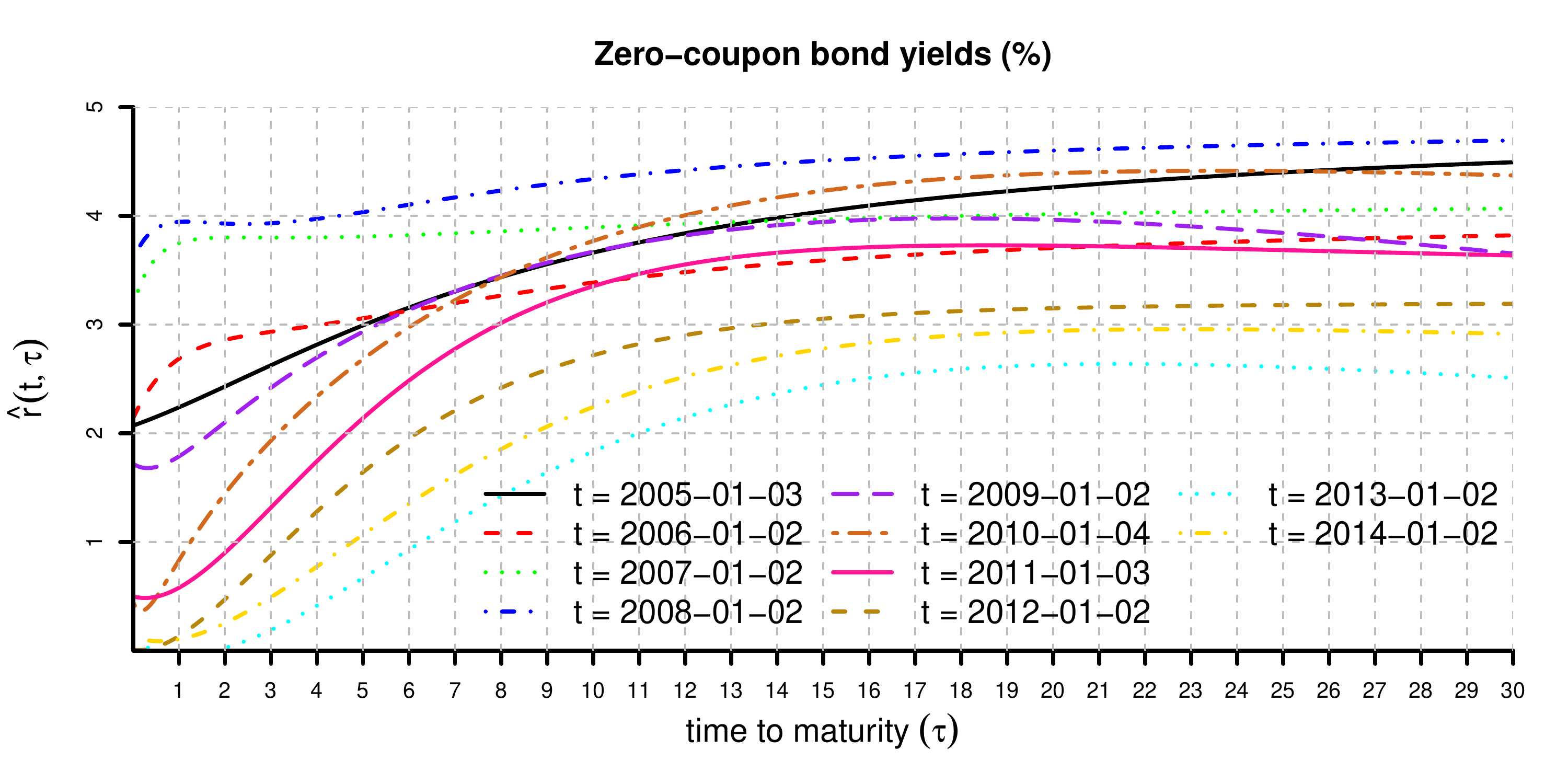} 
\caption{Zero-coupon yield curves estimated by the ECB for various observation dates.}
\label{fig:num_yield_curves}
\end{figure}

\subsection{Calibration of CRC models}\label{sec:num:ca}

The CRC models based on Vasi\v cek and CIR short rates are calibrated as described in \Autoref{sec:va:ca} and the online appendix. Time steps $\delta=240^{-1}$ of one business day and time windows of $M=100$ business days are used for the calibration. The choice $M=100$ is a  compromise between accuracy and over-smoothing and gives reasonable results over the time horizon of roughly a decade (see Figures~\ref{fig:num_sigma_vasicek} and \ref{fig:num_sigma_cir}). For $\tau_1\ll1$ one immediately obtains from  \autoref{equ:va:covariation} and its CIR counterpart the estimator
\begin{align}\label{equ:va:a}
\hat{a}(t)=\frac{[\widehat r(\cdot,\tau_1),\widehat r(\cdot,\tau_1)](t)-[\widehat r(\cdot,\tau_1),\widehat r(\cdot,\tau_1)](t-\delta M)}{\delta M},
\end{align}
in the Vasi\v cek case, and 
\begin{align}\label{equ:cir:alpha}
\hat{\alpha}(t)=\frac{[\widehat r(\cdot,\tau_1),\widehat r(\cdot,\tau_1)](t)-[\widehat r(\cdot,\tau_1),\widehat r(\cdot,\tau_1)](t-\delta M)}{\delta\sum_{m=0}^{M-1}\hat{r}(t-\delta m,\tau_1)},
\end{align}
in the CIR case, where the quadratic variation is estimated by \eqref{equ:va:covariation}. On the other hand, taking $\tau_2\gg1$, one can solve \eqref{equ:va:covariation} and its CIR counterpart for $\beta$ and obtain the estimator 
\begin{align}\label{equ:va:beta}
\hat{\beta}(t)=-\frac{1}{\tau_2}\left(\frac{\delta M\hat{a}(t)}{[\widehat r(\cdot,\tau_2),\widehat r(\cdot,\tau_2)](t)-[\widehat r(\cdot,\tau_2),\widehat r(\cdot,\tau_2)](t-\delta M)}\right)^{\frac{1}{2}},
\end{align}
in the Vasicek case, and 
\begin{equation}\label{equ:cir:beta}
\begin{split}
\hat{\beta}(t)&=\frac{\sqrt{\hat{\alpha}(t)}}{2}\tau_2\left(\frac{[\widehat r(\cdot,\tau_2),\widehat r(\cdot,\tau_2)](t)-[\widehat r(\cdot,\tau_2),\widehat r(\cdot,\tau_2)](t-\delta M)}{\delta\sum_{m=0}^{M-1}\hat{r}(t-\delta m,\tau_1)}\right)^{\frac{1}{2}}
\\
&\quad-\frac{\sqrt{\hat{\alpha}(t)}}{\tau_2}\left(\frac{[\widehat r(\cdot,\tau_2),\widehat r(\cdot,\tau_2)](t)-[\widehat r(\cdot,\tau_2),\widehat r(\cdot,\tau_2)](t-\delta M)}{\delta \sum_{m=0}^{M-1}\hat{r}(t-\delta m,\tau_1)}\right)^{-\frac{1}{2}},
\end{split}
\end{equation}
in the CIR case. The resulting trajectories of the estimated Vasi\v cek volatility $\sqrt{a}$ and CIR volatility $\sqrt{\alpha}$ are shown in \Autoref{fig:num_sigma_vasicek, fig:num_sigma_cir}. 
The trajectories of the speed of mean reversion $-\beta$ for both models are plotted in \Autoref{fig:num_beta_vasicek,fig:num_beta_cir}. 

\begin{figure}  
\centering 
\includegraphics[scale=0.4]{./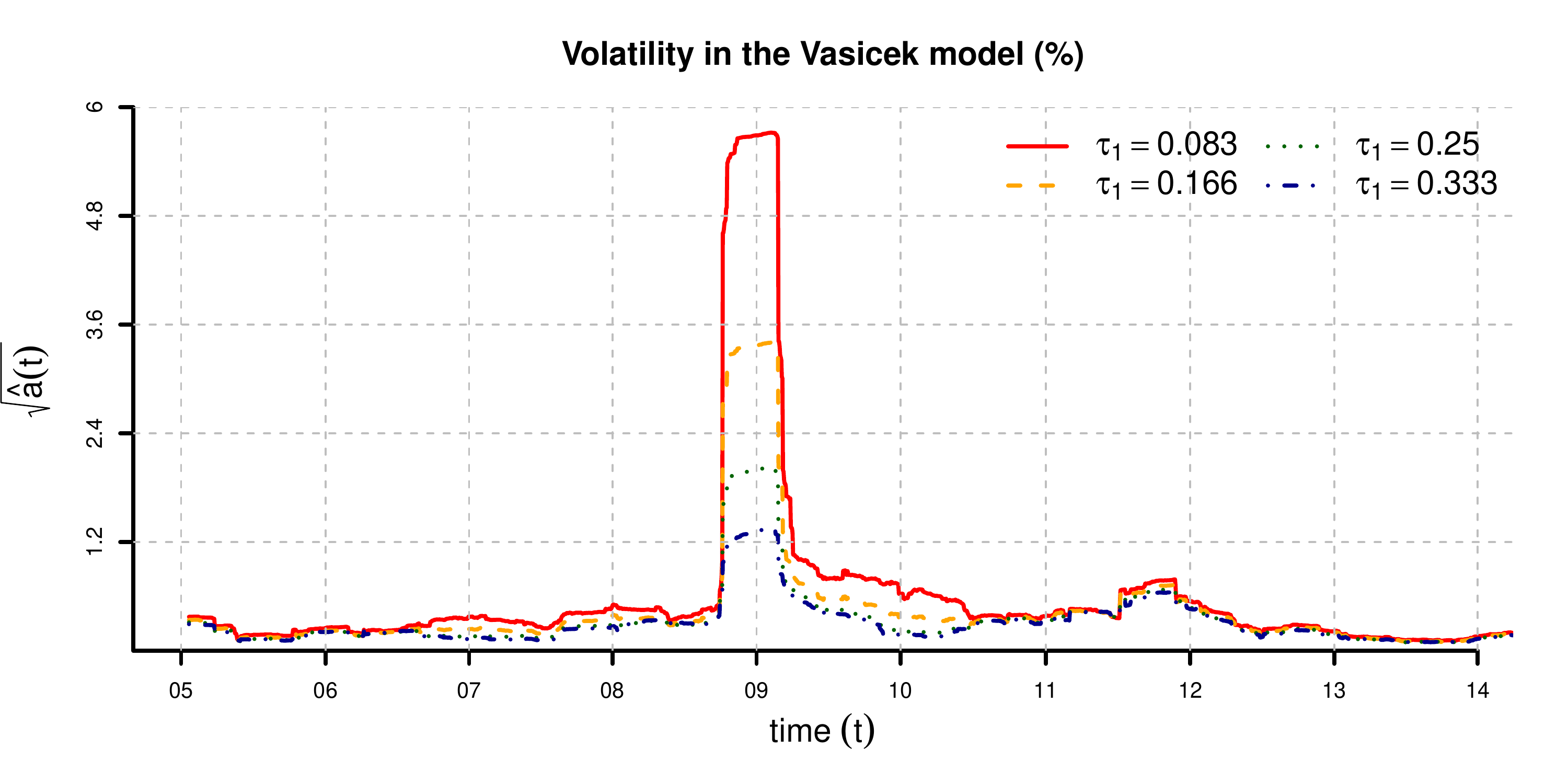} 
\caption{Volatility parameter of the Vasi\v cek model estimated by \eqref{equ:va:a} using a time window of $M=100$ yields with time to maturity $\tau_1$.}
\label{fig:num_sigma_vasicek}
\end{figure}

\begin{figure}
\centering 
\includegraphics[scale=0.4]{./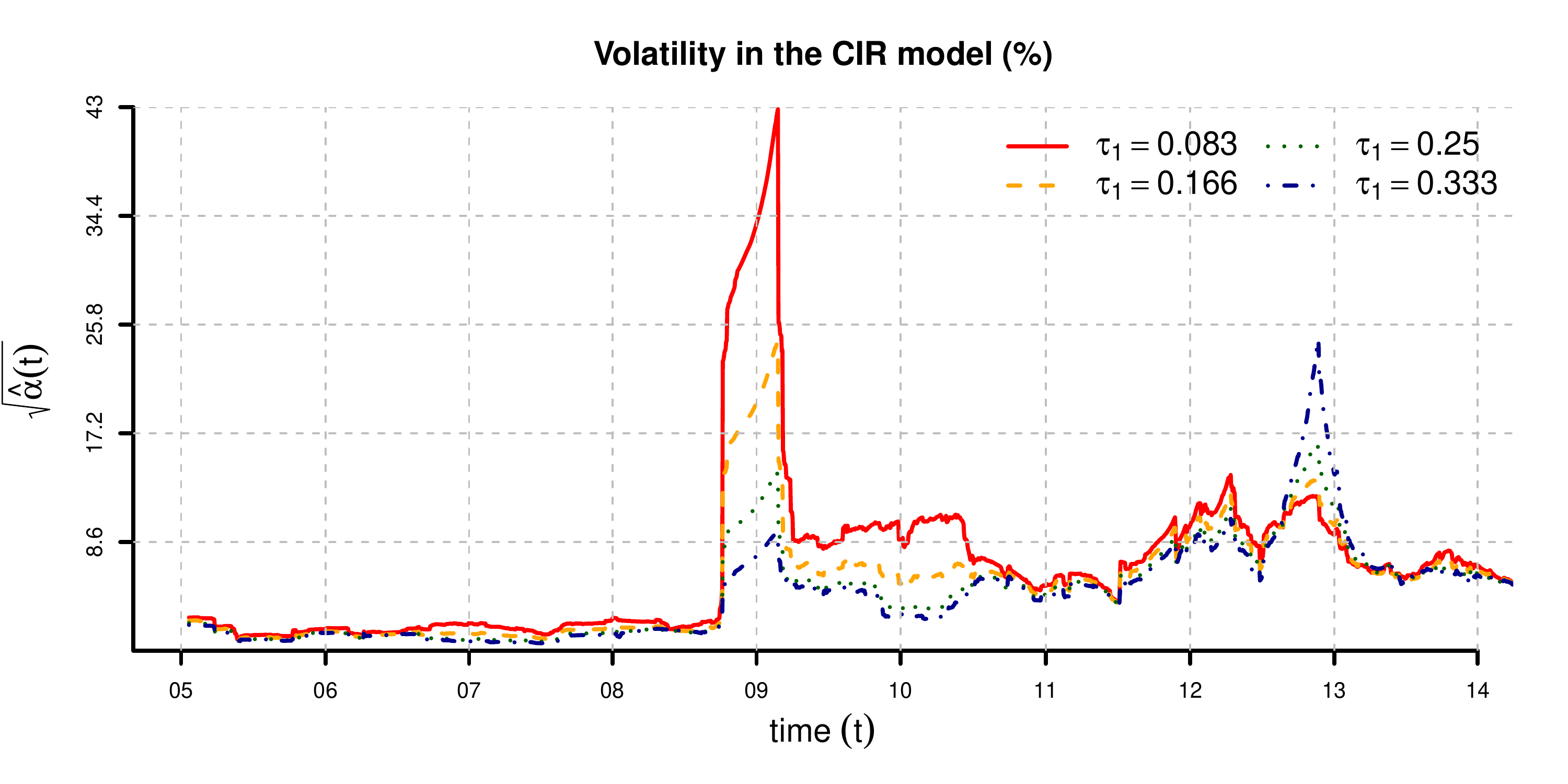} 
\caption{Volatility parameter of the CIR model estimated by \eqref{equ:cir:alpha} using a time window of $M=100$ yields with time to maturity $\tau_1$.}
\label{fig:num_sigma_cir}
\end{figure}

\begin{figure}
\centering 
\includegraphics[scale=0.4]{./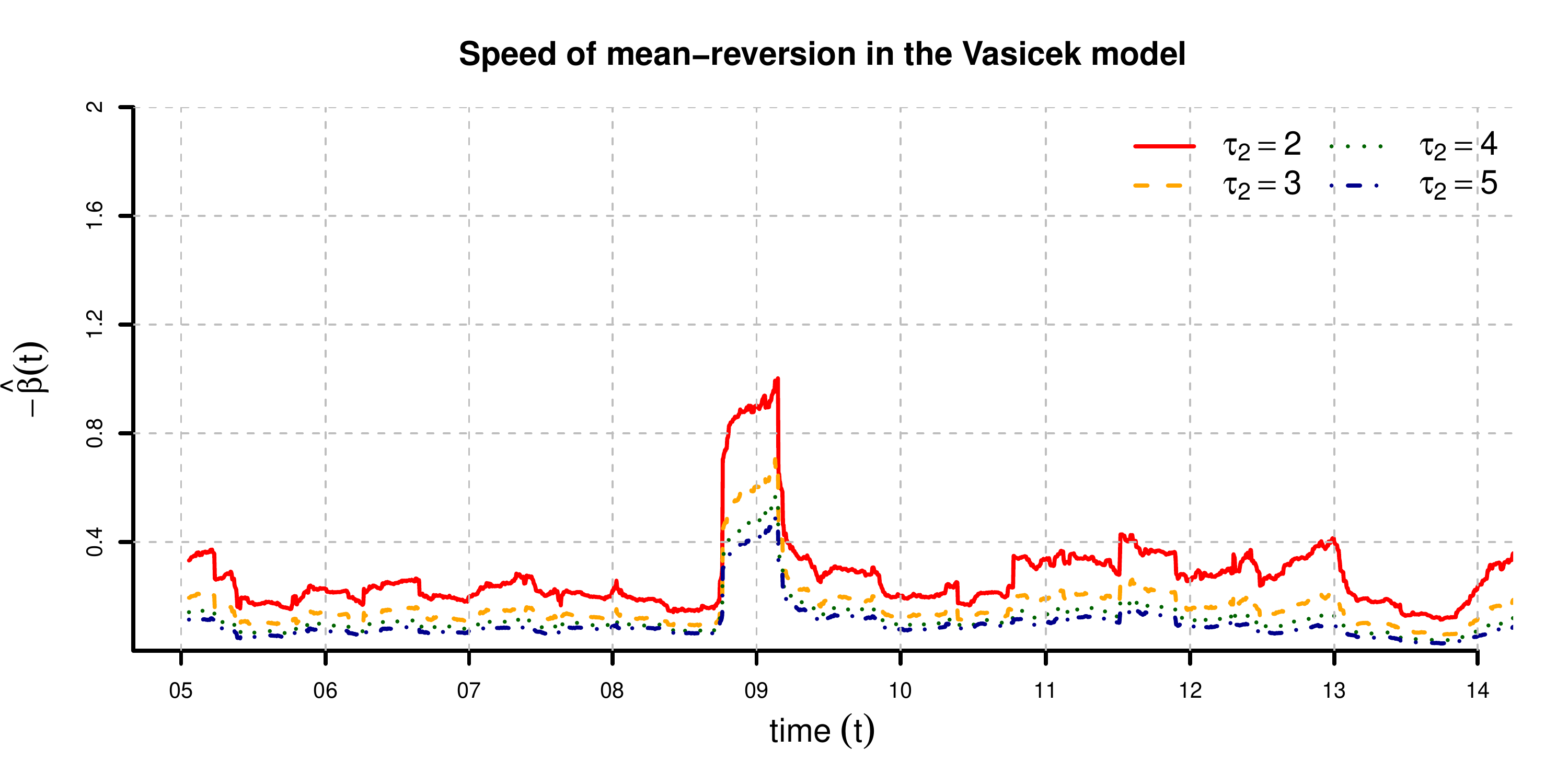} 
\caption{Speed of mean-reversion parameter of the Vasi\v cek model estimated by \eqref{equ:va:beta} using a time window of $M=100$ yields with times to maturity $\tau_1=0.25$ and various values of $\tau_2$.}
\label{fig:num_beta_vasicek}
\end{figure}

\begin{figure}
\centering 
\includegraphics[scale=0.4]{./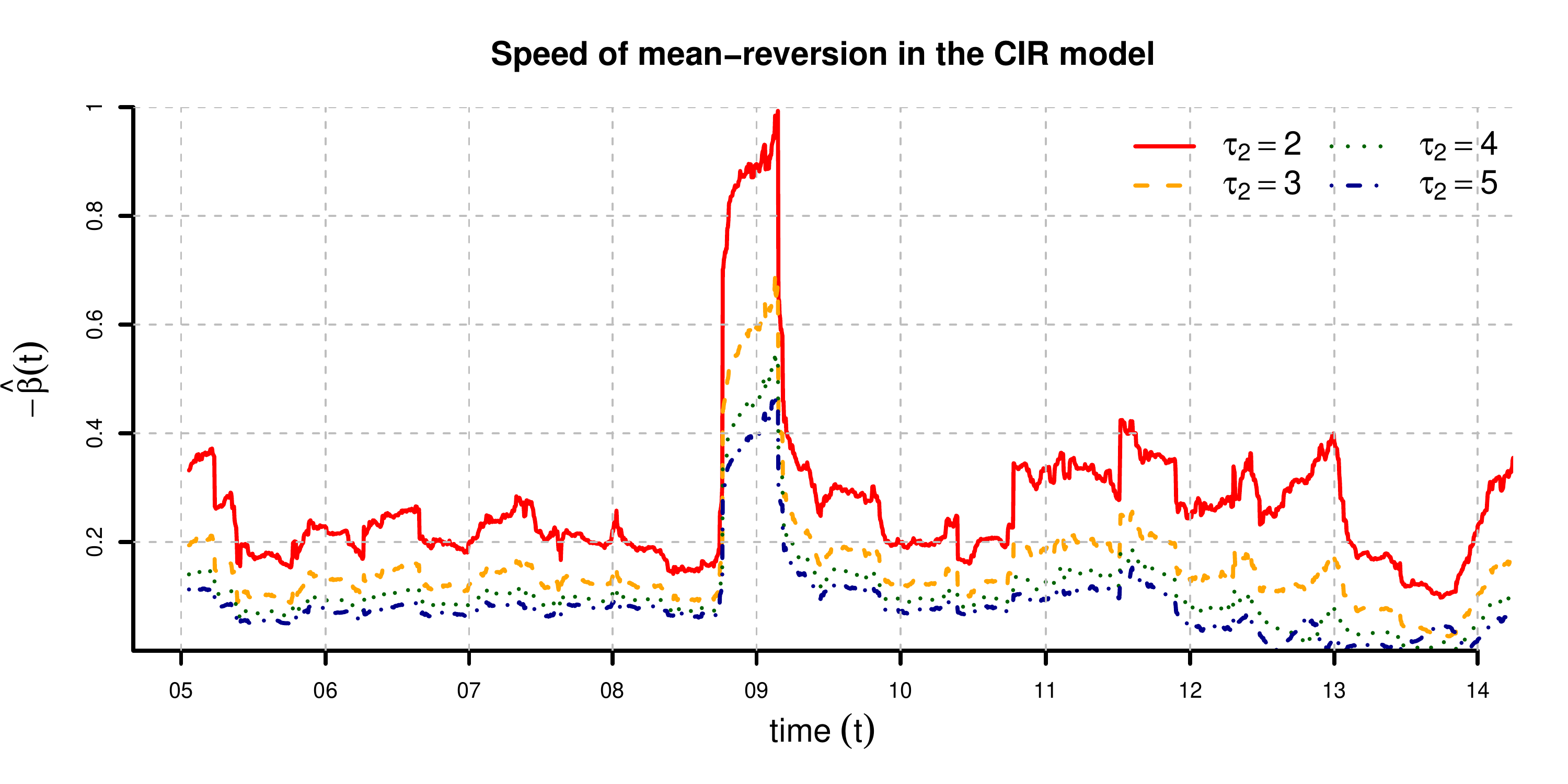} 
\caption{Speed of mean-reversion parameter of the CIR model estimated by \eqref{equ:cir:beta} using a time window of $M=100$ yields with times to maturity $\tau_1=0.25$ and various values of $\tau_2$.}
\label{fig:num_beta_cir}
\end{figure}

It turns out that for most parts of the data, $a$ and $\alpha$ do not depend too much on the times to maturity used in the estimation, whereas $\beta$ does. Typically, smaller times to maturity result in larger values of $-\beta$, as shown in \Autoref{fig:num_beta_vasicek,fig:num_beta_cir}. This means that one-factor Vasi\v cek and CIR models are not flexible enough to reproduce the ECB yield curve movements in full accuracy, and so the choice of times to maturity used in the calibration procedure may have an impact on the results. In particular we set $\tau_1=0.25$ and $\tau_2=2$ (i.e.\ 3 months and 2 years). 

The dependence of the estimated parameter $\beta$ on the choice of times to maturity suggests to use multi-factor models as building blocks for CRC models. In the empirical part of this paper, we aim for a detailed understanding of the one-factor case and leave the extension to multiple factors for future research. As pointed out, accurate modelling of long-term rates might require one to enlarge the model class even further to allow the forward rate volatilities to decay slower than exponentially.

By the theory of Hull-White extensions, an exact match to the initial yield curve is always achieved, but the corresponding time-homogeneous models often do not match the initial yield curve well (\Autoref{fig:num_hwe3,fig:num_hwe4}). This is not surprising as they are calibrated to the yield curve \emph{dynamics} and not to the initial yield curve. This separation of dynamics and initial calibration is actually one of the strengths of our approach. 

\begin{figure}
\centering 
\includegraphics[scale=0.4]{./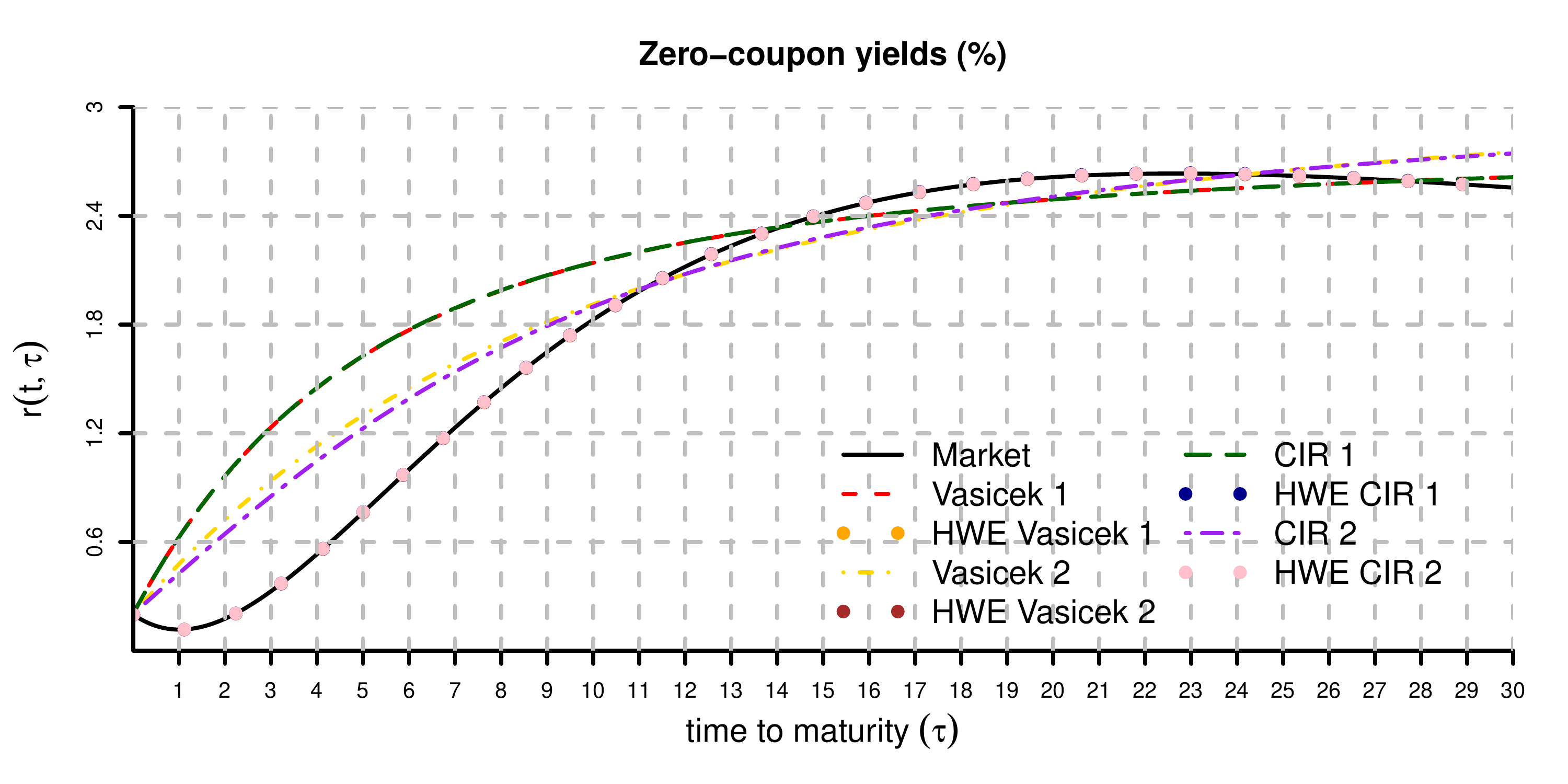} 
\caption{Calibrations of some homogeneous and Hull-White extended models as of 1 April 2014. Vasicek 1 and CIR 1 are homogeneous models calibrated to the yield curve dynamics using \eqref{equ:va:a}-\eqref{equ:cir:beta} with $\tau_1=0.25$ and $\tau_2=2$. Vasi\v cek 2 and CIR 2 are homogeneous models calibrated to the prevailing yield curve by least squares. The Hull-White extended models match the initial yield curve exactly.}
\label{fig:num_hwe3}
\end{figure}

\begin{figure}
\centering 
\includegraphics[scale=0.4]{./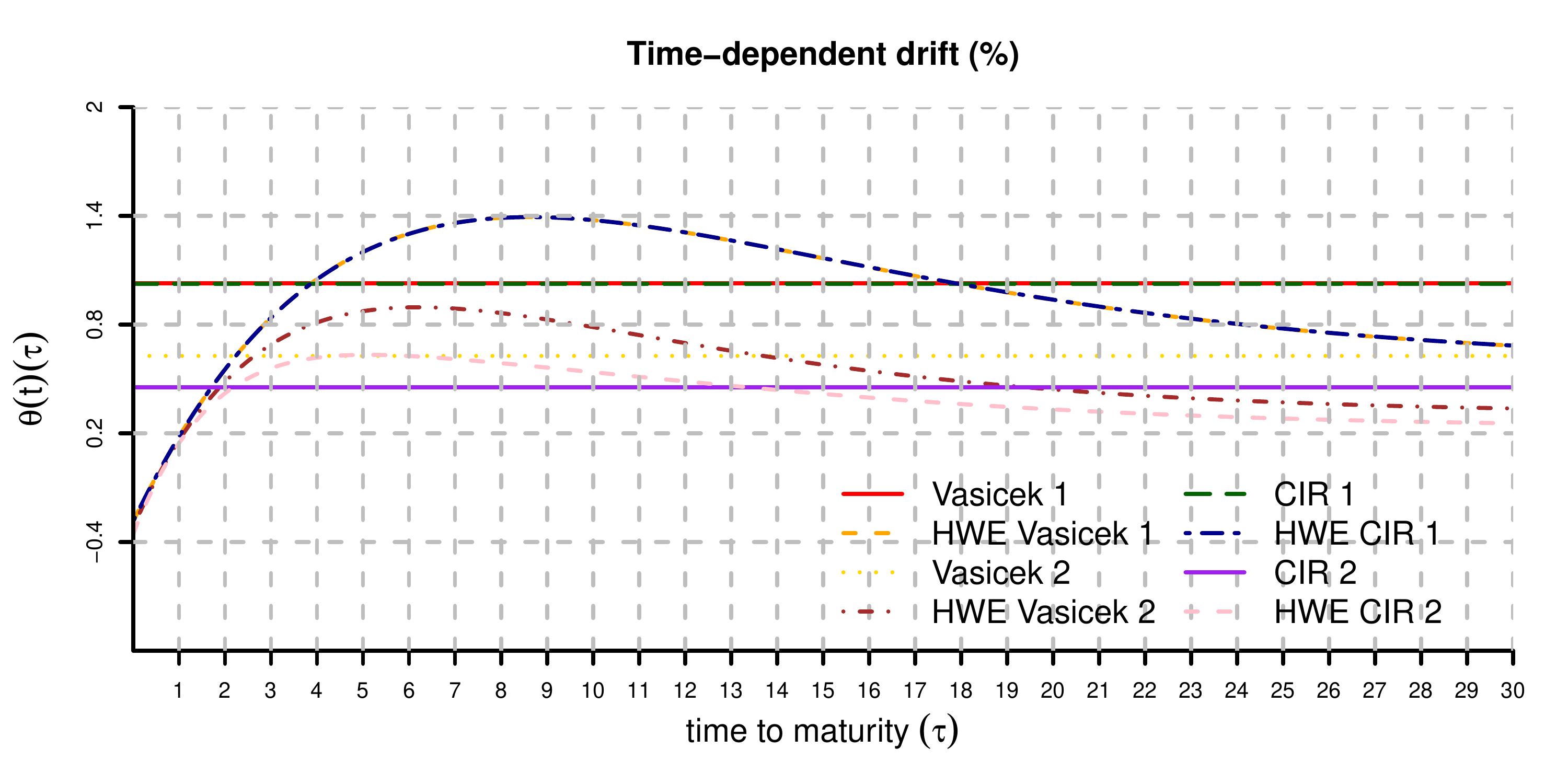} 
\caption{The Hull-White extensions $\theta(t)$ corresponding to the models of Figure \ref{fig:num_hwe3}. Note that $\theta(t)$ assumes negative values, which is typical in situations where the yield curve is inverted at the short end.}
\label{fig:num_hwe4}
\end{figure}

To model the dynamics of the Vasi\v cek coefficients $a,\beta$ and the CIR coefficients $\alpha,\beta$, we use geometric Brownian motions and/or CIR processes, as laid out in \autoref{sec:num:params}. Note that the assumption of constant coefficients, which is implicit in affine factor models without the CRC extension, is not realistic over long time horizons in view of \Autoref{fig:num_sigma_vasicek, fig:num_sigma_cir}. 

\subsection{Negative levels of mean reversion}\label{sec:num:negtheta}

A problematic aspect is that the time-dependent drift $\theta$ obtained by the calibration to the initial yield curve can assume negative values, which are not admissible in the CIR model. The problem occurs mostly in low interest rate scenarios with partially inverted yield curves (\Autoref{fig:num_hwe3,fig:num_hwe4,fig:num_hwe5}). In contrast, negative values of $\theta$ are allowed in the Vasi\v cek  model and might even be desirable for modelling bond markets with negative interest rates. 

\begin{figure}
\includegraphics[scale=0.4]{./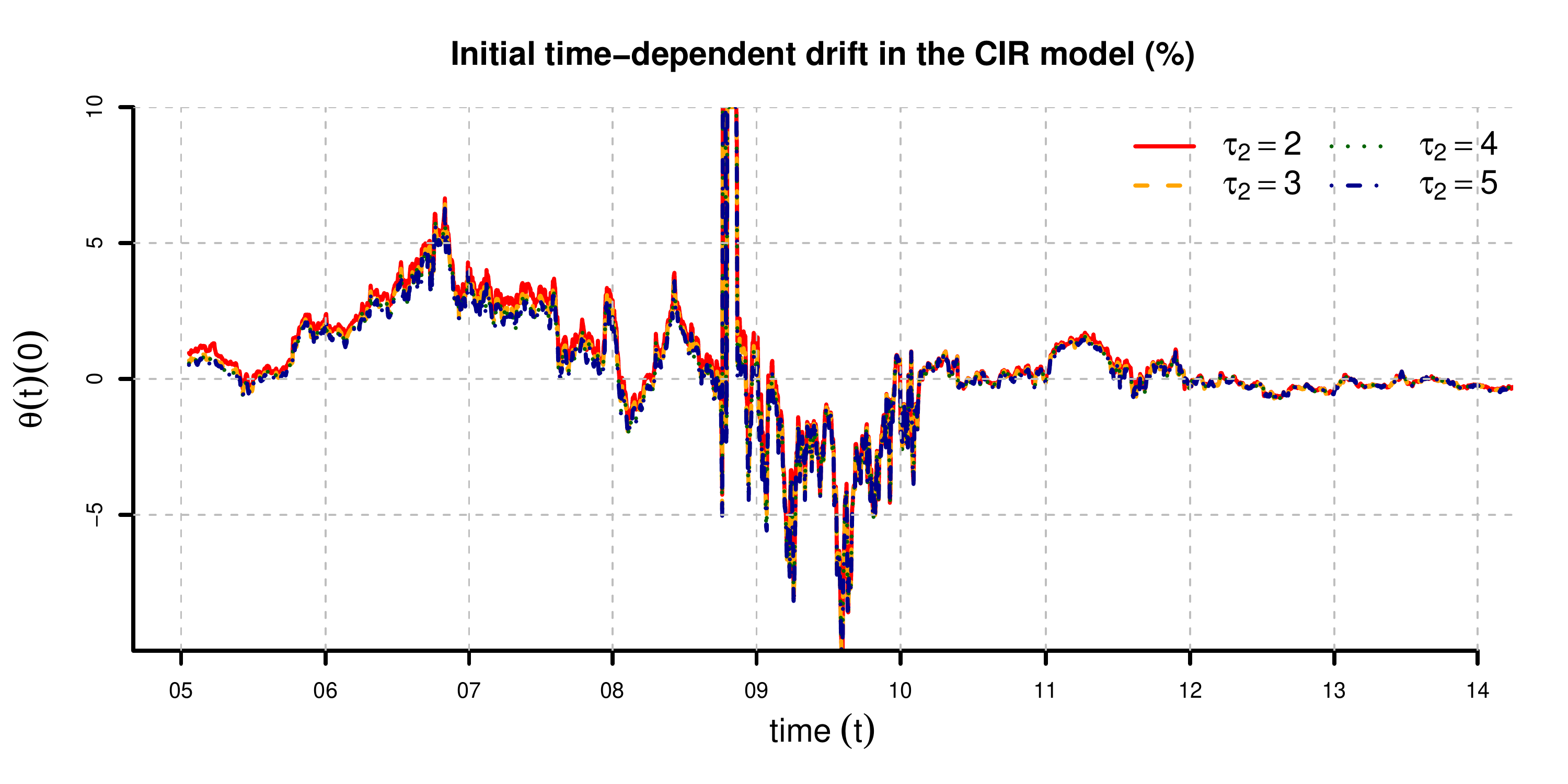} 
\caption{The historical values of $\theta(t)(0)$ in the CIR model calculated using the estimates of $\beta$ in Figure \ref{fig:num_beta_cir}. Negative values occur frequently in 2009 and 2012 to 2014. They are problematic for reasons laid out in \autoref{sec:num:negtheta}.}
\label{fig:num_hwe5}
\end{figure}

As only the short (left) end of $\theta$ is relevant for CRC models, at each step of the simulation scheme (cf. \Autoref{alg:va:crc}), it is sufficient to understand how $\theta(0)$ depends on the prevailing forward rate curve and the coefficients of the affine factor process. The general formula for $\theta(0)$ is
\[
\theta(0)=\mathcal C_{y}(h,x)(0)=\frac{1}{\langle\lambda,e_1\rangle}\left(h'(0)- F_y'(0)\cdot\lambda-\left\langle R_y'(0)\cdot\lambda,x\right\rangle\right),
\]
which follows by differentiating the relation $h=\mathcal H_y(\theta,x)$ with respect to the time to maturity $\tau$ and evaluating at $\tau=0$. 
In the Vasi\v cek and CIR case, the formula becomes $\theta(0)=h'(0)- \beta_y h(0)$. This shows that the problem can be alleviated to some extent by artificially choosing higher levels of $-\beta$, resulting in higher values of $\theta(0)$. For this reason we set $\tau_2=2$ instead of higher values of $\tau_2$ in the calibration of $\beta$ (\Autoref{fig:num_beta_vasicek,fig:num_beta_cir}). Despite this correction, $\theta(0)$ remains negative for most parts of 2009 and 2012--2014. During these periods of time, the CIR model cannot be fitted simultaneously to the yield curves as well as their volatilities and has to be rejected. 

\subsection{Models for parameter evolutions}\label{sec:num:params}

There are very few restrictions on the choice of parameter process. It can be chosen exogenously for scenario based simulation or calibrated to the market, and it is not restricted by the fit to the initial term structure. 

We consider here four models for the evolution of the Vasi\v cek and CIR parameters: a reference model where the parameters are constant, two toy models with constant mean reversion and time-varying volatility, and one fully stochastic model which is calibrated to the market. 
In the Vasi\v cek case, the four models are: 
\begin{enumerate}[(V1)]
\item \label{item:num:va1} 
a Hull-White extended Vasi\v cek model with constant coefficients $\beta_y=\beta_0$ and $a_y=a_0$;
\item \label{item:num:va2} 
a Vasi\v cek CRC model with constant mean reversion coefficient $\beta_y=\beta_0$ and deterministically increasing volatility given by $a_y=a_0y$, $Y(t)=1+3t$;
\item \label{item:num:va3} 
a Vasi\v cek CRC model with constant mean reversion coefficient $\beta_y=\beta_0$ and stochastically increasing volatility given by $a_y=y$, $dY(t)=\left(4a_0-Y(t)\right)dt+\sigma\sqrt{Y(t)}d\widetilde W(t)$, $Y(0)=a_0$, $\sigma=3\cdot10^{-3}$; and
\item \label{item:num:va4} 
a Vasi\v cek CRC model with stochastic coefficients given by geometric Brownian motion: $\beta_y=y_1$, $a_y=y_2$, $dY_1=\mu_1Y_1(t) + \sigma_1Y_1(t)d\widetilde W_1(t)$, $dY_2=\mu_2Y_2(t) + \sigma_2Y_2(t)d\widetilde W_2(t)$. The coefficients $\mu_{1,2}$ and $\sigma_{1,2}$ are deterministic and calibrated to $M$ observations as described in \autoref{sec:num:ca}.
\end{enumerate}

Note that models (V\ref{item:num:va2}) (V\ref{item:num:va3}) both describe scenarios where the standard deviation of the noise in the short rate process doubles over the period of a year. Indeed, in (V\ref{item:num:va2}) the volatility coefficient $a$ increases to four times its initial value, and in (V\ref{item:num:va3}) the level of mean reversion of $a$ increases to four times its initial value. 

Models (V\ref{item:num:va2}) and (V\ref{item:num:va3}) are special cases of \autoref{sec:va:example}. In (V\ref{item:num:va2}), which corresponds to $m=3a_0$, $\mu=0$, and $\sigma=0$, there is an explicit formula for the moment generating function of the short rate process. By equations \eqref{equ:va:ex_mean} and \eqref{equ:va:ex_var}, it is given by
\begin{equation}\label{equ:num_mgf}
\mathbb E\left[e^{\eta r(t)}\right]=e^{e^{\beta_0t}\eta r_0+\eta\int_0^t e^{\beta_0(t-s)} \big(h'_0(s)-\beta h_0(s)+\xi(s)\big)ds+\frac{\eta^2}{2}\xi(t)},
\qquad \eta \in \mathbb R, t\in\mathbb R_+,
\end{equation}
where 
\[
\xi(t)=a_0\frac{e^{2 \beta_0  t}-1}{2 \beta_0 }+\frac{3a_0 \left(e^{2 \beta_0  t}-2\beta_0  t-1\right)}{4 \beta_0 ^2}.
\]
Model (V\ref{item:num:va3}) corresponds to \autoref{sec:va:example} with $m=4a_0$, $\mu=-1$ and $\sigma=3\cdot10^{-3}$. 

The semigroup approach of \autoref{thm:va:convergence} implies convergence of the simulation scheme for (V\ref{item:num:va2}) and for (V\ref{item:num:va4}) with $Y_1$ replaced by $Y_1+\epsilon$ for some $\epsilon>0$. In our numerical simulations, we observe convergence for all models (see \autoref{sec:num:con}), including the following CIR counterparts of the models just described: 
\begin{enumerate}[({CIR}1)]
\item \label{item:num:cir1}
a Hull-White extended Cox-Ingersoll-Ross model with constant coefficients $\beta_y=\beta_0$ and $\alpha_y=\alpha_0$;
\item \label{item:num:cir2}
a Cox-Ingersoll-Ross CRC model with constant mean reversion coefficient $\beta_y=\beta_0$ and deterministically increasing volatility given by $\alpha_y=\alpha_0y$, $Y(t)=1+3t$;
\item \label{item:num:cir3} 
a Cox-Ingersoll-Ross CRC model with constant mean reversion coefficient $\beta_y=\beta_0$ and stochastically increasing volatility given by $\alpha_y=y$, $dY(t)=\left(4\alpha_0-Y(t)\right)dt+\sigma\sqrt{Y(t)}d\widetilde W(t)$, $Y(0)=\alpha_0$, $\sigma=5\cdot10^{-2}$; and
\item \label{item:num:cir4}
a Cox-Ingersoll-Ross CRC model with stochastic coefficients given by geometric Brownian motion: $\beta_y=y_1$, $\alpha_y=y_2$, $dY_1=\mu_1Y_1(t) + \sigma_1Y_1(t)d\widetilde W_1(t)$, $dY_2=\mu_2Y_2(t) + \sigma_2Y_2(t)d\widetilde W_2(t)$. The coefficients $\mu_{1,2}$ and $\sigma_{1,2}$ are deterministic and calibrated to $M$ observations as described in \autoref{sec:num:ca}.
\end{enumerate}

\subsection{Implementation}\label{sec:num:imp}

Simulating CRC models requires iterative sampling of the underlying affine short rate process and recalibrating Hull-White extensions. This is explained in detail in \Autoref{alg:va:crc} for the Vasi\v cek model and in the online appendix for the CIR model. The algorithms can be parallelised on a path-by-path level. Parallelisation on lower levels does not pay off because the individual time steps are dependent on each other. In our implementation in R, generating $10^5$ paths with 240 time steps on a cluster of 48 times 2.2GHz processors takes around 10 minutes in the Vasi\v cek case and 20 minutes in the CIR case. 

\subsection{Convergence analysis}\label{sec:num:con}

\autoref{thm:va:convergence} predicts first order convergence of the simulation scheme under suitable assumptions on the model. The objective of this section is to demonstrate this convergence in numerical examples for the models described in \autoref{sec:num:params}. 

The simulation is started with the initial forward rate curve $h_0$ of September 2, 2013. The parameters $\beta_0,a_0,\alpha_0,\theta_0$ are calibrated as in \autoref{sec:num:ca} with a time window of $M=100$ observations. Then the moment generating function of the short rate $r(1)$ after one year is calculated by Monte Carlo simulation with $10^5$ sample paths. In the model (V\ref{item:num:va2}), the exact value of the moment generating function is known and given by \autoref{equ:num_mgf}. In the other models, a reference value is calculated by extrapolation from the Monte Carlo estimates. The resulting errors are shown in \Autoref{fig:num_convergence1,fig:num_convergence2,fig:num_convergence3}. As expected from \Autoref{thm:va:convergence} we observe first order convergence for models (V\ref{item:num:va2}) and (V\ref{item:num:va4}). The errors in \Autoref{fig:num_convergence3} indicate convergence also in the CIR counterparts. 

\begin{figure}
\centering 
\includegraphics[scale=0.4]{./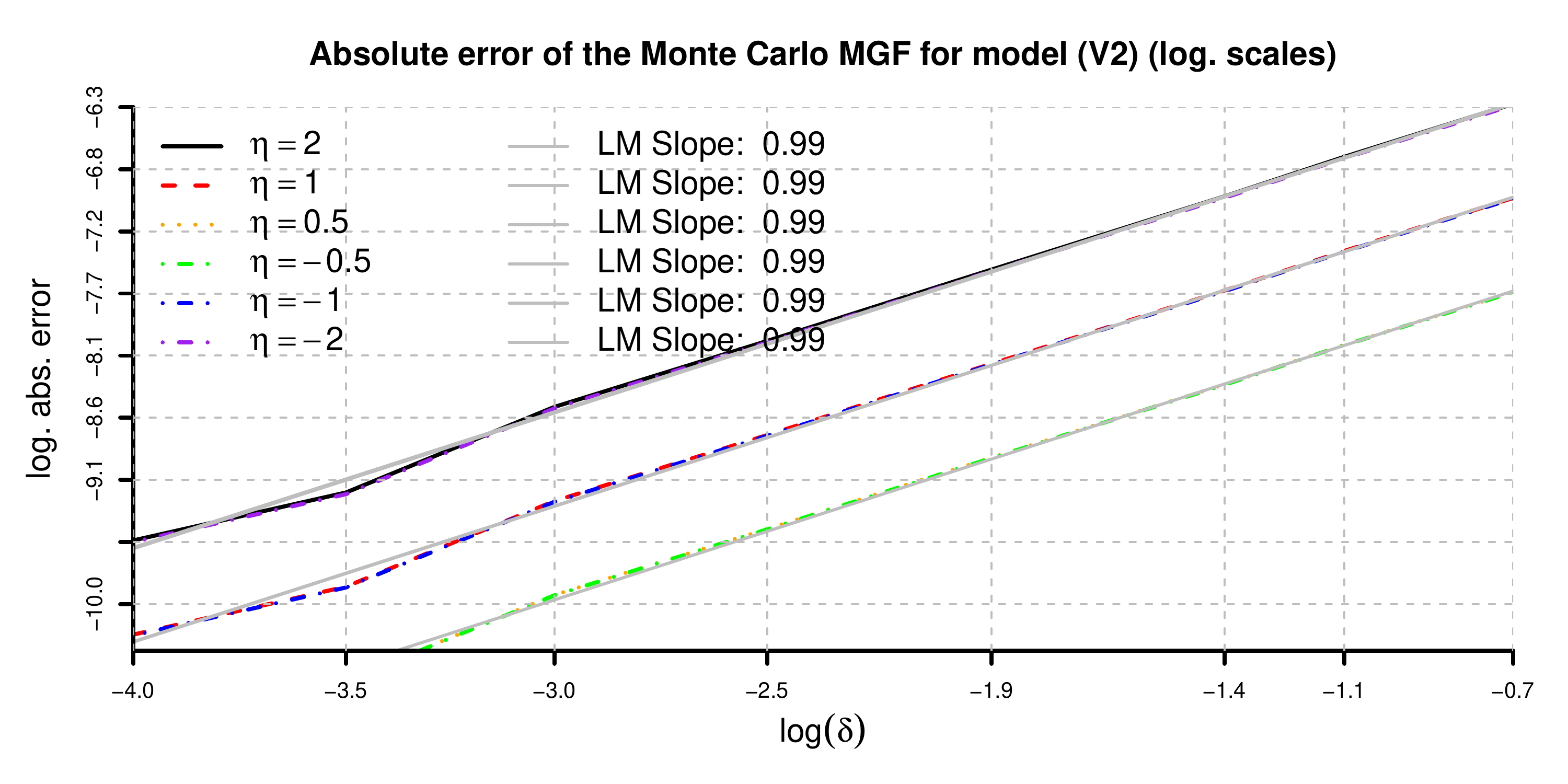} 
\caption{Absolute error (log-log plot) of the Monte Carlo estimate of the moment generating function $E\left[e^{\eta r(1)}\right]$ for model (V\ref{item:num:va2}). This is calculated as the absolute difference between the estimate and \eqref{equ:num_mgf} for different values of $\delta$. We simulate $10^5$ paths for the estimation.}
\label{fig:num_convergence1}
\end{figure}

\begin{figure}
\centering 
\includegraphics[scale=0.4]{./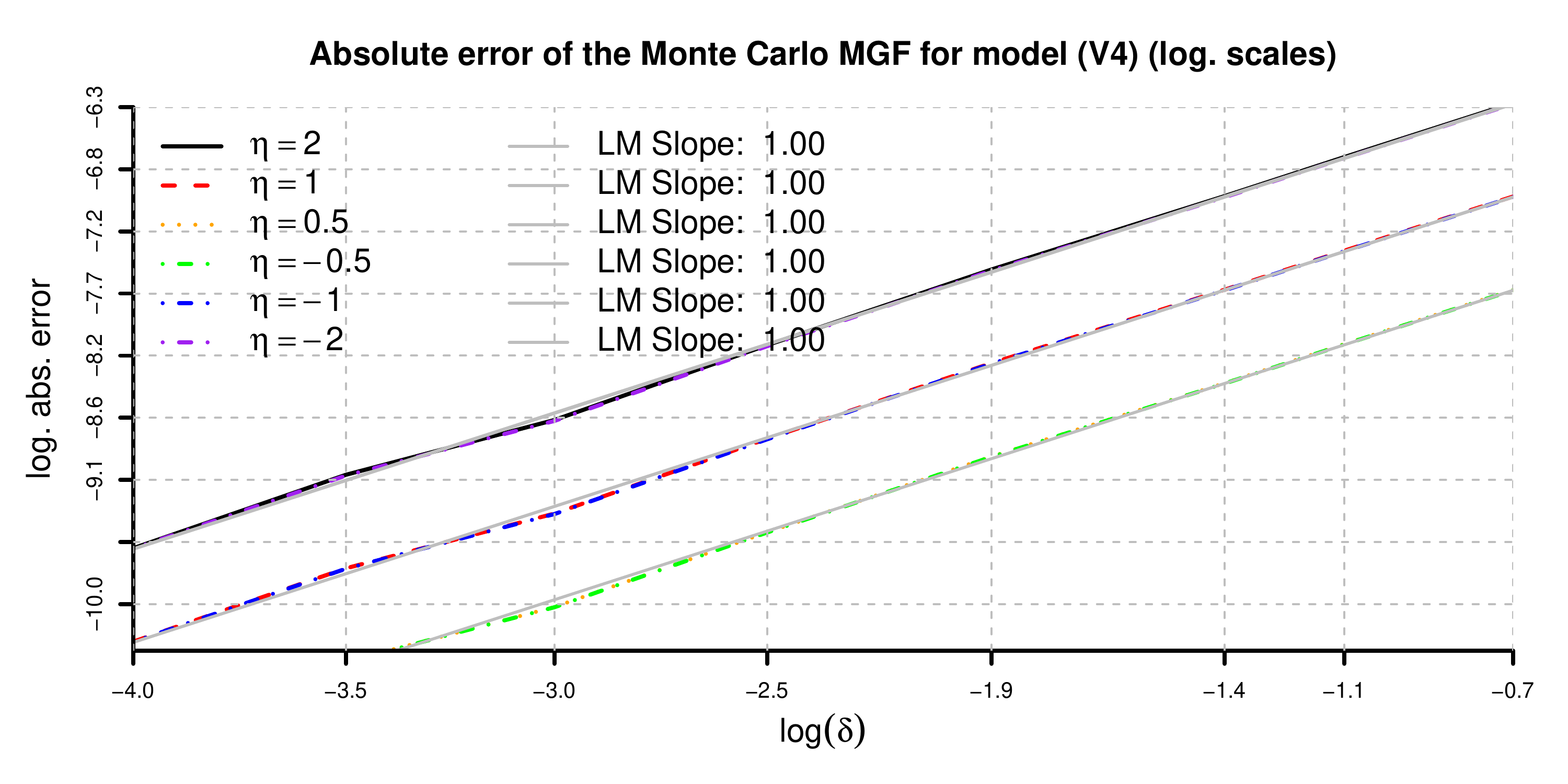} 
\caption{Absolute error (log-log plot) of the Monte Carlo estimate of the moment generating function $E\left[e^{\eta r(1)}\right]$ for model (V\ref{item:num:va4}) defined in \autoref{sec:num:params}. The true values are estimated by the intercept of the linear extrapolation of the Monte Carlo estimates. The errors are calculated as the absolute difference between the intercept and the estimates. $10^5$ paths were used in the simulation.}
\label{fig:num_convergence2}
\end{figure}

\begin{figure}
\centering 
\includegraphics[scale=0.4]{./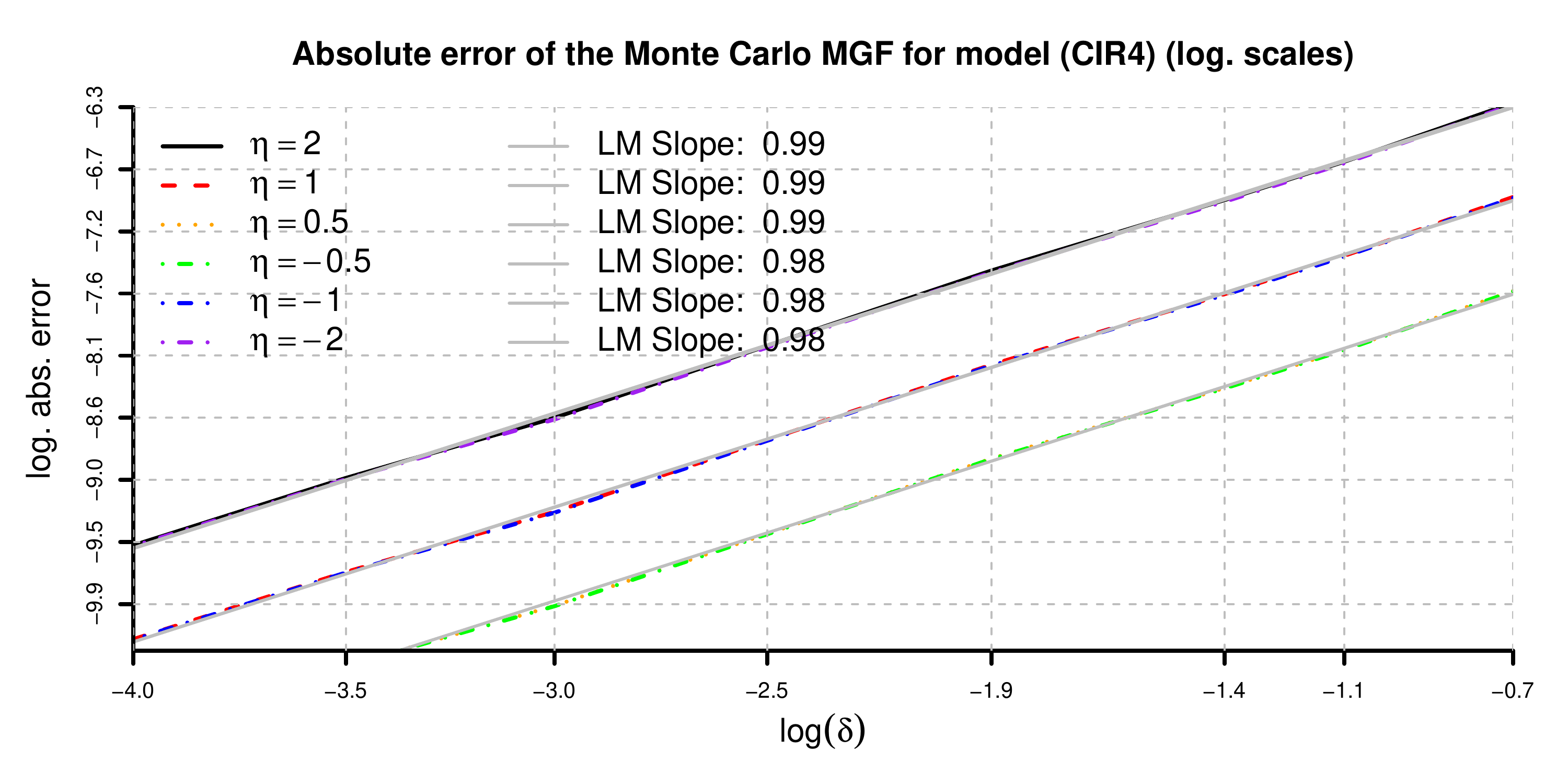} 
\caption{Absolute error (log-log plot) of the Monte Carlo estimate of the moment generating function $E\left[e^{\eta r(1)}\right]$ for model (CIR\ref{item:num:cir4}). The true values are estimated by the intercept of the linear extrapolation of the Monte Carlo estimates. The errors are calculated as the absolute difference between the intercept and the estimates for different values of $\delta$. $10^5$ paths were used in the simulation.}
\label{fig:num_convergence3}
\end{figure}

\subsection{Distributional properties}

Making parameters in the CIR and Vasi\v cek model stochastic in the sense of CRC models has considerable impact on the distribution of short rates and prices. As an example, statistics of the short rate $r(1)$ obtained by simulation are presented in \autoref{tab:num_distribution}. In the models (V\ref{item:num:va1}) and (V\ref{item:num:va2}) with deterministic parameters, the short rate process is Gaussian (see \autoref{sec:va:example}). As expected, the simulations show skewness and excess kurtosis values close to zero. In contrast, leptokurtosis appears in the models (V\ref{item:num:va3}) and (V\ref{item:num:va4}) with stochastic parameters. In the CIR examples, the distribution of $r(1)$ is also affected considerably by the stochasticity of the parameters.

\begin{table}
\centering
\includegraphics[scale=0.5]{./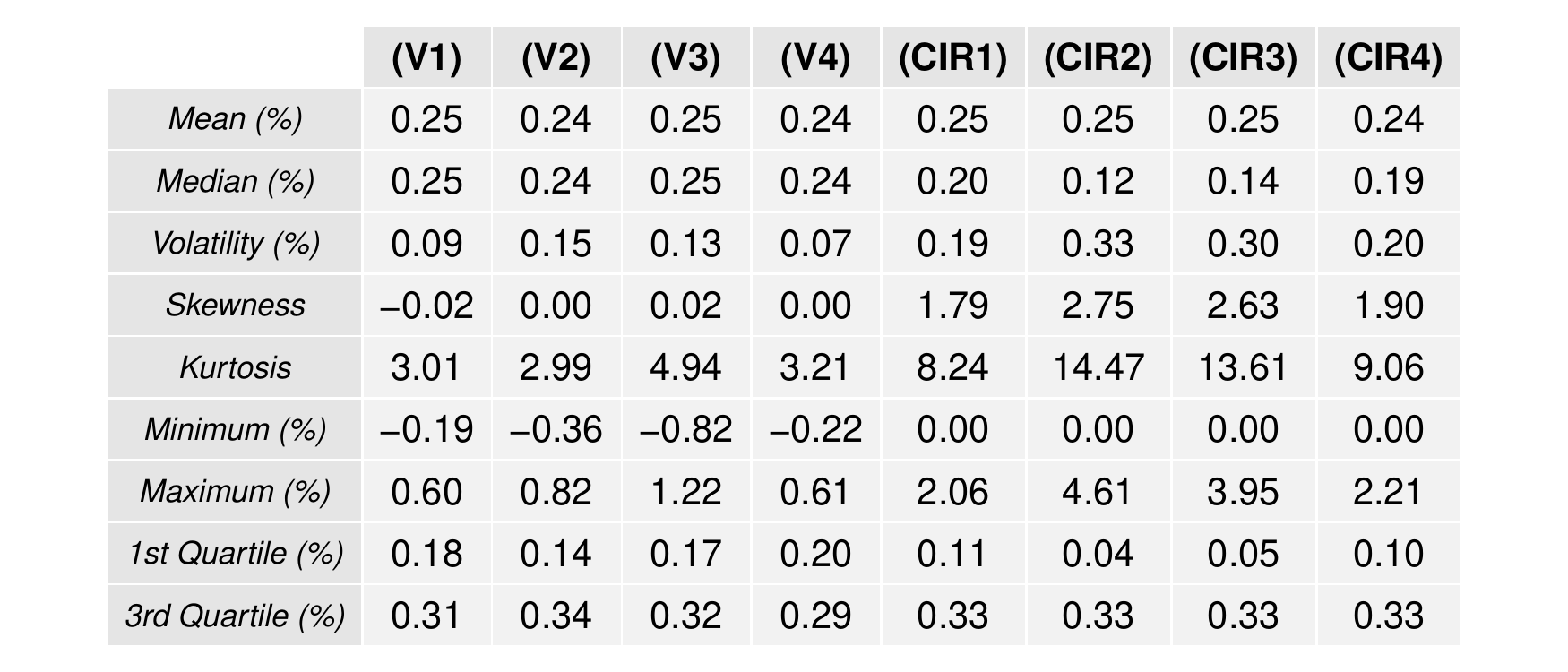}
\caption{Statistics of the short rate $r(1)$ at time 1 in the models defined in \autoref{sec:num:params} obtained by Monte-Carlo simulations. $10^5$ paths and a step size of $\delta=0.02$ were used in the simulation.}
\label{tab:num_distribution}
\end{table}

\subsection{Covariation of yields}

A further example where empirical differences between CRC and non-CRC models become apparent is the covariation matrix of yields. By the Dubins-Schwarz theorem, the covariation process determines the distribution of yields and therefore the prices of bond options. On short time intervals, the covariations are closely related to covariances, which are a key determinant of the prices of call and put options.

In the Hull-White Vasi\v cek and CIR models without the CRC extension, the covariation matrix of yields with different times to maturity has rank 1. This is in stark contrast to the covariations observed in the market. For instance, the $33\times 33$ covariation matrix of market yields with times to maturity ranging from 3 months to 30 years typically has rank between 5 and 9, as shown in \autoref{fig:num_market_covariation}. The ranks produced by the CRC models (V\ref{item:num:va4}) and (CIR\ref{item:num:cir4}) typically lie between 3 and 5. Thus, they are higher than those of the non-CRC models, but not as high as those of the market.

Numerically, the covariation matrix is calculated as in \eqref{equ:va:covariation} and the ranks are defined as the number of singular values differing significantly from zero. A comparison with \Autoref{fig:num_sigma_sigma,fig:num_beta_sigma} shows that higher ranks are also related to higher volatility of the parameters.

\begin{figure}
\centering
\includegraphics[scale=0.4]{./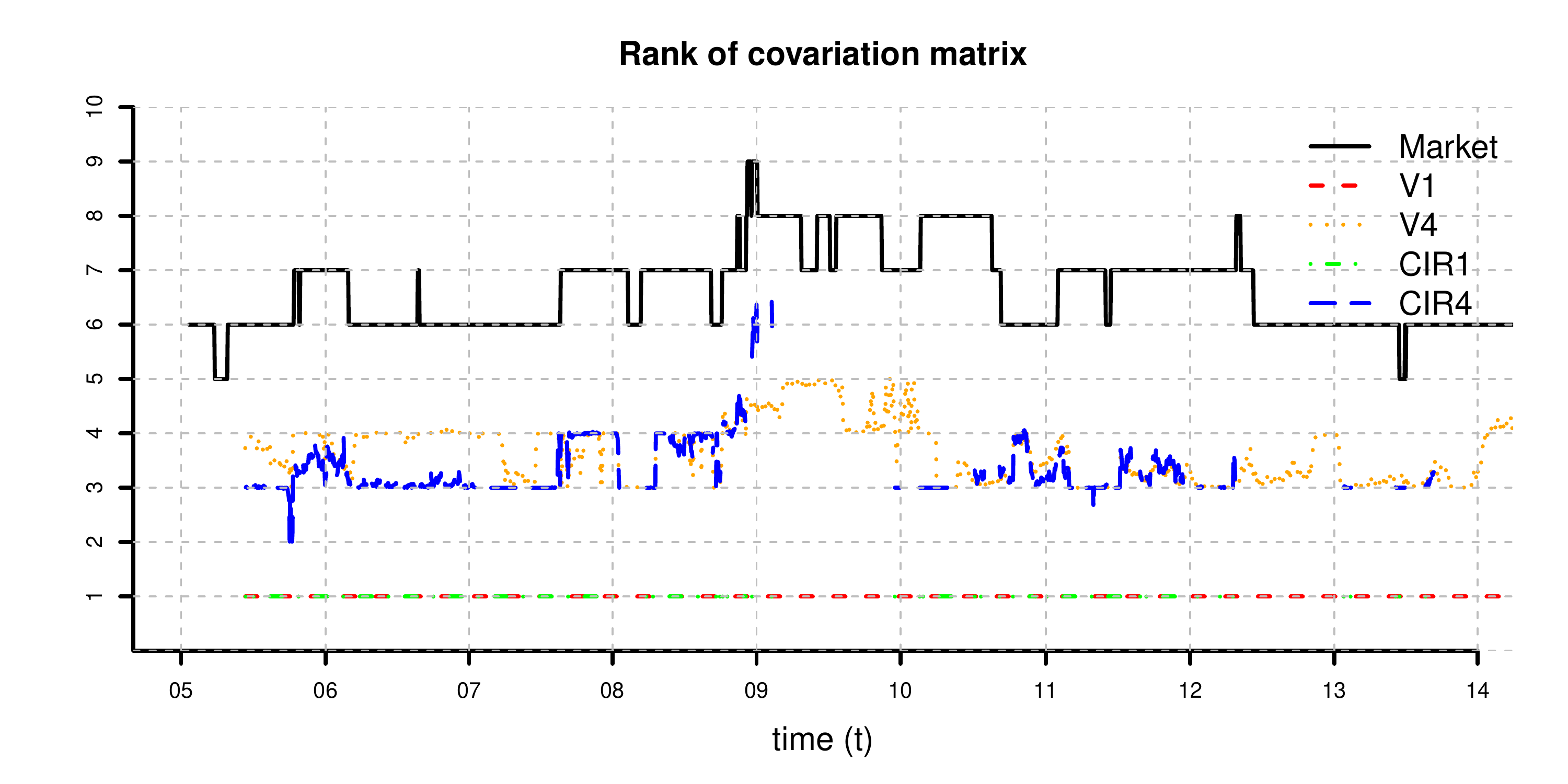}
\caption{Historical rank of the empirical covariation matrix \eqref{equ:va:covariation} based on time windows of $M=100$ market yields with 33 different times to maturity $\tau_i\in\{0.25,0.5,0.75,1,2,3,\ldots,30\}$. For comparison, the plot also features the average ranks obtained in simulations of the Hull-White extended affine models (V\ref{item:num:va1}) and (CIR\ref{item:num:cir1}) as well as their CRC counterparts (V\ref{item:num:va4}) and (CIR\ref{item:num:cir4}). These models were calibrated using time windows of $M=100$ observations. The missing values in (CIR\ref{item:num:cir1}) and (CIR\ref{item:num:cir4}) are due to non-admissible negative levels of mean reversion at these dates, see \autoref{sec:num:negtheta} and \autoref{fig:num_hwe5}. The averages are taken over $10^3$ simulated paths. In the numerical computation of the rank, eigenvalues which are $10^{-6}$ times smaller than the largest eigenvalue are rounded down to zero.}
\label{fig:num_market_covariation}
\end{figure}

\begin{figure}
\centering
\includegraphics[scale=0.4]{./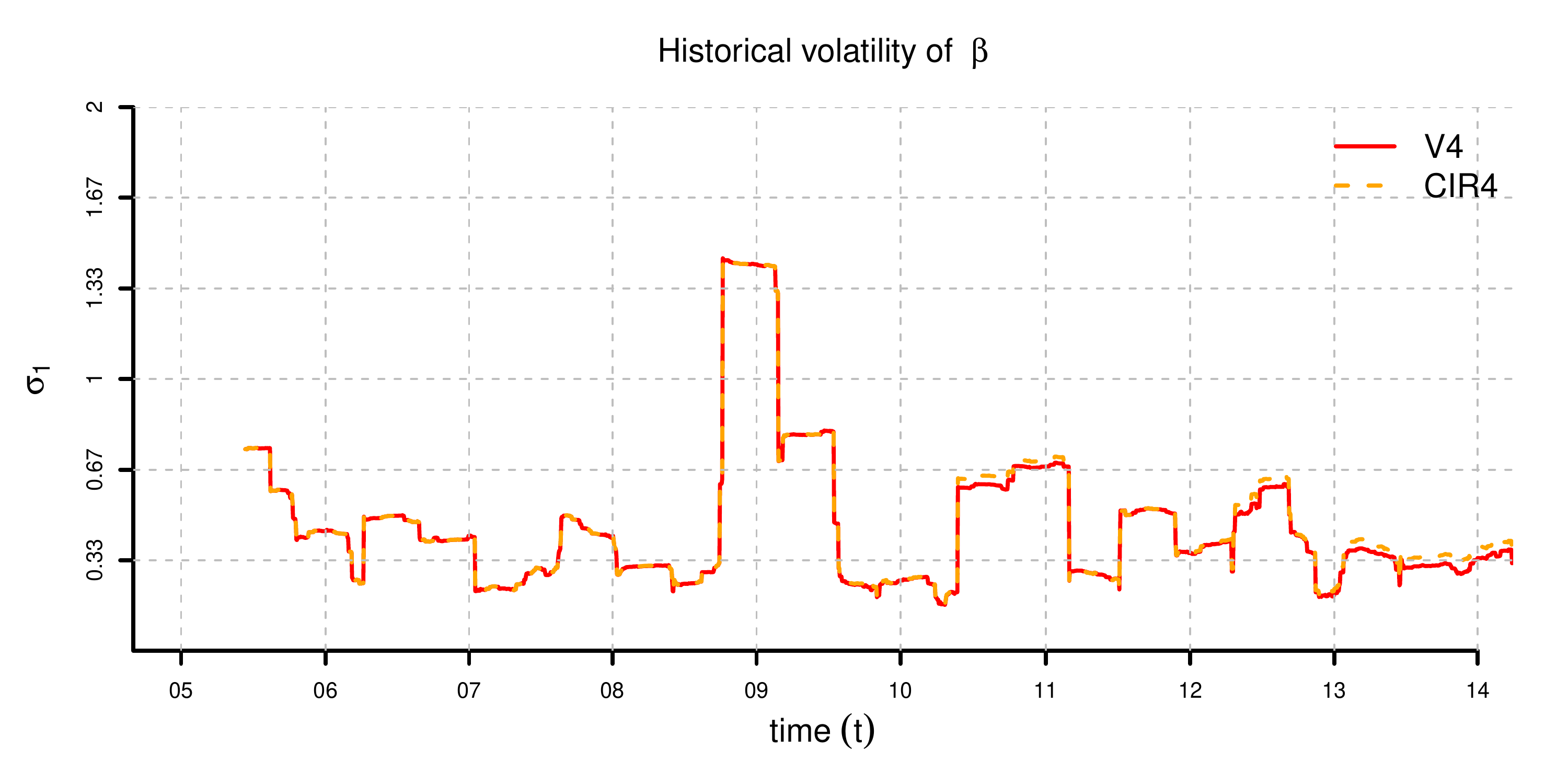}
\caption{Historical values of the parameter $\sigma_1$ in the models (V\ref{item:num:va4}) and (CIR\ref{item:num:va4}) defined in \autoref{sec:num:params} estimated using time windows of $M=100$ observations.}
\label{fig:num_beta_sigma}
\end{figure}
\begin{figure}
\centering
\includegraphics[scale=0.4]{./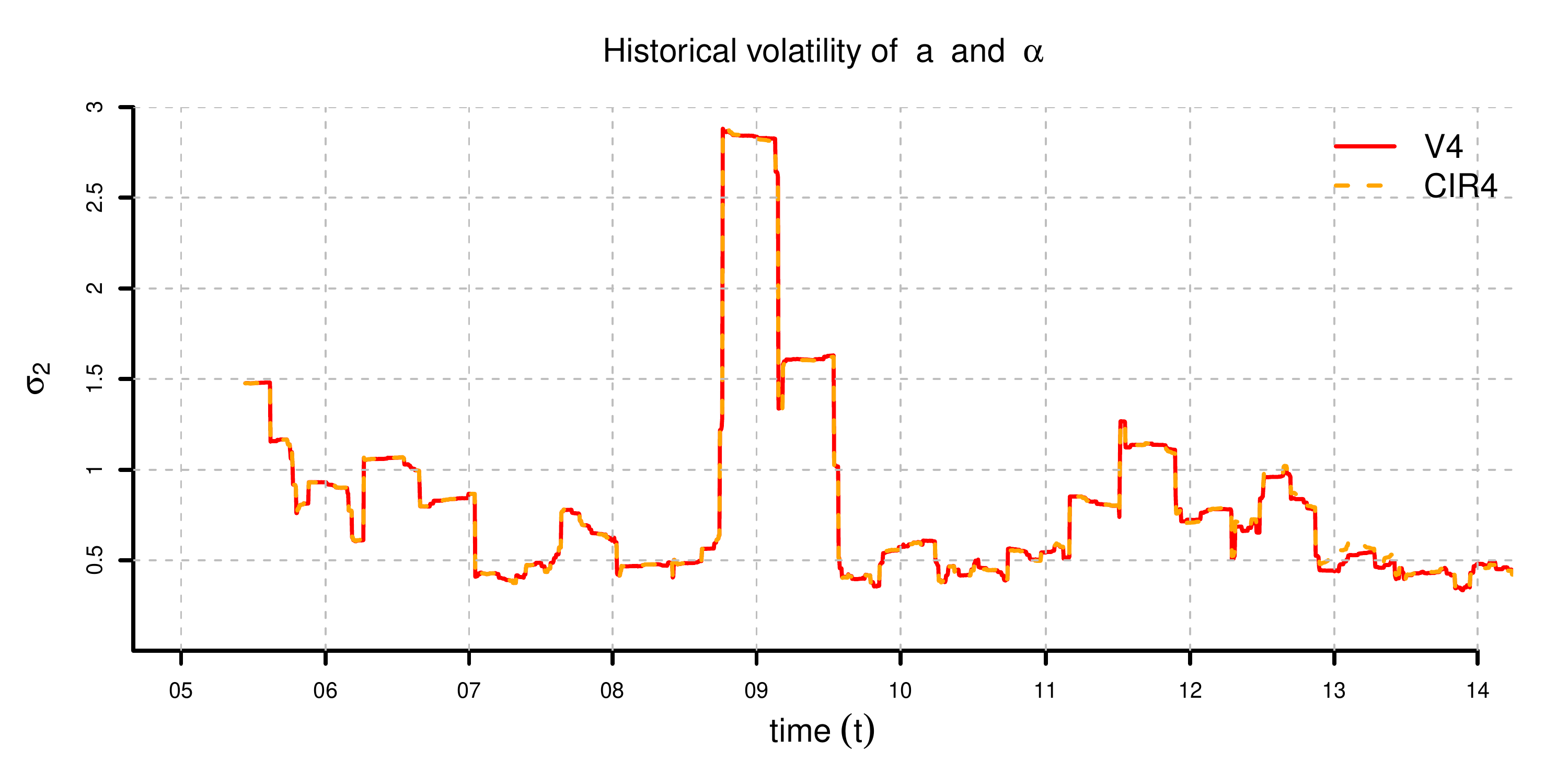}
\caption{Historical values of the parameter $\sigma_2$ in the models (V\ref{item:num:va4}) and (CIR\ref{item:num:va4}) defined in \autoref{sec:num:params} estimated using time windows of $M=100$ observations.}
\label{fig:num_sigma_sigma}
\end{figure}

\appendix

\section{Consistent recalibration of Cox-Ingersoll-Ross models}\label{sec:cir:app}

\subsection{Overview}

We describe CRC models based on CIR short rates, giving a detailed description of the simulation and calibration schemes. For comparison, we briefly digress to the CIR++ model and its CRC version. 

\subsection{Setup and notation}

We use the setup of \Autoref{sec:hw:setup,sec:crc:setup}, setting $\mathbb X=\mathbb R_+$, $\ell=0$, $\lambda=1$.  We do not specify the parameter space $\mathbb Y$, yet, but we assume that for each $(x,y)\in\mathbb X\times\mathbb Y$, the volatility and drift coefficients are given by $A_y(x)=\alpha_y x$ and $B_y(x)=\beta_y x$ for some $\alpha_y \in (0,\infty)$ and $\beta_y \in (-\infty,0)$. For simplicity, we again choose equidistant grids of times $t_n=n\delta$ and times to maturity $\tau_n=n\delta$, for all $n\in\mathbb N_0$, where $\delta$ is a positive constant. 

\subsection{Hull-White extended Cox-Ingersoll-Ross models}\label{sec:cir:hw:app}

For each fixed set of parameters $(y,\theta)\in\mathbb Y\times C(\mathbb R_+)$, the SDE for the short rate process is
\begin{equation}\label{equ:cir:sde:app}
dr(t)=\big(\theta(t)+\beta_y r(t)\big)dt+\sqrt{\alpha_y r(t)}dW(t),
\end{equation}
where $W$ is one-dimensional $(\mathcal F(t))_{t\geq0}$-Brownian motion. 
Thus, $(y,\theta)$ satisfies \autoref{ass:hw:x} if and only if $\theta(t)\geq 0$, for all $t\in\mathbb R_+$. 
The functional characteristics $(F,R)$ from \autoref{sec:hw:riccati} are 
\begin{equation*}
F_y(u)=0, 
\qquad
R_y(u)=\frac{\alpha_y}{2}  u^2+\beta_y u, 
\qquad
\text{for all } u \in \mathbb R.
\end{equation*}
Letting $\gamma_y=\sqrt{\beta_y^2+2\alpha_y}$, the solutions of the corresponding Riccati equations are
\begin{equation}\label{equ:cir:phipsi:app}
\Phi_y(t)=0,
\qquad
\Psi_y(t)=\frac{-2\left(e^{\gamma_y t}-1\right)}{\gamma_y\left(e^{\gamma_y t}+1\right)-\beta_y\left(e^{\gamma_y t}-1\right)},
\qquad\text{for all } t\geq 0.
\end{equation}
Thus, by \autoref{thm:hw:price}, the forward rates in the Hull-White extended CIR model \eqref{equ:cir:sde:app} with fixed parameters $(y,\theta)$ are given by $h(t)=\mathcal H_y(\mathcal S(t)\theta,r(t))$, where 
\begin{equation*}
\mathcal H_y(\theta,x)(\tau)
=-\int_0^\tau \theta(s) \Psi_y'(\tau-s) ds- \Psi_y'(\tau) x,
\qquad
\text{for all } (x,\tau)\in\mathbb R\times\mathbb R_+.
\end{equation*}
In contrast to the Vasi\v cek model, the integral kernel $\Psi_y'$ is more complicated, 
\begin{align*}
\Psi_y'(\tau)&=\frac{-2\gamma_y e^{\gamma_y\tau}}{\gamma_y\left(e^{\gamma_y\tau}+1\right)-\beta_y\left(e^{\gamma_y\tau}-1\right)}+\frac{2\left(e^{\gamma_y\tau}-1\right)\gamma_y e^{\gamma_y\tau}\left(\gamma_y-\beta_y\right)}{\left(\gamma_y\left(e^{\gamma_y\tau}+1\right)-\beta_y\left(e^{\gamma_y\tau}-1\right)\right)^2},
\end{align*}
and there does not seem to be a closed-form expression for $\theta=\mathcal C_y(h,x)$. Instead, it must be calculated numerically as described in \autoref{sec:hw:numerical_volterra}.

The HJM drift and volatility are
\begin{align}\label{equ:cir:hjm_drift_vola:app}
\mu^{\mathrm{HJM}}_y(x)(\tau)&=\Psi_y'(\tau) \Psi_y(\tau) \alpha_y x ,
&
\sigma^{\mathrm{HJM}}_y(x)(\tau)&=-\sqrt{\alpha_y x} \Psi'_y(\tau),
\end{align}
and the HJM equation for forward rates reads as
\begin{align}\label{equ:cir:hjm:app}
dh(t)&=\left(\mathcal Ah(t)+\mu_{y}^{\mathrm{HJM}}\big(h(t)(0)\big)\right)dt+\sigma_{y}^{\mathrm{HJM}}\big(h(t)(0)\big)dW(t).
\end{align}

\subsection{Cox-Ingersoll-Ross CRC models}\label{sec:cir:crc:app}

As the factor process is a function of the forward rate process (i.e., $X(t)=r(t)=h(t,0)$), the corresponding CRC models can be characterised by the process $(h,Y)$ instead of $(h,X,Y)$. Thus, in accordance with \autoref{thm:crc:hjm} and \autoref{def:crc:continuous}, a process $(h,Y)$ with values in $\mathbb H\times \mathbb Y$ may be called a CRC model if $h$ satisfies the SPDE 
\begin{align}\label{equ:cir:crc:app}
dh(t)&=\left(\mathcal Ah(t)+\mu_{Y(t)}^{\mathrm{HJM}}\big(h(t)(0)\big)\right)dt+\sigma_{Y(t)}^{\mathrm{HJM}}\big(h(t)(0)\big)dW(t),
\end{align}
with drift $\mu^{\mathrm{HJM}}_{Y(t)}$ and volatility $\sigma^{\mathrm{HJM}}_{Y(t)}$ as defined in \autoref{equ:cir:hjm_drift_vola:app}. To ensure that the drift and volatility are well-defined, for all $t \in \mathbb R_+$, it must be assumed that $h(t)(0)\geq 0$ holds, for all $t\in\mathbb R_+$. The maximally admissible set $\mathcal I$ in the CIR model is exactly characterised by this condition. 

\subsection{Simulation of Cox-Ingersoll-Ross CRC models}\label{sec:cir:sim:app}

The CRC model is simulated as described in \autoref{alg:crc}. Discretisation in time and time to maturity is done as for the Vasi\v cek model. However, in contrast to the Vasi\v cek model, simulating the short rate process and calibrating Hull-White extensions is done by numerical approximations of order two. 
The resulting algorithm is presented below.

\begin{algorithm}[Simulation]\label{alg:cir:crc:app}
Given an initial curve of forward rates $h(0)$ and the parameter process $Y$, execute iteratively the following steps, for each $n\in\mathbb N_0$:

\begin{enumerate}[(i)]
\item The values of $\theta(t_n)=\mathcal C_{Y(t_n)}(h(t_n),r(t_n))$ at times to maturity $0$ and $\delta$ are calculated by applying \autoref{lem:hw:numerical_volterra} to $g=-h(t_n)-\Psi_{Y(t_n)}'r(t_n)$:
\begin{align*}
\theta(t_n)(0)&=
\mathcal Ah(t_n)(0)-\beta_{Y(t_n)}h(t_i)(0),
\\
\theta(t_n)(\delta)&=
\frac{2}{\delta}\big(h(t_n)(\delta)+\Psi_{Y(t_n)}'(\delta)\, r(t_n)\big)+\Psi_{Y(t_n)}'(\delta)\,\theta(t_n)(0).
\end{align*}

\item  A random draw $r(t_{n+1})=X^{t_n,X(t_n)}_{Y(t_n),\theta(t_n)}(t_{n+1})$ of the CIR process with parameter $Y(t_n)$ and time-dependent drift $\theta(t_n)$ is created using the second-order scheme of \cite{alfonsi2010high}. 

\item $\big(h(t_{n+1}),\mathcal Ah(t_{n+1})\big)$ is calculated from $\big(h(t_n),\mathcal Ah(t_n),r(t_{n+1})\big)$ using \autoref{lem:crc:eff} with integrals approximated by the trapezoid rule:
\begin{align*}
h(t_{n+1})(\tau)&= h(t_n)(\delta+\tau)
+ \Psi_{Y(t_n)}'(\delta+\tau)\,r(t_n)
-\Psi_{Y(t_n)}'(\tau)\,r(t_{n+1})
\\&\qquad
+\frac{\delta}{2} \left(\theta(t_n)(0)\Psi_{Y(t_n)}'(\delta+\tau) 
+ \theta(t_n)(\delta)\Psi_{Y(t_n)}'(\tau) \right),
\\
\mathcal Ah(t_{n+1})(\tau)&= \mathcal Ah(t_n)(\delta+\tau)
+ \Psi_{Y(t_n)}''(\delta+\tau)\,r(t_n)
-\Psi_{Y(t_n)}''(\tau)\,r(t_{n+1})
\\&\qquad
+\frac{\delta}{2} \left(\theta(t_n)(0)\Psi_{Y(t_n)}''(\delta+\tau) 
+ \theta(t_n)(\delta)\Psi_{Y(t_n)}''(\tau) \right).
\end{align*}
Here, $h(t_{n+1})$ must be calculated at all times to maturity $\tau_i$, whereas $\mathcal Ah(t_{n+1})$ is needed only at $\tau_0=0$.
\end{enumerate}
\end{algorithm}

\subsection{Calibration of Cox-Ingersoll-Ross CRC models}\label{sec:cir:ca:app}

We proceed as in the Vasi\v cek case described in \autoref{sec:va:ca}, with \autoref{equ:va:covariation} replaced by
\begin{align}\label{equ:cir:covariation:app}
\frac{[\widehat r(\cdot,\tau_i),\widehat r(\cdot,\tau_j)](t_n)-[\widehat r(\cdot,\tau_i),\widehat r(\cdot,\tau_j)](t_{n-M})}
{\delta \sum_{m=n-M+1}^n \widehat r(t_m,\tau_k)}
\approx
\alpha \frac{\Psi(\tau_i)}{\tau_i}\frac{\Psi(\tau_j)}{\tau_j}.
\end{align}
The function $\Psi$ depends on $\alpha,\beta$ as shown in \autoref{equ:cir:phipsi:app}. 

\subsection{CIR++ models in the CRC framework}\label{sec:cir:cirpp:app}

In the CIR++ model \cite[Section 3.9]{brigo2007}, also known as deterministic shift-extended CIR model, the short rate process is defined by $r(t)=X(t)+\theta(t)$, where $X$ is a CIR process and $\theta$ is a deterministic function of time. Note that this is a different time-inhomogeneity than the one described in \autoref{sec:cir:hw:app}. In particular, the factor process $X$ is time-homogeneous and does not coincide with the short rate. 

Forward rate curves are given by
\[
h(t)=\mathcal S(t)\theta-b_y\Psi_y-\Psi'_yX(t),
\]
where $\Psi_y$ is the same as in the CIR case, see \autoref{equ:cir:phipsi:app}. Given the parameter vector $y$ and the factor $X$, this equation allows to calibrate $\theta$ to a given yield curve without having to invert a Volterra integral operator. The HJM equation of the CIR++ model is
\begin{equation}\label{equ:cir:hjmpp:app}\begin{aligned}
dh(t)&=\left(\mathcal Ah(t)+\mu_{y}^{\mathrm{HJM}}\big(X(t)\big)\right)dt+\sigma_{y}^{\mathrm{HJM}}\big(X(t)\big)dW(t), 
\\
dX(t)&=\left(b_y+\beta_y X(t)\right)dt + \sqrt{\alpha_y X(t)}dW(t),
\end{aligned}\end{equation}
where $\mu_{y}^{\mathrm{HJM}}$ and $\sigma_{y}^{\mathrm{HJM}}$ are the same as in the CIR case, see \autoref{equ:cir:hjm_drift_vola:app}. 

The CRC extension of the CIR++ model is obtained by replacing the constant parameter vector $y$ in \eqref{equ:cir:hjmpp:app} by a stochastic process $(Y(t))_{t\geq 0}$. The resulting equation is easier to handle than its CIR counterpart for two reasons. First, there are no boundary conditions on $h$. Indeed, $\theta$ is allowed to assume negative values and can be calibrated to any forward rate curve. Thus, \autoref{equ:cir:hjmpp:app} is defined on the entire space $\mathbb H\times\mathbb R_+$. Second, the SDE for $X$ does not depend on $h$. Therefore, one can first solve for $X$, and then construct a mild solution $h$ by stochastic convolution \cite[Section 6.1]{daPrato2014se}:
\begin{align*}
h(t) = \mathcal S(t)h(0) + \int_0^t \mathcal S_{t-s} \mu_{Y(s)}^{\mathrm{HJM}}\big(X(s)\big)ds
+\int_0^t \mathcal S_{t-s} \sigma_{Y(s)}^{\mathrm{HJM}}\big(X(s)\big) dW(s).
\end{align*}
The SDE for $X$ is finite-dimensional. Therefore, existence and uniqueness of $X$ can be shown by standard methods. For example, assuming that $Y$ is independent of $W$, one can condition on $Y$ and use results on time-inhomogeneous affine processes \cite{filipovic2005time} to construct $X$. 

Simulation of the CRC model is analogue to \Autoref{alg:crc}. The recalibration step is easier because no Volterra equation is involved. 

A disadvantage of the model is the presence of the hidden factor $X$. In contrast to the CIR version, $X$ is not a function of the forward rate curve and cannot be directly observed. This is a challenge for calibration. We suggest an analogue approach to \Autoref{sec:crc:calibration}. First, $\beta_{Y(t)}$, $\sigma_{Y(t)}$, and $X(t)$ can be identified from the instantaneous covariation 
\begin{equation*}
d\left[r(\cdot,\tau_i),r(\cdot,\tau_j)\right](t)=\alpha_{Y(t)} \frac{\Psi_{Y(t)}(\tau_i)}{\tau_i} \frac{\Psi_{Y(t)}(\tau_j)}{\tau_j} X(t)dt,
\end{equation*}
of yields with times to maturity $\tau_i, \tau_j$. Subsequently, $b_{Y(t)}$ can be calibrated by least squares to the prevailing yield curve. Note that in this approach, $X(t)$ is identified from the yield curve dynamics instead of extracted from the prevailing yield curve as in the Vasi\v cek and CIR cases. For this reason the calibration is expected to be numerically more difficult.

\printbibliography

\end{document}